\newtheorem{theorem}{Theorem}
\newtheorem{lemma}{Lemma}
\newtheorem{proposition}{Proposition}
\newtheorem{observation}{Observation}
\newtheorem{definition}{Definition}
	\theoremstyle{definition}
	\theoremstyle{remark}
\def\th@plain{%
  \thm@notefont{}
  \itshape 
}
\def\th@definition{%
  \thm@notefont{}
  \normalfont 
}
\newcommand{\f}{\phi}
\newcommand{\ff}{\psi}
\newcommand{\F}{\mathsf{F}}
\newcommand{\leftd}{\scalebox{0.9}{\normalfont\texttt{\textlangle}}}
\newcommand{\rightd}{\scalebox{0.9}{\normalfont\texttt{\textrangle}}}
\newcommand{\leftb}{\scalebox{0.9}{\normalfont\texttt{[}}}
\newcommand{\rightb}{\scalebox{0.9}{\normalfont\texttt{]}}}
\newcommand{\B}[1]{\leftb #1 \rightb}
\newcommand{\D}[1]{\leftd #1 \rightd}
\newcommand{\Skip}{{\normalfont\textbf{skip}}}
\newcommand{\Abort}{{\normalfont\textbf{abort}}}
\newcommand{\If}{{\normalfont\textbf{if }}}
\newcommand{\Then}{{\normalfont\textbf{ then }}}
\newcommand{\Else}{{\normalfont\textbf{ else }}}
\newcommand{\While}{{\normalfont\textbf{while }}}
\newcommand{\Do}{{\normalfont\textbf{ do }}}
\newcommand{\Cond}[3]{\If #1 \Then #2 \Else #3}
\newcommand{\Loop}[2]{\While #1 \Do #2}
\newcommand{\ant}{\bot}
\newcommand{\dom}{\top}
\newcommand{\K}{\mathsf{K}}
\newcommand{\KA}{\mathsf{KA}}
\newcommand{\KAT}{\mathsf{KAT}}
\newcommand{\dKA}{\mathsf{dKA}}
\newcommand{\aKA}{\mathsf{aKA}}
\newcommand{\dKAT}{\mathsf{dKAT}}
\newcommand{\aKAT}{\mathsf{aKAT}}
\newcommand{\PDL}{\mathsf{PDL}}
\newcommand{\RTA}{\mathsf{RTA}}
\newcommand{\EXPTIME}{\mathrm{EXPTIME}}
\newcommand{\PSPACE}{\mathrm{PSPACE}}
\newcommand{\tot}{\leftrightarrow}
\newcommand{\Eq}{\equiv}
\newcommand{\EQ}[1]{\overset{\scriptscriptstyle\ensuremath{#1}}{\equiv}}
\newcommand{\nequiv}{\not\equiv}
\newcommand{\xto}{\xrightarrow}
\newcommand{\xRightarrow}[2][]{\ext@arrow 0359\Rightarrowfill@{#1}{#2}}
\let\xTo\xRightarrow
\let\hat\widehat
\newcommand{\tin}{\triangleleft}
\newcommand{\bisim}{\mathrel{\underline{\tot}}}
\newcommand\dunderline[3][-2pt]{{%
  \sbox0{#3}%
  \ooalign{\copy0\cr\rule[\dimexpr#1-#2\relax]{\wd0}{#2}}}}
\newcommand{\uline}[1]{\dunderline[-2pt]{1pt}{#1}}
\let\emph\uline
\title{Kleene Algebra with Dynamic Tests: Completeness and Complexity}
\author{Igor Sedl\'ar}
\affil{The Czech Academy of Sciences, Institute of Computer Science 
	\authorcr Pod Vod\'arenskou v\v{e}\v{z}\'i 271/2, Prague, The Czech Republic}
\begin{document}
\maketitle

\begin{abstract}
We study versions of Kleene algebra with dynamic tests, that is, extensions of Kleene algebra with domain and antidomain operators. We show that Kleene algebras with tests and Propositional dynamic logic correspond to special cases of the dynamic test framework. In particular, we establish completeness results with respect to relational models and guarded-language models, and we show that two prominent classes of Kleene algebras with dynamic tests have an $\EXPTIME$-complete equational theory. 
\end{abstract}

\section{Introduction}

Kleene algebra with tests ($\KAT$) is a well-known algebraic framework for equational reasoning about program equivalence and correctness \cite{Kozen1997,KozenSmith1997}. $\KAT$ adds to Kleene algebra ($\KA$) \cite{Kozen1994} a Boolean algebra of tests together with a negation operator that applies only to tests. Various extensions of $\KA$ and $\KAT$ with more general (double) negation operators were considered in \cite{DesharnaisEtAl2004,DesharnaisEtAl2006,DesharnaisStruth2008,DesharnaisStruth2011}, for example. These operators include the \textit{antidomain} operator, related to dynamic negation of Dynamic Predicate Logic \cite{GroenendijkStokhof1991}, and the \textit{domain} operator, equivalent to double antidomain. Intuitively, both operators form tests out of programs --- or \textit{dynamic tests} ---  where the antidomain (domain) of a program is the test if the program diverges (halts).

In contrast to $\KA$ and $\KAT$, completeness and complexity results for their extensions with dynamic tests are largely missing.\footnote{However, already \cite{EhmEtAl2003} note that the equational theory of ``separable'' $\KAT$ with domain is $\EXPTIME$-complete. Our result is stronger since it does not hinge on separability.} We fill this gap by proving relational completeness and (parametrized) guarded-language completeness results for $\dKA$ ($\KA$ with domain), $\aKA$ ($\KA$ with both domain and antidomain), $\dKAT$ ($\KAT$ with domain) and $\aKAT$ ($\KAT$ with both domain and antidomain), and $\EXPTIME$-completeness results for $\aKAT$ and $\aKA$.

A central role in our results is played by $\aKAT$, which can be seen as an   algebraic counterpart of Propositional Dynamic Logic ($\PDL$) \cite{FischerLadner1979,HarelEtAl2000}, providing a one-sorted alternative to two-sorted Dynamic algebras \cite{Kozen1979,Pratt1991a,Pratt1979} or Test algebras \cite{TrnkovaReiterman1987}. $\KA$, $\KAT$ and their extensions with dynamic tests can all be seen as syntactic fragments of $\aKAT$. Hence, our completeness results for $\aKAT$ yield similar results for other extensions for free. The close relation of $\aKAT$ and $\PDL$ is used to establish the $\EXPTIME$-completeness result for the former. We show, in addition,  that $\aKA$ has a distinguished position in the family in the sense that all the other members, including $\aKAT$, embed into $\aKA$. This shows that $\aKA$ is $\EXPTIME$-complete as well.

This paper is an extended version of the conference paper \cite{Sedlar2023}. The main result of the conference paper is the $\EXPTIME$-completeness result for $\aKA$, established by constructing a mutual embedding between $\aKA$ and the equational theory of Relational test algebras ($\RTA$) studied by Marco Hollenberg \cite{Hollenberg1998}. Hollenberg's completeness result for $\RTA$ is a crucial lemma in \cite{Sedlar2023}. Another result reported in \cite{Sedlar2023} is that the equational theory of $\aKA$ coincides with the equational theory of $^{*}$-continuous $\aKA$. The proofs in \cite{Sedlar2023} also show that the equational theory of $\aKA$ coincides with the equational theory of relational $\aKA$, although this is not noted in the paper. The present paper bypasses $\RTA$ and applies Hollenberg's relational completeness proof strategy directly to $\aKAT$, obtaining a (parametrized) guarded-language completeness result for $\aKAT$ along the way. Hollenberg's strategy is also somewhat simplified here. The scope of the present paper is also more general than those of \cite{Hollenberg1998} and \cite{Sedlar2023}. We also strengthen the following relational completeness results: (i) Bochman and Gabbay \cite{BochmanGabbay2012} formulate a sequent system (generalizing Kozen and Tiuryn's \cite{KozenTiuryn2003} substructural logic of partial correctness $\mathsf{S}$) that is sound and complete with respect to a fragment of the equational theory of relational $\aKAT$, (ii) Hollenberg \cite{Hollenberg1997} gives an axiomatization of the equational theory of $^{*}$-free relational $\aKAT$, and (iii) McLean \cite{McLean2020} (based on the work of Mbacke \cite{Mbacke2018}) establishes a relational completeness result for $^{*}$-free relational $\dKA$.

The rest of the paper is organized as follows. Section \ref{sec:KADT} introduces Kleene algebras with dynamic tests, with focus on languages and equational theories. Section \ref{sec:REL-models} introduces relational models for the language of  Kleene algebras with dynamic tests and Section \ref{sec:PDL} notes that $\aKAT$, the strongest version of Kleene algebra with dynamic tests we consider, is ``$\PDL$ in disguise''. Section \ref{sec:LAN-models} discusses models based on guarded strings and Section \ref{sec:LAN} proves the language completeness result. Relational completeness is established in Section \ref{sec:REL}, along with the corollary that the equational theory of $\aKAT$ is $\EXPTIME$-complete. Section \ref{sec:FRA} discusses completeness for syntactic fragments of the equational theory of $\aKAT$ and proves an $\EXPTIME$-completeness result for $\aKA$. The final section concludes the paper and points out some interesting problems for future research.

\section{Dynamic tests}\label{sec:KADT}

In this section we introduce Kleene algebras with dynamic tests. In particular, we discuss their language and some of its fragments (Sect.~\ref{sec:KADT-language}) and we define a congruence on the language that represents their equational theory (Sect.~\ref{sec:KADT-equational}).

\subsection{Languages}\label{sec:KADT-language}

Let $\Sigma$ be an ``action alphabet'' and $\Pi$ a ``propositional alphabet'', disjoint from $\Sigma$. For the sake of simplicity, both alphabets are assumed to be countably infinite. Both alphabets will be fixed throughout the paper.

\begin{definition}
The set $\mathbb{E}$ of \emph{regular expressions with dynamic tests} (over $\Sigma$ and $\Pi$) is defined by the following grammar:
\[
e, f := \mathtt{a} \in \Sigma \mid \mathtt{p} \in \Pi \mid \mathtt{0} \mid \mathtt{1} \mid e + f \mid e \cdot f \mid e^{\ant} \mid e^{\dom} \mid e^{*}
\]
\end{definition}

We will often refer to regular expressions with dynamic tests just as \textit{expressions}. Our definition of an expression uses the regular operators $+$ (sum), $\cdot$ (multiplication, concatenation) and $\,^{*}$ (Kleene star), together with constants $\mathtt{0}$ (zero, annihilator) and $\mathtt{1}$ (one, multiplicative identity). The dynamic test operators are $\,^{\ant}$ (antidomain) and $\,^{\dom}$ (domain). For technical reasons explained shortly, we take both as primitive.\footnote{We diverge from the notation used in the literature. In some papers on Kleene algebra with domain, e.g.~\cite{DesharnaisStruth2011}, antidomain is denoted as \textsf{a} in prefix notation and domain as \textsf{d}. Desharnais et al. \cite{DesharnaisEtAl2004} use $\ulcorner$ in prefix notation for domain and in \cite{DesharnaisEtAl2006} they use $\delta$. We consider $\,^{\ant}$ and $\,^{\dom}$ to be a better fit with the Kleene algebra tradition, and a good choice when it comes to comparisons with $\PDL$. The notation $\,^{\ant}$ for antidomain is also related to the interpretation of antidomain in terms of divergence -- recall that $\bot$ is sometimes used as a symbol denoting the ``absurd state'' to which all the diverging computations lead.} We will see in the next section that $\Sigma$ and $\Pi$ will have different semantics: while $\Sigma$ will be seen a set of atomic actions, $\Pi$ will be seen as a set of atomic propositions.

\begin{definition}
The set $\mathbb{F}$ of \emph{formulas} (over $\Sigma$ and $\Pi$) is defined by the following grammar:
\[ \f, \ff := \mathtt{p} \in \Pi \mid \mathtt{0} \mid \mathtt{1} \mid \f + \ff \mid \f \cdot \ff \mid e^{\ant} \mid e^{\top} \, ,\] where $e \in \mathbb{E}$.
\end{definition}

The notion of a \emph{subexpression} is defined as expected. A \emph{subformula} of an expression $e$ is a subexpression of $e$ which is also a formula. Let $\mathrm{Sf}(e)$ be the set of subformulas of $e \in \mathbb{E}$ and let $\mathrm{Sf}(E) = \bigcup_{e \in E} \mathrm{Sf}(e)$. For $\Gamma \subseteq \mathbb{E}$, we define $\Gamma^{\bot} = \{ e^{\bot} \mid e \in \Gamma \}$ and $\Gamma^{\pm} = \Gamma \cup \Gamma^{\bot}$.

\begin{definition}
An expression $e \in \mathbb{E}$ is \emph{testable} iff it does not contain occurrences of $\,^{\dom}$. A \emph{test} is an expression $e^{\dom}$ where $e$ is testable. A \emph{parameter} is either a test or an element of $\Pi$.
\end{definition}
\noindent
Note that every test (parameter) is a formula and if $\f$ is a test (parameter) then $\f^{\ant}$ is not a test (parameter).

\begin{definition}
Various \emph{fragments of $\mathbb{E}$} are defined in Figure \ref{fig:languages}: $\mathbb{E}_{\K}$ is the smallest set of expressions that contains the base as a subset, is closed under the regular operators, and
\begin{itemize}
\item contains $e^{\dom}$ if $e$ is in the set indicated under $\dom$
\item contains $e^{\ant}$ if $e$ is in the set indicated under $\ant$
\end{itemize}
We assume that $\Phi$ is a set of parameters extending $\Pi$. A \emph{$\K$-expression} is any $e \in \mathbb{E}_{\K}$.
\end{definition}

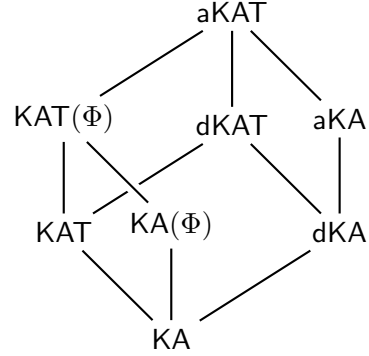
\begin{figure}
\centering
\begin{minipage}[t]{0.5\linewidth}\centering
\renewcommand{\arraystretch}{1.3}
\noindent
\begin{tabularx}{0.9\textwidth}{
	|	>{\centering\arraybackslash}X
	|	>{\centering\arraybackslash}X
		>{\centering\arraybackslash}X
		>{\centering\arraybackslash}X	
	|	
	}
	\hline
$\mathsf{K}$	& Base	 &	 $\dom$	& $\bot$ \\ \hline
$\KA$				& $\Sigma$					
						&	$\emptyset$			
						&	$\emptyset$	 	\\
$\KA(\Phi)$	& $\Sigma \cup \Phi$	
						&	$\emptyset$
						&	$\emptyset$		\\
$\dKA$			& $\Sigma$
	 					& 	 $\mathbb{E}_{\dKA}$
	 					&	$\emptyset$		\\					  
$\aKA$			& $\Sigma$
		   				& 	$\mathbb{E}_{\aKA}$
		   				&	$\mathbb{E}_{\aKA}$	\\
$\KAT$			& $\Sigma \cup \Pi$
						&	$\emptyset$
						&	 $\Pi$		\\
$\KAT(\Phi)$	& $\Sigma \cup \Phi$
						&	$\emptyset$
						&	$\Phi$	\\
$\dKAT$			& $\Sigma \cup \Pi$
						&	$\mathbb{E}_{\dKAT}$
						&	 $\Pi$		\\
$\aKAT$			& $\Sigma \cup \Pi$	
						& $\mathbb{E}_{\aKAT}$		
						& $\mathbb{E}_{\aKAT}$	\\ \hline
\end{tabularx}
\end{minipage}
\hspace*{7mm}
\noindent
\begin{minipage}[c]{0.3\linewidth}\centering
\begin{tikzpicture}[
	node distance=2cm,
	every node/.style=
		{
		inner sep=3pt}
]
\node (0,0) (KA) {$\KA$};
\node[above of=KA,yshift=-5mm] (KAp) {$\KA(\Phi)$};
\node[above left of=KA] (KAT) {$\KAT$}; 
\node[above of=KAT,yshift=-5mm] (KATp) {$\KAT(\Phi)$};
\node[above right of=KA,xshift=8mm] (dKA) {$\dKA$}; 
\node[above of=dKA,yshift=-5mm] (aKA) {$\aKA$};
\node[above right of=KAT,xshift=8mm] (dKAT) {$\dKAT$}; 
\node[above of=dKAT,yshift=-5mm] (aKAT) {$\aKAT$}; 

\path[thick] 
	(KA) edge (KAT)
	(KAT) edge (dKAT)
	(dKAT) edge (aKAT)
	(KA) edge (dKA)
	(dKA) edge (dKAT)
	(dKA) edge (aKA)
	(aKA) edge (aKAT)
	(KAT) edge (KATp)
	(KA) edge (KAp)
	(KATp) edge (aKAT)
;
\path[line width=5pt,white]
	(KAp) edge (KATp);
\path[thick]
	(KAp) edge (KATp);	

\end{tikzpicture}
\end{minipage}\caption{Fragments of the language of regular expressions with dynamic tests.}\label{fig:languages}
\end{figure}

\noindent
These fragments can be seen as being defined by a restricted version of the grammar defining $\mathbb{E}$. For instance, $\mathbb{E}_{\dKA}$ is defined by the grammar
\[ e, f := \mathtt{a} \in \Sigma \mid \mathtt{0} \mid \mathtt{1} \mid e + f \mid e \cdot f \mid e^{\top} \mid e^{*}\] whereas $\KAT(\Phi)$ is defined by (recall that $\Phi^{\pm} = \Phi \cup \{ e^{\bot} \mid e \in \Phi \}$)
\[ e, f := \mathtt{a} \in \Sigma \mid f \in \Phi^{\pm} \mid \mathtt{0} \mid \mathtt{1} \mid e + f \mid e \cdot f \mid e^{*}\]
The set $\mathbb{E}_{\KA(\Phi)}$ can be seen as the set of regular expressions over $\Sigma \cup \Phi$. Similarly, $\mathbb{E}_{\KAT(\Phi)}$ can be seen as the set of regular expressions over $\Sigma \cup \Phi^{\pm}$. We will sometimes denote the former as $\mathbb{RE}(\Phi)$ and the latter as $\mathbb{RE}(\Phi^{\pm})$. It can be shown that all fragments $\mathbb{E}_{\K}$ as well as $\mathbb{RE}(\Phi^{\pm})$ admit a well-defined complexity measure allowing proofs by structural induction.\footnote{In the case of $\mathbb{RE}(\Psi)$ for $\Psi \in \{ \Phi, \Phi^{\pm}\}$, we may define $Co: \mathbb{RE}(\Psi) \to \mathbb{N}$ such that $Co(\mathtt{a}) = Co(\f) = 0$ for $\mathtt{a} \in \Sigma$ and $\f \in \Psi$, and $Co(\sharp(e_1, \ldots, e_n)) = 1 + \sum_{i \leq n} Co(e_i)$ for $\sharp \in \{ \mathtt{0}, \mathtt{1}, +, \cdot, \,^{*} \}$. This works since $\sharp \notin \{ \dom, \ant \}$.}

$\KA$-expressions are regular expressions over $\Sigma$, and $\KAT$-expressions are a variant of the expressions used in Kleene algebra with tests \cite{Kozen1997}. In our formulation, negation (antidomain) applies only to atomic propositions $\mathtt{p} \in \Pi$. Since propositions (tests) form a Boolean algebra in $\KAT$, we can do without negations of complex Boolean formulas. $\KA(\Phi)$-expressions and $\KAT(\Phi)$-expressions will serve a technical role in our proofs later. $\dKAT$-expressions provide a fitting language for ``Kleene algebras with domain'' of \cite{EhmEtAl2003,MoellerStruth2006,DesharnaisEtAl2006}, also called ``modal Kleene algebras'' in \cite{DesharnaisEtAl2004}. $\dKA$-expressions are a language for ``Kleene algebras with domain'' as defined in \cite{DesharnaisStruth2008,DesharnaisStruth2011} (which may also be called  ``domain Kleene algebras'' based on the terminology used in these two papers), and $\aKA$-expressions are a language for ``Boolean domain Kleene algebras'' as defined in \cite{DesharnaisStruth2008,DesharnaisStruth2011} (called ``Kleene algebras with domain'' in \cite{Struth2016} and ``antidomain Kleene algebras'' in \cite{GomesStruth2016}).

Expressions of the full $\aKAT$-language inherit from $\KAT$-expressions the ability to represent basic imperative programming constructs. The constant $\mathtt{1}$ represents the program $\Skip$ (``do nothing''), $\mathtt{0}$ represents $\Abort$ (``abort computation'') and multiplication corresponds to sequential composition of programs (one may read $e \cdot f$ as ``do $e$ and, after it terminates, do $f$''). Conditionals and while loops can be represented as follows:
\[ \Cond{\f}{e}{f} \text{ as } (\f \cdot e) + (\f^{\ant} \cdot f)
\qquad
\Loop{\f}{e} \text{ as } (\f \cdot e)^{*} \cdot \f^{\bot}\]
As we shall see in the next section, it is natural to read $e^{\ant}$ as the statement ``program $e$ diverges'' and $e^{\dom}$ as ``program $e$ halts''. Hence, $\aKAT$ is able to represent imperative programs where conditions in conditionals and while loops refer to halting or divergence of other programs.


\begin{definition}\label{def:FL}
A set of parameters $\Gamma$ is \emph{Fischer--Ladner closed} (FL-closed) iff
\begin{itemize}
\item $\mathtt{p}^{\dom} \in \Gamma$ only if $\mathtt{p} \in \Gamma$
\item $\mathtt{1}^{\top}, \mathtt{0}^{\top} \in \Gamma$
\item $(\f + \ff)^{\top} \in \Gamma$ only if $\f^{\top} \in \Gamma$ and $\ff^{\top} \in \Gamma$
\item $(\f \cdot \ff)^{\top} \in \Gamma$ only if $\f^{\top} \in \Gamma$ and $\ff^{\top} \in \Gamma$
\item $(e^{*})^{\top} \in \Gamma$ only if $e^{\top} \in \Gamma$
\item if $e \neq (f \cdot \mathtt{1})$, then $e^{\top} \in \Gamma$ only if $(e \cdot \mathtt{1})^{\top} \in \Gamma$ 
\item $e^{\bot\top} \in \Gamma$ only if $(e \cdot \mathtt{1})^{\top} \in \Gamma$
\item $((e + f) \cdot \f)^{\top}$ only if $(e \cdot f)^{\top} \in \Gamma$ and $(f \cdot \f)^{\top} \in \Gamma$
\item $((e \cdot f) \cdot \f)^{\top} \in \Gamma$ only if $(f \cdot \f)^{\top} \in \Gamma$ and $(e \cdot (f \cdot \f)^{\ant\ant})^{\top} \in \Gamma$
\item $(e^{*} \cdot \f)^{\top} \in \Gamma$ only if $(e \cdot (e^{*} \cdot \f)^{\ant\ant})^{\top} \in \Gamma$
\end{itemize}
The \emph{Fischer--Ladner closure} (FL-closure) of a set of tests $\Gamma$ is the smallest Fischer--Ladner closed $\Gamma ' \supseteq \Gamma$.
\end{definition}

We note that the FL-closure of a finite set of tests is a finite set of tests.

\subsection{Equational theory}\label{sec:KADT-equational}

This section defines the equational theory of Kleene algebras with dynamic tests. Syntactic fragments of the equational theory corresponding to the language fragments from the previous subsection are also discussed.

\begin{definition}\label{def:KADT-congruence}
Let $\Eq$ be the smallest congruence $\equiv$ on $\mathbb{E}$ such that
\begin{multicols}{2}
\noindent
\begin{align}
(e \cdot f) \cdot g &\equiv e \cdot (f \cdot g)\label{KA:first}\\
e \cdot \mathtt{1} &\equiv e \equiv \mathtt{1} \cdot e\\
(e + f) + g &\equiv e + (f + g)\\
\mathtt{0} + e &\equiv \mathtt{0} \equiv e + \mathtt{0}\\
e & \equiv e + e\\
e \cdot (f + g) & \equiv (e \cdot f) + (e \cdot g)\\
(e + f) \cdot g & \equiv (e \cdot g) + (f \cdot g)\\
\mathtt{1} + (e \cdot e^{*}) & \leqq e^{*}\label{KA:fix-R}
\end{align}
\begin{align}
\mathtt{1} + (e^{*} \cdot e) & \leqq e^{*}\label{KA:fix-L}\\
f + (e \cdot g) \leqq g & \implies e^{*} \cdot f \leqq g\label{KA:least-R}\\
f + (g \cdot e) \leqq g & \implies f \cdot e^{*} \leqq g\label{KA:least-L}\\
e^{\bot} \cdot e & \equiv \mathtt{0}\label{ax:LNC}\\
(e \cdot f)^{\bot} & \equiv (e \cdot f^{\top})^{\bot}\label{ax:locality}\\
e^{\bot} + e^{\top} & \equiv \mathtt{1}\label{ax:LEM}\\
\mathtt{p}^{\top} &\equiv  \mathtt{p}\label{ax:fix}\\
e^{\dom} & \Eq e^{\ant\ant} \label{ax:domant}
\end{align}
\end{multicols}
\noindent
for all $e, f \in \mathbb{E}$ and $\mathtt{p} \in \Pi$, where $e \leqq f$ means $e + f \equiv f$. If $e \equiv f$, the we will often say that $e$ and $f$ are \emph{equivalent}.
\end{definition}

\begin{proposition}\label{prop:useful}
The following hold:
\begin{multicols}{2}
\noindent
\begin{align}
e^{\bot\top} & \equiv e^{\bot} \label{eq:Fix}\\
\mathtt{1}^{\bot} & \equiv \mathtt{0}\\
\mathtt{0}^{\bot} & \equiv \mathtt{1}\\
(e + f)^{\top} & \equiv e^{\top} + f^{\top}\\
(e^{\top} \cdot f)^{\top} & \equiv e^{\top} \cdot f^{\top}\label{eq:DiamondTest}\\
e^{\top} \cdot e & \equiv e \label{eq:Preserver}\\
(e^{*})^{\top} & \equiv \mathtt{1}\label{eq:TopStar}\\
e \cdot f^{\top} \equiv \mathtt{0} & \iff e \cdot f \equiv \mathtt{0}\label{eq:IsZero}
\end{align}
\begin{align}
\f \cdot \bar{\f} & \equiv \mathtt{0}\label{eq:ECQ-bar}\\
\f + \bar{ \f} & \equiv \mathtt{1}\label{eq:LEM-bar}\\
\bar{\bar{\f}} & \equiv \f \label{eq:DN-bar}\\
\f \cdot e  \equiv \mathtt{0} &\iff \f \leqq e^{\bot} \label{eq:Contra}\\
\f \cdot \f & \equiv \f \label{eq:Idem}\\
\f \cdot \ff & \equiv \ff \cdot \f \label{eq:Comm}\\
\f^{\top} & \equiv \f \label{eq:FTop}\\
(e \cdot \f)^{\top} \leqq \f & \implies (e^{*} \cdot \f)^{\top} \leqq \f \label{eq:Star}\\
\f \leqq (e \cdot \f^{\bot})^{\bot} &\implies \f \leqq (e^{*} \cdot \f^{\bot})^{\bot}\label{eq:Star-2}
\end{align}
\end{multicols}
\end{proposition}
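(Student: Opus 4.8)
The plan is to prove each of the listed equivalences \eqref{eq:Fix}--\eqref{eq:Star-2} as a consequence of the defining axioms \eqref{KA:first}--\eqref{ax:domant}. Since $\Eq$ is by definition the \emph{smallest} congruence closed under those axioms, it suffices to derive each equation from the axioms using the congruence properties (reflexivity, symmetry, transitivity, and substitutivity under all operators) together with the previously derived equations in the list. I would proceed strictly in the order given, so that each derivation may freely use all earlier items; the ordering has clearly been chosen to make this possible.

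\medskip
\noindent
First I would dispatch the ``domain'' facts. For \eqref{eq:Fix}, note $e^{\bot\top} \Eq e^{\bot\bot\bot}$ by \eqref{ax:domant} applied to $e^{\bot}$, so it remains to show $e^{\bot\bot\bot} \Eq e^{\bot}$; this is the standard antidomain identity $a(a(a\,x)) = a\,x$, which I would extract from \eqref{ax:LNC}, \eqref{ax:locality} and \eqref{ax:LEM}. The constants $\mathtt 1^{\bot} \Eq \mathtt 0$ and $\mathtt 0^{\bot} \Eq \mathtt 1$ follow from \eqref{ax:LNC} (instantiated at $e = \mathtt 1$) and \eqref{ax:LEM} together with the annihilation and unit axioms. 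The additivity $(e+f)^{\top} \Eq e^{\top} + f^{\top}$, the test-locality \eqref{eq:DiamondTest}, and the preservation law \eqref{eq:Preserver} are the core domain identities; I would obtain them from \eqref{ax:LNC}, \eqref{ax:locality}, \eqref{ax:LEM}, using $e^{\top} \Eq e^{\bot\bot}$ throughout via \eqref{ax:domant}. For \eqref{eq:TopStar} I would use \eqref{KA:fix-R}/\eqref{KA:least-R} to show $\mathtt 1 \leqq (e^{*})^{\top} \leqq \mathtt 1$, and \eqref{eq:IsZero} follows from \eqref{eq:Preserver} and \eqref{ax:locality}: if $e \cdot f^{\top} \Eq \mathtt 0$ then $e \cdot f \Eq e \cdot f^{\top} \cdot f \Eq \mathtt 0$ using $f \Eq f^{\top} \cdot f$, and conversely via \eqref{ax:locality}.

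\medskip
\noindent
The second block concerns \emph{parameters} $\f, \ff$ (tests or propositional letters) and their negations $\bar\f := \f^{\bot}$. Since a parameter is either $\mathtt p \in \Pi$, for which \eqref{ax:fix} gives $\mathtt p^{\top} \Eq \mathtt p$, or a test $e^{\dom}$, for which $\f^{\top} \Eq \f$ holds because $\f^{\top} \Eq e^{\dom\top} \Eq e^{\bot\bot\bot\bot} \Eq e^{\bot\bot} \Eq \f$ by \eqref{eq:Fix} and \eqref{ax:domant}, every parameter satisfies the fixpoint law \eqref{eq:FTop}; I would prove \eqref{eq:FTop} first within this block and then reuse it. Granting \eqref{eq:FTop}, the Boolean laws \eqref{eq:ECQ-bar}--\eqref{eq:DN-bar} specialize the general domain identities to parameters: \eqref{eq:ECQ-bar} is \eqref{ax:LNC} read through $\f \Eq \f^{\top}$, \eqref{eq:LEM-bar} is \eqref{ax:LEM} likewise, and \eqref{eq:DN-bar} is $\f^{\bot\bot} \Eq \f^{\top} \Eq \f$. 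The contraposition \eqref{eq:Contra} I would get from \eqref{eq:IsZero} and \eqref{eq:Contra}'s two directions via \eqref{ax:LNC} and \eqref{ax:LEM}; idempotence \eqref{eq:Idem} from \eqref{eq:Preserver} plus \eqref{eq:FTop}; and commutativity \eqref{eq:Comm} from the fact that a product of two parameters is again expressible as a test, using \eqref{eq:DiamondTest} and \eqref{eq:FTop}.

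\medskip
\noindent
I expect the two induction-flavored laws \eqref{eq:Star} and \eqref{eq:Star-2} to be the main obstacle, since they require genuine use of the star induction rules \eqref{KA:least-R}/\eqref{KA:least-L} rather than mere equational rewriting. For \eqref{eq:Star}, assuming $(e \cdot \f)^{\top} \leqq \f$, I would aim to show $\f + e \cdot \f \leqq \f$ is \emph{not} directly available, so instead I would work with the test $\f$ as a ``loop invariant'': set up an instance of \eqref{KA:least-R} or \eqref{eq:Contra} that lets me push $\,^{\top}$ through the star by proving $(e^{*} \cdot \f)^{\top}$ is below the least solution of the relevant fixpoint inequality, using \eqref{eq:DiamondTest} and \eqref{eq:Preserver} to relate $(e \cdot \psi)^{\top}$ to $e^{\top}$-style diamonds. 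The contrapositive law \eqref{eq:Star-2} should then follow from \eqref{eq:Star} by replacing $\f$ with $\f^{\bot}$ and dualizing through \eqref{eq:Contra} and \eqref{eq:DN-bar}, taking antidomains of both sides; the care needed is to ensure the hypothesis translates correctly under negation, which is where \eqref{eq:Contra} does the heavy lifting. The subtlety throughout is that many intermediate terms are tests only after an application of $\,^{\top}$, so I must consistently track which subterms are genuine parameters before invoking the Boolean laws.
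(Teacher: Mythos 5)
Your overall plan---deriving each item from the axioms in the listed order, with the star induction rules reserved for \eqref{eq:Star} and \eqref{eq:Star-2}---matches the paper's intent, and your sketches for the purely equational items \eqref{eq:Fix}--\eqref{eq:Comm} are plausible (the paper leaves all of these to the reader). Two smaller points. First, in Proposition \ref{prop:useful} the metavariables $\f,\ff$ range over \emph{formulas} (the grammar defining $\mathbb{F}$), not just parameters, so \eqref{eq:FTop} needs a short induction over $\mathbb{F}$ rather than the two-case analysis (propositional letter versus test) you give; the missing cases are handled by additivity, \eqref{eq:DiamondTest} and \eqref{eq:Fix}. Second, this wider reading is precisely what makes your derivation of \eqref{eq:Star-2} from \eqref{eq:Star} by substituting $\f^{\bot}$ for $\f$ legitimate, since $\f^{\bot}$ is a formula even though it is never a parameter; under your own reading of $\f$ as a parameter that substitution would be disallowed.

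The genuine gap is in \eqref{eq:Star}, the one item the paper actually proves in detail. Your plan is to show that ``$(e^{*}\cdot\f)^{\top}$ is below the least solution of the relevant fixpoint inequality,'' but the induction rules \eqref{KA:least-R} and \eqref{KA:least-L} only bound terms of the shape $e^{*}\cdot f$ or $f\cdot e^{*}$; they cannot be applied to a term already wrapped in $\,^{\top}$, and you never identify which fixpoint inequality you intend to use. The missing idea is a commutation step. From the hypothesis $(e\cdot\f)^{\top}\leqq\f$ one gets $e\cdot\f\leqq\f\cdot(e\cdot\f)$ via \eqref{eq:Preserver}, and one then proves
\[
e\cdot\f\leqq\f\cdot(e\cdot\f)\implies e^{*}\cdot\f\leqq\f\cdot(e^{*}\cdot\f)
\]
by instantiating \eqref{KA:least-R} with $g=\f\cdot(e^{*}\cdot\f)$ and $f=\f$, after checking $\f+e\cdot(\f\cdot(e^{*}\cdot\f))\leqq\f\cdot(e^{*}\cdot\f)$ (this is the paper's \eqref{eq:Star-help}). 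This yields the un-dominated inequality $e^{*}\cdot\f\leqq\f\cdot e^{*}$, and only \emph{then} does $\,^{\top}$ enter: monotonicity of $\,^{\top}$ together with locality \eqref{ax:locality} give $(e^{*}\cdot\f)^{\top}\leqq(\f\cdot(e^{*})^{\top})^{\top}$, which collapses to $\f$ by \eqref{eq:TopStar} and \eqref{eq:FTop}. Without this detour through an inequality not under $\,^{\top}$, the star rules have nothing to act on, so as written your argument for \eqref{eq:Star} does not go through.
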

\begin{proof}
\eqref{eq:Star} is established as follows. We show first that 
\begin{equation}\label{eq:Star-help}
e \cdot \f \leqq \f \cdot (e \cdot \f) \implies e^{*} \cdot \f \leqq \f \cdot (e^{*} \cdot \f)
\end{equation}
First, note that $\f \leqq \f \cdot \mathtt{1} \cdot \f \leqq \f \cdot e^{*} \cdot \f$. Second, $e \cdot \f \leqq \f \cdot (e \cdot \f)$ entails $e \cdot \f \leqq \f \cdot e$, and so we have $e \cdot \f \cdot (e^{*} \cdot \f) \leqq \f \cdot ((e \cdot e^{*}) \cdot\f)$, which entails $e \cdot \f \cdot (e^{*} \cdot \f) \leqq \f \cdot (e^{*} \cdot \f)$ by \eqref{KA:fix-L}. By combining these two observations, we obtain 
\[e \cdot \f \leqq \f \cdot (e \cdot \f) \implies \f + e \cdot (\f \cdot (e^{*} \cdot \f)) \leqq \f \cdot (e^{*} \cdot \f)\] Then, using \eqref{KA:least-R} we obtain \eqref{eq:Star-help}. We then proceed as follows. If $(e \cdot \f)^{\top} \leq \f$, then $(e \cdot \f)^{\top} \cdot (e \cdot \f) \leqq \f \cdot (e \cdot \f)$ and so $e \cdot \f \leqq \f \cdot (e \cdot \f)$ by \eqref{eq:Preserver}. Using \eqref{eq:Star-help} we obtain $e^{*} \cdot \f \leqq \f \cdot (e^{*} \cdot \f)$, which yields $e^{*} \cdot \f \leqq \f \cdot e^{*}$. Using monotonicity of $\top$ and \eqref{ax:locality}, we infer $(e^{*} \cdot \f)^{\top} \leqq (\f \cdot (e^{*})^{\top})^{\top}$ which entails our goal $(e^{*} \cdot \f)^{\top} \leqq \f$ by \eqref{eq:TopStar} and \eqref{eq:FTop}. \eqref{eq:Star-2} follows from \eqref{eq:Star}.\footnote{I am grateful to Damien Pous for a suggestion on which this proof is based.}
\end{proof}

It is clear from \eqref{ax:domant} that every expression is equivalent to a testable expression. Let $e? := f^{\top}$ where $f$ is obtained from $e$ by replacing every occurrence of $\dom$ by $\ant\ant$. Let $\mathrm{St}(e) = \{ f ? \mid f \in \mathrm{Sf}(e)\}$ and $\mathrm{St}(E) = \bigcup_{e \in E} \mathrm{St}(e)$.

In what follows, we will use the notation $e \EQ{\K} f$ to indicate that $e, f \in \mathbb{E}_{\K}$ and $e \Eq f$. The following is immediate:

\begin{observation}\label{obs:EQ}
$e \EQ{\K} f \implies e \Eq f$ for all fragments $\K$.
\end{observation}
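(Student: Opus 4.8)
The plan is to unfold the definition of the fragment-relation $\EQ{\K}$ and simply read off the conclusion. By the definition just given, $e \EQ{\K} f$ asserts the conjunction ``$e, f \in \mathbb{E}_{\K}$ and $e \Eq f$'', so the claim $e \Eq f$ is literally its second conjunct and nothing further is required. Viewed this way, $\EQ{\K}$ is nothing but the restriction of the global congruence $\Eq$ to pairs both of whose members lie in $\mathbb{E}_{\K}$, and the observation records the (immediate) inclusion $\EQ{\K} \subseteq \Eq$, uniformly for every fragment $\K$.

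Since there is no genuine computation here, I would spend the remaining effort making explicit what the observation is for and what it is \emph{not} claiming. Its role is purely bookkeeping: it licenses dropping the fragment annotation and reasoning in the ambient theory $\Eq$ whenever that is convenient, which is exactly what later allows completeness results to be transferred between fragments. What it deliberately does not assert is the converse direction --- that an $\Eq$-equivalence between two $\K$-expressions can always be witnessed without leaving $\mathbb{E}_{\K}$. That is a conservativity statement, it is genuinely substantive, and it is treated separately; the present observation is the easy half.

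If one instead preferred to read $\EQ{\K}$ as the least congruence on $\mathbb{E}_{\K}$ closed under those instances of the axioms of Definition \ref{def:KADT-congruence} that remain inside the fragment, the observation is still immediate, now by minimality: $\Eq \cap (\mathbb{E}_{\K} \times \mathbb{E}_{\K})$ is a congruence on $\mathbb{E}_{\K}$ --- here one uses that each $\mathbb{E}_{\K}$ is closed under the regular operators and the dynamic-test operators permitted in that fragment, as prescribed in Figure \ref{fig:languages} --- and it validates every axiom instance that $\EQ{\K}$ is required to validate, so it contains the least such congruence. The nearest thing to an obstacle, and it is only a routine check, is precisely this closure of $\mathbb{E}_{\K}$ under the relevant operations together with the remark that each fragment axiom instance is a bona fide instance of a $\Eq$-axiom; both are immediate from the definitions of the fragments, so I expect the whole argument to be a single line in either reading.
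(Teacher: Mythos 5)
Your proposal is correct and matches the paper, which defines $e \EQ{\K} f$ precisely as ``$e, f \in \mathbb{E}_{\K}$ and $e \Eq f$'' and accordingly states the observation without proof as immediate. Your first reading (the second conjunct of the definition is the conclusion) is exactly the intended argument; the alternative ``least fragment congruence'' reading you discuss is not the one the paper uses, though your treatment of it is also sound.
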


We note that $\EQ{\aKA}$ clearly coincides with the equational theory of ``Boolean domain Kleene algebras'' as defined in \cite{DesharnaisStruth2008,DesharnaisStruth2011} (called ``Kleene algebras with domain'' in \cite{Struth2016} and ``antidomain Kleene algebras'' in \cite{GomesStruth2016}).

\section{Relational models}\label{sec:REL-models}

This section discusses the relational semantics for regular expressions with dynamic tests.

\begin{definition}\label{def:relmodel}
A \emph{relational model} (for $\Sigma$ and $\Pi$) is a structure $M = \langle X, \mathsf{rel}_M, \mathsf{sat}_M\rangle$ where $X$ is a set, $\mathsf{rel}_M : \Sigma \to 2^{X \times X}$ and $\mathsf{sat}_M : \Pi \to 2^{X}$.
\end{definition} 

The intuition behind the relational models is that elements of the action alphabet $\Sigma$ express \textit{actions} that are associated with \textit{state transitions} on a set of states $X$, and that elements of $\Pi$ express \textit{propositions} associated with \textit{sets of states}. In particular, the fact that $\langle x, y\rangle \in \mathsf{rel}_M(\mathtt{a})$ (or $x \xto{\mathtt{a}} y$ in $M$, as we shall sometimes write) means that performing action $\mathtt{a}$ in state $x$ may result in state $y$,\footnote{We do not assume that actions are deterministic, that is, $\textsf{rel}_M(\mathtt{a})$ is not necessarily a total function.} and the fact that $x \in \textsf{sat}_M(\mathtt{p})$ means that proposition $\mathtt{p}$ is \textit{satisfied} (or \textit{true}) in state $x$.

\begin{definition}\label{def:M-inter}
 For each relational model $M = \langle X, \mathsf{rel}_M, \mathsf{sat}_M\rangle$ (for $\Sigma$ and $\Pi$) we define the \emph{$M$-interpretation} of $\mathbb{E}$ as the function $\llbracket-\rrbracket_M : \mathbb{E} \to 2^{X \times X}$ such that:
\begin{gather*}
\llbracket \mathtt{a}\rrbracket_M = \mathsf{rel}_M(\mathtt{a}) \qquad
\llbracket \mathtt{p}\rrbracket_M = 1_{\mathsf{sat}_M(\mathtt{p})} \qquad
\llbracket \mathtt{0}\rrbracket_M = \emptyset \qquad
\llbracket \mathtt{1}\rrbracket_M = 1_{X}\\[2mm]
\llbracket e + f\rrbracket_M = \llbracket e\rrbracket_M \cup \llbracket f\rrbracket_M \qquad
\llbracket e \cdot f\rrbracket_M = \llbracket e\rrbracket_M \circ \llbracket f\rrbracket_M\\[2mm]
\llbracket e^{*}\rrbracket_M = \left ( \llbracket e\rrbracket_M\right )^{*} = \bigcup_{n \geq 0} \llbracket e\rrbracket^{n}_{M}\\
\llbracket e^{\bot}\rrbracket_M = \mathsf{a} \left ( \llbracket e\rrbracket_M\right ) = \bigcup_{R \subseteq 1_{X}} \left ( R \circ \llbracket e\rrbracket_M = \emptyset \right )\\
\llbracket e^{\dom}\rrbracket_M = \mathsf{d} \left ( \llbracket e\rrbracket_M\right ) = 1_{X} \setminus \mathsf{a} (\llbracket e\rrbracket_{M})
\end{gather*}
where, for $R \subseteq 2^{X \times X}$, $R^{0}= 1_X$ and $R^{n+1} = R^{n} \circ R$. Expressions $e$ and $f$ are \emph{relationally equivalent}) iff $\llbracket e\rrbracket_M = \llbracket f\rrbracket_M$ for all $M$.
\end{definition}
\noindent
In other words,
\[ \llbracket e^{\bot}\rrbracket_M = \{ \langle x, x\rangle \mid \neg \exists y \left  ( \langle x, y\rangle \in \llbracket e\rrbracket_{M} \right )\} 
\qquad
\llbracket e^{\top}\rrbracket_M = \{ \langle x, x\rangle \mid \exists y \left  ( \langle x, y\rangle \in \llbracket e\rrbracket_{M} \right )\}
\]
Hence, in the relational setting, the antidomain operator coincides with \textit{dynamic negation} studied in Dynamic Predicate Logic \cite{GroenendijkStokhof1991,Hollenberg1997}. If a \textit{program} is seen as a binary relation on $X$ and a \textit{proposition} is seen as a subset of $X$ (or, equivalently, a subset of the identity relation on $X$), then the dynamic test operators $^{\dom}$ and $^{\bot}$ turn programs into propositions. More specifically, $R^{\dom}$ is the proposition expressing the fact that $R$ \textit{halts}: $x \in R^{\top}$ iff there is $y$ such that $\langle x, y\rangle \in R$ (there is a terminating computation of $R$ starting in $x$). Similarly, $R^{\ant}$ is the proposition expressing the fact that $R$ \textit{diverges}: $x \in R^{\ant}$ iff there is no $y$ such that $\langle x,y \rangle \in R$ (there is no halting computation of $R$ starting in $x$).

The following relational soundness result is easily verified. One of the tasks of this paper is to prove the converse; see Sect.\ref{sec:REL}.

\begin{lemma}[Relational soundness]\label{lem:R-sound}
For all $e, f \in \mathbb{E}$, if $e \equiv f$, then $\llbracket e\rrbracket_M = \llbracket f\rrbracket_M$ for all relational $M$.
\end{lemma}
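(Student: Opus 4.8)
The plan is to prove soundness by structural induction on the derivation of $e \equiv f$, which reduces to a single verification: the congruence $\Eq$ is generated as the \emph{smallest} congruence satisfying the axioms \eqref{KA:first}--\eqref{ax:domant}, so it suffices to check that relational equivalence (i) is itself a congruence on $\mathbb{E}$ with respect to all the operators $+, \cdot, {}^{*}, {}^{\bot}, {}^{\top}$, and (ii) validates each of the generating axioms and the two Horn-clause rules \eqref{KA:least-R}--\eqref{KA:least-L}. Point (i) is where the ``smallest congruence'' formulation does the work: once we know relational equivalence is a congruence containing all the axiom-instances, minimality of $\Eq$ forces $\Eq\,\subseteq\,$ relational equivalence, which is exactly the claim. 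Congruence is routine from the definition of $\llbracket-\rrbracket_M$, since each clause defines the interpretation of a compound expression purely in terms of the interpretations of its immediate subexpressions, so equal interpretations of the parts give equal interpretations of the whole.

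For point (ii) I would group the axioms. The purely Kleene-algebraic equations \eqref{KA:first}--\eqref{KA:fix-L} are the standard relational identities for the algebra of binary relations under $\cup$, relational composition $\circ$, identity $1_X$, empty relation $\emptyset$, and reflexive-transitive closure $(-)^{*} = \bigcup_{n\geq 0}(-)^n$; these are classical and I would simply cite that $2^{X\times X}$ forms a relational Kleene algebra. The two induction rules \eqref{KA:least-R}--\eqref{KA:least-L} assert that $(-)^{*}$ computes the least fixpoint: I would verify, e.g., that if $\llbracket f\rrbracket_M \cup \llbracket e\rrbracket_M \circ \llbracket g\rrbracket_M \subseteq \llbracket g\rrbracket_M$ then $\bigcup_n \llbracket e\rrbracket_M^{\,n} \circ \llbracket f\rrbracket_M \subseteq \llbracket g\rrbracket_M$, by an easy induction on $n$ using that $\subseteq$ corresponds to $\leqq$ under the union reading of $+$.

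The genuinely new content is the handling of the dynamic-test axioms \eqref{ax:LNC}--\eqref{ax:domant}, which I would verify directly from the explicit descriptions $\llbracket e^{\bot}\rrbracket_M = \{\langle x,x\rangle \mid \neg\exists y\, \langle x,y\rangle \in \llbracket e\rrbracket_M\}$ and $\llbracket e^{\top}\rrbracket_M = \{\langle x,x\rangle \mid \exists y\, \langle x,y\rangle \in \llbracket e\rrbracket_M\}$ given just after Definition~\ref{def:M-inter}. For \eqref{ax:LNC}, a pair in $\llbracket e^{\bot}\cdot e\rrbracket_M$ would require an $x$ with no $e$-successor that nonetheless has an $e$-successor, so the composite is empty. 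For \eqref{ax:LEM}, every $x$ either has or lacks an $e$-successor, so $\llbracket e^{\bot}\rrbracket_M \cup \llbracket e^{\top}\rrbracket_M = 1_X$. Locality \eqref{ax:locality} is the subtlest: I would show $x$ has an $e\cdot f$-successor iff $x$ has an $e$-successor that has an $f$-successor, i.e.\ iff $x$ has an $e\cdot f^{\top}$-successor, whence the two antidomains agree; this uses associativity of $\circ$ and the fact that restricting to $f^{\top}$ preserves exactly the $e$-successors from which $f$ can fire. Finally \eqref{ax:fix} follows since $\mathtt{p}$ is interpreted as a subidentity $1_{\mathsf{sat}_M(\mathtt{p})}$, so $x$ has a $\mathtt{p}$-successor iff $x\in\mathsf{sat}_M(\mathtt{p})$, giving $\llbracket \mathtt{p}^{\top}\rrbracket_M = \llbracket \mathtt{p}\rrbracket_M$, and \eqref{ax:domant} is immediate from $\mathsf{d} = 1_X\setminus\mathsf{a}$ together with the observation that $\mathsf{a}$ applied to a subidentity is its relative complement. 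I expect the locality axiom \eqref{ax:locality} to be the only step requiring real care; everything else is a short unfolding of definitions.
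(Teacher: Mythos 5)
Your proposal is correct and is precisely the verification the paper leaves to the reader: Lemma \ref{lem:R-sound} is stated with no proof beyond the remark that it is ``easily verified'', and the intended argument is exactly yours --- relational equivalence is a congruence validating axioms \eqref{KA:first}--\eqref{ax:domant} and closed under the rules \eqref{KA:least-R}--\eqref{KA:least-L}, so the smallest such congruence $\Eq$ is contained in it. Your case analyses for \eqref{ax:locality} and \eqref{ax:domant} in particular are sound, using the explicit descriptions of $\llbracket e^{\bot}\rrbracket_M$ and $\llbracket e^{\top}\rrbracket_M$ given after Definition~\ref{def:M-inter}.
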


\section{Complexity of deciding relational equivalence}\label{sec:PDL}

In this section we point out the close relation between relational equivalence and validity in Propositional Dynamic Logic ($\PDL$) \cite{FischerLadner1979,HarelEtAl2000}. In fact, $\aKAT$ can be seen as ``$\PDL$ in disguise''. It follows from our observations that the problem of deciding relational equivalence between arbitrary expressions is $\EXPTIME$-complete.

It is easy to prove by induction on the structure of $\f$ that $\llbracket \f\rrbracket_M \subseteq 1_{X}$ for all formulas $\f$ and all $M$. We will write
\[ (M,x) \vDash \f \iff \langle x, x \rangle \in \llbracket \f\rrbracket_M\] 

\begin{observation}
For all $\f$ and all $M$:
\begin{enumerate}
\item $(M, x) \vDash \f$ iff $\langle x, x\rangle \in \llbracket \f^{\top}\rrbracket_M$
\item $(M, x) \not\vDash \mathtt{0}$ \, and \, $(M, x) \vDash \mathtt{1}$ for all $x$
\item $(M, x) \vDash \f^{\bot}$ iff $(M, x) \not\vDash \f$
\item $(M, x) \vDash \f + \ff$ iff $(M, x) \vDash \f$ or $(M, x) \vDash \ff$
\item $(M, x) \vDash \f \cdot \ff$ iff $(M, x) \vDash \f$ and $(M, x) \vDash \ff$
\item $(M, x) \vDash (e \cdot \f)^{\ant\ant}$ iff there is $y$ such that $\langle x,y\rangle \in \llbracket e\rrbracket_M$ and $(M, y) \vDash \f$
\item $(M, x) \vDash (e \cdot \f^{\ant})^{\ant}$ iff for all $y$, $\langle x,y\rangle \in \llbracket e\rrbracket_M$ implies that $(M, y) \vDash \f$
\end{enumerate}
\end{observation}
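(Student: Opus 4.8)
The plan is to verify all seven clauses by directly unfolding Definition~\ref{def:M-inter} together with the set-builder descriptions of $\llbracket e^{\bot}\rrbracket_M$ and $\llbracket e^{\top}\rrbracket_M$ displayed just beneath it. No induction on $\f$ is needed for the clauses themselves; the single ingredient reused throughout is the subidentity property $\llbracket\f\rrbracket_M\subseteq 1_X$ for every formula $\f$, noted (by an easy structural induction) immediately before the statement. This property is what reduces the whole Observation to bookkeeping: whenever a formula $\f$ occurs, the quantifier ``$\exists y\,(\langle x,y\rangle\in\llbracket\f\rrbracket_M)$'' collapses to ``$\langle x,x\rangle\in\llbracket\f\rrbracket_M$'', and a relational composition $R\circ\llbracket\f\rrbracket_M$ restricts $R$ to those targets lying in $\llbracket\f\rrbracket_M$.

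For the first five clauses I would proceed as follows. Clause~2 is immediate from $\llbracket\mathtt 0\rrbracket_M=\emptyset$ and $\llbracket\mathtt 1\rrbracket_M=1_X$. For clause~1, I would unfold $\llbracket\f^{\top}\rrbracket_M=\{\langle x,x\rangle\mid\exists y\,\langle x,y\rangle\in\llbracket\f\rrbracket_M\}$ and use the subidentity property to force any witness $y$ to equal $x$, so that $\langle x,x\rangle\in\llbracket\f^{\top}\rrbracket_M$ reduces to $\langle x,x\rangle\in\llbracket\f\rrbracket_M$, i.e. to $(M,x)\vDash\f$. Clause~3 follows by unfolding $\llbracket\f^{\bot}\rrbracket_M$ and again using the subidentity property to rewrite ``$\neg\exists y$'' as ``$\langle x,x\rangle\notin\llbracket\f\rrbracket_M$''. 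Clause~4 is just the definition of $\cup$ read off on the diagonal. Clause~5 is the one place two formulas meet: $\langle x,x\rangle\in\llbracket\f\rrbracket_M\circ\llbracket\ff\rrbracket_M$ requires an intermediate $z$ with $\langle x,z\rangle\in\llbracket\f\rrbracket_M$ and $\langle z,x\rangle\in\llbracket\ff\rrbracket_M$, and since both relations are subidentities $z$ must equal $x$, yielding the stated conjunction.

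Clauses~6 and~7 carry the only genuine structure, and I would treat them together. The first move for clause~6 is to observe that two applications of antidomain collapse to the domain on the diagonal: $\langle x,x\rangle\in\llbracket(e\cdot\f)^{\bot\bot}\rrbracket_M$ holds iff $\langle x,y\rangle\in\llbracket e\cdot\f\rrbracket_M$ for some $y$, a double-negation cancellation that uses the subidentity property of the intermediate relation $\llbracket(e\cdot\f)^{\bot}\rrbracket_M$. I would then expand $\llbracket e\rrbracket_M\circ\llbracket\f\rrbracket_M$ and use the subidentity of the right-hand factor $\llbracket\f\rrbracket_M$ to identify such a $y$ with a state satisfying $\langle x,y\rangle\in\llbracket e\rrbracket_M$ and $(M,y)\vDash\f$. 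Clause~7 is the dual: the single antidomain turns the existential into its negation, the factor $\llbracket\f^{\bot}\rrbracket_M$ is handled by clause~3, and ``there is no $y$ with $\langle x,y\rangle\in\llbracket e\rrbracket_M$ and $(M,y)\not\vDash\f$'' rewrites to the stated universal.

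I do not expect any serious obstacle; the content is a careful unfolding of the semantics. The one point demanding attention is the double-antidomain step in clause~6, where the subidentity collapse must be applied to the intermediate relation $\llbracket(e\cdot\f)^{\bot}\rrbracket_M$ rather than to $\llbracket e\cdot\f\rrbracket_M$ (which is in general not a subidentity), and more broadly the recurring need to apply the subidentity property precisely to the formula-valued factor, which in clauses~5--7 always sits on the right of the relevant composition.
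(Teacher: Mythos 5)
Your proof is correct and is exactly the routine unfolding of Definition~\ref{def:M-inter} that the paper has in mind: the Observation is stated without proof precisely because it reduces to this bookkeeping, with the subidentity property $\llbracket\f\rrbracket_M\subseteq 1_X$ doing all the work. Your explicit caution about applying the subidentity collapse to $\llbracket(e\cdot\f)^{\bot}\rrbracket_M$ rather than to $\llbracket e\cdot\f\rrbracket_M$ in clause~6 is the right point to flag.
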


\noindent
Hence, we can write
\begin{align*}
\f ? \quad & \text{instead of} 
 	\quad \f^{\dom}\\
\neg \f \quad & \text{instead of} 
 	\quad \f^{\bot}\\
\f \land \ff \quad & \text{instead of} 
	\quad \f \cdot \ff\\
\f \lor \ff \quad & \text{instead of} 
	\quad \f + \ff\\	
\D{e}\f \quad & \text{instead of} 
	\quad (e \cdot \f)^{\ant\ant}\\
\B{e}\f \quad & \text{instead of} 
	\quad (e \cdot \f^{\bot})^{\bot}	
\end{align*} 
and the expressions on the left hand side will have their usual semantics if seen as expressions in the standard language of $\PDL$. 


\begin{theorem}\label{thm:complexity-REL}
The problem of deciding relational equivalence between arbitrary expressions is $\EXPTIME$-complete.
\end{theorem}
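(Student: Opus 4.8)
The plan is to establish both bounds by reducing to and from validity in $\PDL$, for which $\EXPTIME$-completeness is classical: the $\EXPTIME$ upper bound is due to Fischer and Ladner \cite{FischerLadner1979}, and $\PDL$ satisfiability (hence validity) is $\EXPTIME$-hard. The translations in both directions are supplied by the dictionary set up immediately above the theorem, which matches each expression with a $\PDL$ program and each formula with a $\PDL$ formula while preserving relational semantics. These translations are linear in size, so the reductions run in polynomial time.

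For membership in $\EXPTIME$, the first issue is that relational equivalence of two expressions $e, f$ is an equality of \emph{binary relations}, whereas $\PDL$ validity speaks about \emph{states}. I would bridge this gap with a fresh-proposition trick: pick $\mathtt{q} \in \Pi$ not occurring in $e$ or $f$ (possible since $\Pi$ is infinite) and show that $e$ and $f$ are relationally equivalent iff the $\PDL$ formula $\D{e}\mathtt{q} \tot \D{f}\mathtt{q}$ is valid. The forward direction is immediate, since the interpretation of $e$ and $f$ does not depend on the valuation of $\mathtt{q}$. For the converse I would argue contrapositively: if $\llbracket e\rrbracket_M \neq \llbracket f\rrbracket_M$ in some $M$, pick a pair $\langle x, y\rangle$ lying in exactly one of the two relations and modify $M$ only by setting $\mathsf{sat}(\mathtt{q}) = \{y\}$; then $\D{e}\mathtt{q}$ and $\D{f}\mathtt{q}$ disagree at $x$, so the formula fails. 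Since $\PDL$ validity is decidable in $\EXPTIME$ and the map $(e,f) \mapsto \D{e}\mathtt{q}\tot\D{f}\mathtt{q}$ is polynomial-time computable, this places relational equivalence in $\EXPTIME$.

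For $\EXPTIME$-hardness I would reduce $\PDL$ validity to relational equivalence. Given a $\PDL$ formula $\f$, translate it via the same dictionary into a formula $\hat\f \in \mathbb{F}$ with $\llbracket \hat\f\rrbracket_M = \{\langle x,x\rangle \mid (M,x)\vDash\f\}$. Since $\llbracket \hat\f\rrbracket_M \subseteq 1_X$ always, $\f$ is $\PDL$-valid iff $\llbracket\hat\f\rrbracket_M = 1_X = \llbracket\mathtt{1}\rrbracket_M$ for every $M$, i.e.\ iff $\hat\f$ and $\mathtt{1}$ are relationally equivalent. This is an instance of the relational equivalence problem, and the translation is again polynomial, so the $\EXPTIME$-hardness of $\PDL$ transfers.

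The routine part is verifying that both translations preserve relational semantics and are size-preserving; this follows by structural induction using the semantic clauses of Definition \ref{def:M-inter} together with the Observation preceding the theorem. The main obstacle is the encoding of relation equality as a single validity statement: the fresh-proposition argument is what lets a state-based logic decide equality of programs, and the contrapositive direction must be arranged so that changing only the valuation of $\mathtt{q}$ — and nothing that $e$ or $f$ actually depends on — suffices to refute the formula.
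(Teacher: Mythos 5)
Your proposal is correct and follows essentially the same route as the paper: both bounds are obtained by passing between relational equivalence of expressions and $\PDL$ via the dictionary preceding the theorem, with the lower bound coming from the $\EXPTIME$-hardness of $\PDL$ satisfiability/validity (the paper reduces satisfiability to inequivalence with $\mathtt{0}$, you reduce validity to equivalence with $\mathtt{1}$ --- the same reduction up to complementation, harmless since $\EXPTIME$ is closed under complement). The only difference is that for the upper bound the paper simply cites the known $\EXPTIME$ decidability of $\PDL$ program equivalence, whereas you unfold the standard fresh-proposition argument behind that fact, which is done correctly.
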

\begin{proof}
Lower bound: It is shown in \cite[Ch.~8.2]{HarelEtAl2000} that the ($\EXPTIME$-hard) membership problem for polynomial-space alternating Turing machines reduces to the problem of satisfiability of PDL-formulas in relational models: For each machine $A$ and input $t$, there is a PDL-formula $\F_{A,t}$ such that $A$ accepts $t$ iff $\F_{A, t}$ is satisfiable, that is, iff there is $M$ such that $\llbracket \F_{A, t}\rrbracket_M \neq\llbracket \mathtt{0}\rrbracket_M$ iff $\F_{A,t} \nequiv \mathtt{0}$ (by Theorem \ref{thm:RC}).

Upper bound: It is sufficient to show that for every $e \in \mathbb{E}$ there is an equivalent $e'$ in the fragment corresponding to programs of $\PDL$ and that $e'$ is polynomially computable from $e$. We omit the tedious but straightforward details. It is known that the problem of deciding (relational) equivalence between PDL-programs is in $\EXPTIME$. 
\end{proof}

We note that the the $\EXPTIME$-hardness proof does not automatically carry over to fragments of the full language. We would need to check the construction of $\mathsf{F}_{A, t}$ for arbitrary $A, t$ and determine if the formula is guaranteed to belong to the given fragment. We leave this to future work.

\section{Language models}\label{sec:LAN-models}

This section introduces some notions we will use in the language completeness proof of Section \ref{sec:LAN}. In particular, we formulate a generalization of the guarded string model for Kleene algebra with tests where atoms are formed over an arbitrary set of formulas, not necessarily a set of propositional letters (Sect.~\ref{sec:LAN-models-1}). Then we introduce a variant of these models where atoms are required to be consistent (Sect.~\ref{sec:LAN-models-2}).  

\subsection{Guarded languages and Kleene algebra with tests}\label{sec:LAN-models-1}

Let $\Phi$ be a finite set of parameters which is ordered in some fixed but arbitrary way as $\f_1, \ldots, \f_n$ (all $\f_i$ are assumed to be mutually distinct). An \emph{atom} over $\Phi$ is a sequence $\ff_1 \ldots \ff_n$ where $\ff_i \in \{ \f_i, \f_i^{\ant} \}$. The set of all atoms over $\Phi$ is denoted as $\mathbb{A}(\Phi)$ and atoms are denoted by capital letters $G, H$ etc. If $\f \in \Phi$ and $G \in \mathbb{A}(\Phi)$ then we will write
\[ G \triangleleft \f \quad \text{ iff} \quad \f \text{ occurs in } G\] It is easily seen that $G \tin \f^{\ant}$ iff not $G \tin \f$, for all $\f \in \Phi$ and $G \in \mathbb{A}(\Phi)$, since if $\f$ is a parameter, then $\f^{\bot}$ is not a parameter.

A \emph{guarded string} over $\Phi$ is any sequence of the form
\[ G_1 \mathtt{a}_1 G_2 \ldots \mathtt{a}_{n-1} G_n\] where each $G_i \in \mathbb{A}(\Phi)$ and $\mathtt{a}_j \in \Sigma$. The set of all guarded strings over $\Phi$ is denoted as $\mathbb{GS}(\Phi)$. \emph{Fusion product} is a partial binary operation on $\mathbb{GS}(\Phi)$ defined as follows:
\[ xG \diamond Hy = 
\begin{cases}
xGy & G = H\\
\text{undefined} & G \neq H
\end{cases}
\]
Fusion product is lifted to \emph{$\Phi$-guarded languages} $K, L \subseteq \mathbb{GS}(\Phi)$ as expected: $L \diamond K = \{ w \diamond u \mid w \in L \And u \in K \}$. Lifted fusion product is a total operation on guarded languages.

\begin{definition}
The \emph{algebra of $\Phi$-guarded languages} is 
\[ \mathbf{GL}(\Phi) = \langle 2^{\mathbb{GS}(\Phi)}, 2^{\mathbb{A}(\Phi)}, \cup,\, \diamond, \,^{*}, \,^{\bot}, \,^{\dom}, \emptyset, \mathbb{A}(\Phi)\rangle\] where $K^{*} = \bigcup_{n \geq 0} K^{n}$ (assuming that $K^{0} = \mathbb{A}(\Phi)$ and $K^{n+1} = K^{n} \diamond K$) and 
\[ L^{\bot} = \{ G \in \mathbb{A}(\Phi) \mid  \{ G \} \diamond L = \emptyset\} 
\qquad
L^{\dom} = \{ G \in \mathbb{A}(\Phi) \mid  \{ G \} \diamond L \neq \emptyset\}
\, . \]
\end{definition}

As a special case, an atom over $\emptyset$ is the empty sequence $\epsilon$, and so $\mathbb{GS}(\emptyset) = \Sigma^{*}$. Fusion product on $\mathbb{GS}(\emptyset)$ reduces to concatenation of strings over $\Sigma^{*}$. $\mathbf{GL}(\emptyset)$ is an expansion of the algebra of formal languages over $\Sigma$ with $\,^{\bot}$ that sends every non-empty language to $\emptyset$ and $\emptyset$ to $\{ \epsilon \}$ and $\,^{\dom}$ that sends every non-empty language to $\{ \epsilon \}$ and $\emptyset$ to $\emptyset$. Algebras of guarded languages as used in Kleene algebra with tests \cite{Kozen1997,KozenSmith1997} are a special case where $\Phi$ is a finite subset of $\Pi$.

\begin{definition}
If $\Phi$ is a finite set of parameters, then the \emph{standard $\Phi$-interpretation} of $\mathbb{RE}(\Phi^{\pm})$ is the unique homomorphism $[-]_{\Phi} : \mathbb{RE}(\Phi^{\pm}) \to \mathbf{GL}(\Phi)$ such that
\[ [\mathtt{a}]_{\Phi} = \{ G \mathtt{a} H \mid G, H \in \mathbb{A}(\Phi) \}
\qquad\quad [\f]_{\Phi} = \{ G \mid G \in \mathbb{A}(\Phi) \And G \tin \f \}
\] for $\mathtt{a} \in \Sigma$ and $\f \in \Phi^{\pm}$.
\end{definition}

In the next section we will use a variant of Kozen and Smith's language completeness result for Kleene algebras with tests \cite{KozenSmith1997}: 

\begin{theorem}[Essentially Kozen and Smith \cite{KozenSmith1997}]\label{thm:KAT-completeness}
Let $\Phi$ be a finite set of parameters. For all $e, f \in \mathbb{E}_{\KAT(\Phi)}$,
\[ e \EQ{\KAT(\Phi)} f \iff [e]_{\Phi} = [f]_{\Phi}\]
\end{theorem}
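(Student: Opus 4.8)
The plan is to recognise the statement as an instance of the completeness theorem of Kozen and Smith \cite{KozenSmith1997}, applied with $\Sigma$ as the set of primitive actions and $\Phi$ as an independent set of primitive tests. Under this reading $\mathbb{E}_{\KAT(\Phi)} = \mathbb{RE}(\Phi^{\pm})$ is exactly the set of Kleene-algebra-with-tests expressions over this signature, the atoms $\mathbb{A}(\Phi)$ are precisely the truth assignments to $\Phi$ (each atom chooses, for every parameter $\f$, either $\f$ or $\f^{\ant}$), the guarded strings $\mathbb{GS}(\Phi)$ are the usual guarded strings, and the standard interpretation $[-]_{\Phi}$ into $\mathbf{GL}(\Phi)$ coincides with the guarded-string interpretation of \cite{KozenSmith1997}. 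The task therefore splits into a soundness direction ($\EQ{\KAT(\Phi)}$ implies equal guarded languages) and a completeness direction (the converse), and the work consists in matching our axioms and our guarded-language algebra to theirs so that their theorem can be invoked.

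For soundness I would verify that $\mathbf{GL}(\Phi)$ validates every governing identity. The reduct $\langle 2^{\mathbb{GS}(\Phi)}, \cup, \diamond, {}^{*}, \emptyset, \mathbb{A}(\Phi)\rangle$ is a Kleene algebra, so axioms \eqref{KA:first}--\eqref{KA:least-L} hold. For the test layer, note that $[\f]_{\Phi}$ and $[\f^{\ant}]_{\Phi}$ partition $\mathbb{A}(\Phi)$, since $G \tin \f^{\ant}$ iff not $G \tin \f$; this validates $\f \cdot \f^{\ant} \Eq \mathtt{0}$ and $\f + \f^{\ant} \Eq \mathtt{1}$, that is \eqref{eq:ECQ-bar} and \eqref{eq:LEM-bar}. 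Because every parameter is interpreted by a set of atoms, i.e.\ a subset of the identity $\mathbb{A}(\Phi)$, fusion of parameters reduces to intersection, yielding idempotence \eqref{eq:Idem} and commutativity \eqref{eq:Comm} of tests as well as double negation \eqref{eq:DN-bar}. A routine induction on the length of an $\EQ{\KAT(\Phi)}$-derivation then yields $[e]_{\Phi} = [f]_{\Phi}$.

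For completeness I would invoke Kozen and Smith directly: if $[e]_{\Phi} = [f]_{\Phi}$ then $e$ and $f$ are provably equal in Kleene algebra with tests over the primitive tests $\Phi$. The only thing left is to convert such a proof into an $\EQ{\KAT(\Phi)}$-derivation, which is immediate once we observe that every axiom used is already available: the Kleene-algebra axioms are among \eqref{KA:first}--\eqref{KA:least-L}, and the Boolean-algebra laws for tests are supplied by Proposition \ref{prop:useful}, specifically \eqref{eq:ECQ-bar}--\eqref{eq:Comm}. The Kozen--Smith completeness argument is the genuinely substantive ingredient imported here: it encodes $e$ and $f$ as automata over the alphabet of atoms and actions, determinises, and reduces guarded-language equivalence to equivalence of ordinary regular expressions, which is complete for Kleene algebra \cite{Kozen1994}.

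The main obstacle I anticipate is bookkeeping rather than mathematics: reconciling the fact that Kozen and Smith admit complementation of arbitrary test expressions, whereas $\mathbb{E}_{\KAT(\Phi)}$ permits $\,^{\ant}$ only on parameters. I would dispatch this by pushing negations inward using De Morgan laws, which are derivable from double negation \eqref{eq:DN-bar} together with the lattice identities, so that every Boolean combination of parameters is $\EQ{\KAT(\Phi)}$-equivalent to one in which $\,^{\ant}$ occurs only on parameters. This shows that the two presentations generate the same congruence on $\mathbb{RE}(\Phi^{\pm})$ and legitimises the appeal to \cite{KozenSmith1997}.
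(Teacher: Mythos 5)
The paper offers no proof of this theorem---it is imported wholesale from Kozen and Smith \cite{KozenSmith1997}---so your reduction to their guarded-string completeness result, together with the signature-matching and the pushing of negations onto parameters via De Morgan, is precisely the intended justification and is correct. One point worth making explicit: your (implicit) reading of $\EQ{\KAT(\Phi)}$ as the congruence generated by the axiom instances lying inside $\mathbb{RE}(\Phi^{\pm})$ with the parameters treated as opaque primitive tests, rather than as the restriction of the global $\equiv$ to that fragment, is the only reading under which the soundness direction holds, as the paper's own remark following the theorem (the example $\mathtt{a}^{\top}\cdot\mathtt{a}^{\bot\top}\equiv\mathtt{0}$ with $[\mathtt{a}^{\top}\cdot\mathtt{a}^{\bot\top}]_{\Gamma}\neq\emptyset$) confirms.
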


Note that Theorem \ref{thm:KAT-completeness} and Observation \ref{obs:EQ} imply a form of language completeness, namely, \[ [e]_{\Gamma} = [f]_{\Gamma} \implies e \equiv f\] for all $e, f \in \mathbb{RE}(\Gamma^{\pm}) \subseteq \mathbb{E}$. The problem with this result is that it is not accompanied by a suitable \textit{soundness} result. Since both $\mathtt{a}$ and $\mathtt{a}^{\bot}$ are testable, both $\mathtt{a}^{\top}$ and $\mathtt{a}^{\bot\top}$ are tests and $\mathtt{a^{\top}} \mathtt{a}^{\bot\top} \in \mathbb{A}\left( \Gamma \right)$ for $\Gamma = \{ \mathtt{a^{\top}}, \mathtt{a}^{\bot\top} \}$. Hence, $[\mathtt{a^{\top}} \cdot \mathtt{a}^{\bot\top}]_{\Gamma} \neq \emptyset$ even if $\mathtt{a^{\top}} \cdot \mathtt{a}^{\bot\top} \equiv \mathtt{0}$. We have to pay attention to consistency of atoms. Moreover the standard interpretation of the action alphabet is an issue too: $\mathtt{a}^{\bot} \cdot \mathtt{a} \equiv \mathtt{0}$, but $[\mathtt{a}^{\bot} \cdot \mathtt{a}]_{\Gamma} = \{ G \mathtt{a} H \mid G \tin \mathtt{a}^{\bot} \} \neq \emptyset$.   

\subsection{Algebras of consistently guarded languages}\label{sec:LAN-models-2}

We will not distinguish between a non-empty sequence of expressions $e_1 \ldots e_n$ and the expression $e_1 \cdot \ldots \cdot e_n$ (assuming some fixed bracketing). An atom $G$ is \emph{consistent} iff $G \nequiv \mathtt{0}$. The set of all consistent atoms over $\Phi$ will be denoted as $\mathbb{C}(\Phi)$. A \emph{consistently guarded string} over $\Phi$ is any guarded string $G_1 \mathtt{a_1} \ldots \mathtt{a}_{n-1} G_n$ where all $G_i \in \mathbb{C}(\Phi)$. The set of all consistently guarded strings over $\Phi$ is denoted as $\mathbb{CS}(\Phi)$.\footnote{Hollenberg \cite{Hollenberg1998} uses the same notation for \textit{consistent} guarded strings. Note that a guarded string can be consistently guarded without being consistent.} A crucial role in the proof of language completeness in the next section will be played by the following variant of $\mathbf{GL}(\Phi)$, based on $\mathbb{CS}(\Phi)$:

\begin{definition}
The \emph{algebra of consistently $\Phi$-guarded languages} is 
\[ \mathbf{CL}(\Phi) = \langle 2^{\mathbb{CS}(\Phi)}, 2^{\mathbb{C}(\Phi)}, \cup,\, \diamond, \,^{*}, \,^{\bot}, \,^{\dom}, \emptyset, \mathbb{C}(\Phi)\rangle\] where $K^{*} = \bigcup_{n \geq 0} K^{n}$ (assuming that $K^{0} = \mathbb{C}(\Phi)$ and $K^{n+1} = K^{n} \diamond K$) and 
\[ L^{\bot} = \{ G \in \mathbb{C}(\Phi) \mid  \{ G \} \diamond L = \emptyset\} 
\qquad
L^{\dom} = \{ G \in \mathbb{C}(\Phi) \mid  \{ G \} \diamond L \neq \emptyset\}
\, . \]
\end{definition}

Note that $\mathbf{CL}(\Phi)$ is in general not a subalgebra of $\mathbf{GL}(\Phi)$ since $\mathbb{A}(\Phi)$, the multiplicative identity of $\mathbf{GL}(\Phi)$, is not necessarily an element of $\mathbf{CL}(\Phi)$. However, clearly $\mathbb{CS}(\Phi) \subseteq \mathbb{GS}(\Phi)$.

\begin{definition}\label{def:canonical-int}
Let $\Gamma$ be a set of tests. The \emph{canonical $\Gamma$-interpretation} of $\mathbb{E}$ is the unique homomorphism $\llbracket-\rrbracket : \mathbb{E} \to \mathbf{CL}(\Gamma)$ such that
\[ \llbracket \mathtt{a}\rrbracket_{\Gamma} = \{ G \mathtt{a} H \in \mathbb{CS}(\Gamma) \mid G \mathtt{a} H \nequiv \mathtt{0} \}
\qquad\quad \llbracket \mathtt{p}\rrbracket_{\Gamma} = \{ G \mid G \in \mathbb{C}(\Gamma) \And G \leqq \mathtt{p} \}\] for all $\mathtt{a} \in \Sigma$ and $\mathtt{p} \in \Pi$.
\end{definition}

In the cases where $\Phi \in \{ \emptyset, \Pi \}$, the above definition reduces to the usual notions of the canonical language interpretation from Kleene algebra and Kleene algebra with tests, respectively.

\section{Language completeness}\label{sec:LAN}

In this section we establish a ``parametrized'' language completeness result for Kleene algebra with dynamic tests. Our proof draws heavily on Hollenberg's \cite{Hollenberg1998} completeness proof for an axiomatization of program equations in tests algebras, which is in turn inspired by Kozen and Smith's \cite{KozenSmith1997} proof of language completeness of Kleene algebra with tests. Our proof introduces some simplifications into Hollenberg's general strategy -- for instance, we do not use consistent guarded strings, only \textit{consistently guarded} strings. This has the advantage of not requiring a proof that the salient set of guarded strings is closed under fusion product.  

\begin{lemma}[Language soundness]\label{lem:L-sound}
Let $\Gamma$ be any finite set of parameters. For all $e, f \in \mathbb{E}$:
\[ e \equiv f \implies \llbracket e\rrbracket_{\Gamma} = \llbracket f\rrbracket_{\Gamma} \, .\]
\end{lemma}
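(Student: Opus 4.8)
The plan is to prove language soundness by the standard congruence argument: since $\Eq$ is defined as the \emph{smallest} congruence on $\mathbb{E}$ satisfying axioms \eqref{KA:first}--\eqref{ax:domant}, it suffices to check that the relation $e \approx_\Gamma f :\iff \llbracket e\rrbracket_\Gamma = \llbracket f\rrbracket_\Gamma$ is itself a congruence on $\mathbb{E}$ validating all of these axioms. Congruence is immediate because $\llbracket-\rrbracket_\Gamma$ is by definition a homomorphism into $\mathbf{CL}(\Gamma)$: equal images are preserved by every operation of the algebra, so $\approx_\Gamma$ is reflexive, symmetric, transitive, and compatible with $+,\cdot,\,^*,\,^\bot,\,^\dom$. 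Hence the whole task reduces to verifying that each defining axiom holds in the algebra $\mathbf{CL}(\Gamma)$ under the canonical interpretation, which then forces $\Eq\ \subseteq\ \approx_\Gamma$, i.e. the desired implication.

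Concretely, I would split the axioms into three groups. First, the pure Kleene-algebra identities \eqref{KA:first}--\eqref{KA:least-L}: these hold in any algebra of the form $2^{\mathbb{CS}(\Gamma)}$ because $\langle 2^{\mathbb{CS}(\Gamma)}, \cup, \diamond, \,^*, \emptyset, \mathbb{C}(\Gamma)\rangle$ is a Kleene algebra — fusion product $\diamond$ is associative with unit $\mathbb{C}(\Gamma)$, union is idempotent-commutative with unit $\emptyset$, $\diamond$ distributes over $\cup$ on both sides, and $K^* = \bigcup_{n\geq 0}K^n$ satisfies the two unfolding inclusions and the two least-fixpoint (induction) rules in the usual way. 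Here I would remark that the only subtlety compared with ordinary guarded strings is that the multiplicative unit is $\mathbb{C}(\Gamma)$ rather than $\mathbb{A}(\Gamma)$, but every routine Kleene-algebra verification goes through unchanged since all operations stay inside $\mathbb{CS}(\Gamma)$.

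The second group consists of the genuinely dynamic-test axioms \eqref{ax:LNC}--\eqref{ax:domant}. For these I would compute the left- and right-hand sides directly from the definitions of $L^\bot$ and $L^\dom$ in $\mathbf{CL}(\Gamma)$. Axiom \eqref{ax:LEM}, $e^\bot + e^\top \equiv \mathtt{1}$, holds because $L^\bot$ and $L^\dom$ partition $\mathbb{C}(\Gamma)$ by their defining conditions $\{G\}\diamond L = \emptyset$ versus $\{G\}\diamond L \neq \emptyset$; axiom \eqref{ax:LNC}, $e^\bot\cdot e \equiv \mathtt{0}$, holds because any $G \in L^\bot$ has $\{G\}\diamond L = \emptyset$; axiom \eqref{ax:domant}, $e^\dom \Eq e^{\ant\ant}$, reduces to the set identity $L^\dom = (L^\bot)^\bot$ on subsets of $\mathbb{C}(\Gamma)$. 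The locality axiom \eqref{ax:locality} and the fixpoint axiom \eqref{ax:fix} require the interplay between the guard structure of strings and the test semantics; in particular \eqref{ax:fix}, $\mathtt{p}^\top \equiv \mathtt{p}$, must be checked against Definition \ref{def:canonical-int}, where $\llbracket\mathtt{p}\rrbracket_\Gamma = \{G \mid G \leqq \mathtt{p}\}$, and one verifies that applying $\,^\dom$ to this set returns the same set of consistent atoms.

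I expect the main obstacle to lie in the locality axiom \eqref{ax:locality}, $(e\cdot f)^\bot \equiv (e\cdot f^\top)^\bot$, together with the base-case clause for $\llbracket\mathtt{a}\rrbracket_\Gamma$ and $\llbracket\mathtt{p}\rrbracket_\Gamma$ in Definition \ref{def:canonical-int}: because the canonical interpretation bakes the consistency/annihilation constraints $G\mathtt{a}H \nequiv \mathtt{0}$ and $G \leqq \mathtt{p}$ into the atomic cases, one must confirm that the homomorphic extension actually lands in $\mathbf{CL}(\Gamma)$ and that $\{G\}\diamond\llbracket e\cdot f\rrbracket_\Gamma$ is empty exactly when $\{G\}\diamond\llbracket e\cdot f^\top\rrbracket_\Gamma$ is empty. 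The delicate point is that restricting to consistently guarded strings (rather than Hollenberg's consistent guarded strings) means the set of strings is \emph{not} closed under fusion in the strong sense Hollenberg uses, so I would need to argue that the weaker closure still suffices for these annihilation-style axioms — tracking carefully that whenever a fusion $xG \diamond Hy$ exists and both factors lie in $\mathbb{CS}(\Gamma)$, the guard $G$ is consistent, which is exactly what keeps $L^\bot$ and $L^\dom$ well-behaved. Once locality is settled, the remaining axioms are routine set-theoretic computations, and the homomorphism property closes the argument.
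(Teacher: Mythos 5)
Your proposal is correct and follows essentially the same route as the paper: the paper's proof simply notes that it suffices to check that the axioms \eqref{ax:LNC}--\eqref{ax:fix} are valid under every canonical interpretation (leaving the verification as an easy exercise), which is exactly the reduction you carry out, just in more detail. Your closing worry is unfounded---$\mathbb{CS}(\Gamma)$ \emph{is} closed under fusion product, since every atom of $xG \diamond Gy$ is an atom of one of the factors---but this does not affect the correctness of your argument.
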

\begin{proof}
It is sufficient to show that axioms (\ref{ax:LNC}--\ref{ax:fix}) are valid under every $\Phi$-canonical interpretation. This is left to the reader as an easy exercise.
\end{proof}

For the rest of this section, fix an arbitrary finite set of parameters. We define $\mathbb{C}(\f) := \sum \{ G \in \mathbb{C}(\Gamma) \mid G \leqq \f \}$, for any $\f$.

\begin{lemma}\label{lem:LC-truth-1}
If $\f \in \Gamma$, then $\f \equiv \mathbb{C}(\f)$.
\end{lemma}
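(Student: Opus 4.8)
The plan is to treat the atoms over $\Gamma$ as the atoms of a Boolean algebra of tests and to show that $\f$ is equivalent to the join of the consistent atoms lying below it. The crucial feature is that, since $\f \in \Gamma$, every atom $G$ \emph{decides} $\f$: exactly one of $G \triangleleft \f$ and $G \triangleleft \f^{\bot}$ holds (this is where the hypothesis $\f \in \Gamma$ enters). Throughout I would use that the components $\ff_i$ of an atom are formulas, so that by idempotence \eqref{eq:Idem} and commutativity \eqref{eq:Comm} an atom $G$ behaves as an idempotent element ($G \cdot G \equiv G$) that commutes with every formula and satisfies $G \leqq \mathtt{1}$ (each factor is $\leqq \mathtt{1}$ and multiplication is monotone).

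First I would establish a ``partition of unity''. Writing $\Gamma = \{\f_1, \dots, \f_n\}$, law \eqref{eq:LEM-bar} gives $\mathtt{1} \equiv \f_i + \f_i^{\bot}$ for each $i$, hence $\mathtt{1} \equiv \prod_{i=1}^{n}(\f_i + \f_i^{\bot})$. Expanding this product by the two distributivity laws yields exactly $\sum_{G \in \mathbb{A}(\Gamma)} G$, since distribution preserves the left-to-right order of the factors and each choice of one disjunct per position produces exactly one atom. The inconsistent atoms are $\equiv \mathtt{0}$ and are absorbed by $+$, so I obtain $\mathtt{1} \equiv \sum_{G \in \mathbb{C}(\Gamma)} G$.

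Multiplying on the left by $\f$ and distributing gives $\f \equiv \f \cdot \mathtt{1} \equiv \sum_{G \in \mathbb{C}(\Gamma)} \f \cdot G$. I then evaluate each summand using the decision property: if $G \triangleleft \f$, commuting $\f$ next to its occurrence in $G$ and applying idempotence \eqref{eq:Idem} gives $\f \cdot G \equiv G$; if instead $G \triangleleft \f^{\bot}$, the same commuting together with $\f \cdot \f^{\bot} \equiv \mathtt{0}$ \eqref{eq:ECQ-bar} gives $\f \cdot G \equiv \mathtt{0}$. Thus the surviving summands are precisely the consistent atoms with $G \triangleleft \f$, each contributing $G$.

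It remains to identify the index set $\{G \in \mathbb{C}(\Gamma) \mid G \triangleleft \f\}$ with the set $\{G \in \mathbb{C}(\Gamma) \mid G \leqq \f\}$ defining $\mathbb{C}(\f)$, and this equivalence is the real content of the lemma. For the easy direction, $G \triangleleft \f$ gives $\f \cdot G \equiv G$, so from $G \leqq \mathtt{1}$ and monotonicity, $G \equiv \f \cdot G \leqq \f \cdot \mathtt{1} \equiv \f$. For the converse I would use consistency: from $G \leqq \f$, i.e. $G + \f \equiv \f$, distributing $\f \cdot G \equiv (G + \f) \cdot G \equiv G + \f \cdot G$ shows $G \leqq \f \cdot G$, which with $\f \cdot G \leqq G$ yields $\f \cdot G \equiv G$; were $G \not\triangleleft \f$ we would have $G \triangleleft \f^{\bot}$ and hence $G \equiv \f \cdot G \equiv \mathtt{0}$, contradicting $G \in \mathbb{C}(\Gamma)$. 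So for consistent $G$ we have $G \triangleleft \f \iff G \leqq \f$, the two sums coincide, and $\f \equiv \mathbb{C}(\f)$. The only delicate point is this last equivalence, where consistency is exactly what excludes the spurious atoms that would otherwise sit below $\f$.
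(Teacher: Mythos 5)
Your proof is correct, but it takes a genuinely different route from the paper's. The paper notes that $\mathbb{C}(\f) \leqq \f$ is immediate and proves the converse inequality by contradiction: from $\f \not\leqq \mathbb{C}(\f)$ it extracts, via Lindenbaum's Lemma applied to the Boolean algebra of formulas, a maximal consistent set $F$ containing $\f$ but omitting $\mathbb{C}(\f)$, and then reads off from $F$ a consistent atom $H$ with $H \leqq \f$, contradicting the choice of $F$. You instead give a direct, purely equational computation: the partition of unity $\mathtt{1} \equiv \sum_{G \in \mathbb{C}(\Gamma)} G$ obtained by expanding $\prod_i (\f_i + \f_i^{\ant})$ and discarding inconsistent atoms, followed by the case analysis $\f \cdot G \equiv G$ or $\f \cdot G \equiv \mathtt{0}$ according to whether $G \tin \f$, and finally the identification of $\{G \in \mathbb{C}(\Gamma) \mid G \tin \f\}$ with $\{G \in \mathbb{C}(\Gamma) \mid G \leqq \f\}$, where consistency is correctly isolated as the point excluding spurious atoms below $\f$. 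All the algebraic facts you invoke (\eqref{eq:LEM-bar}, \eqref{eq:ECQ-bar}, \eqref{eq:Idem}, \eqref{eq:Comm}, distributivity, $G \leqq \mathtt{1}$, and monotonicity of $\cdot$) are available from Proposition \ref{prop:useful} and Definition \ref{def:KADT-congruence}, and finiteness of $\Gamma$ makes the product expansion legitimate. Your argument is more elementary and entirely finitary — it avoids maximal consistent sets altogether and in fact yields the slightly stronger normal form $\f \equiv \sum\{G \in \mathbb{C}(\Gamma) \mid G \tin \f\}$ — whereas the paper's Lindenbaum-style argument is shorter to state and reuses machinery (maximal consistent sets of formulas) in the spirit of the modal-logic completeness proofs that the rest of the appendix follows.
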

\begin{proof}
It is clear that $\mathbb{C}(\f) \leqq \f$. If $\f \not\leqq \mathbb{C}(\f)$, then $\f \cdot \mathbb{C}(\f)^{\bot} \nequiv \mathtt{0}$. But then there is a maximal consistent set of formulas $F$ such that $\f \in F$ and $\mathbb{C}(\f) \notin F$.\footnote{$F$ is a \emph{consistent} set of formulas iff there is no finite $\{ \chi_1, \ldots, \chi_n \} \subseteq F$ such that $\chi_1 \ldots \chi_n \equiv \mathtt{0}$. $F$ is a \emph{maximal consistent} set of formulas if $F$ is consistent but $\ff \notin F$ implies that $F \cup \{ \ff \}$ is not consistent. Since formulas satisfy the axioms of Boolean algebra by Proposition \ref{prop:useful}, every consistent set of formulas  is extended by a maximal consistent set by Lindenbaum's Lemma.} But then $G \notin F$ for all $G \in \mathbb{C}(\Gamma)$ such that $G \leqq \f$. Now take the sequence of tests $H = \ff_1 \ldots \ff_n$ where $\ff_i = \gamma_i$ if $\gamma_i \in F$ and $\ff_i = \gamma_i^{\bot}$ if $\gamma_i \notin F$, for all $\gamma_i \in \Gamma$. It is clear that $H$ is a consistent atom over $\Gamma$ and $H \leqq \f$ since $\f \in \Gamma$. This contradicts the properties of $F$, and so we conclude that $\f \leqq \mathbb{C}(\f)$.
\end{proof}

\begin{lemma}\label{lem:LC-truth}
For all $\f \in \Gamma$, $\llbracket \f\rrbracket_{\Gamma} = \{ G \in \mathbb{C}(\Gamma) \mid G \leqq \f \}$. 
\end{lemma}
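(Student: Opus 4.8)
The plan is to prove $\llbracket \f \rrbracket_{\Gamma} = \{ G \in \mathbb{C}(\Gamma) \mid G \leqq \f \}$ by structural induction on $\f \in \Gamma$, exploiting the fact that $\llbracket - \rrbracket_{\Gamma}$ is a homomorphism into $\mathbf{CL}(\Gamma)$ together with the characterization of $\f$ supplied by Lemmas \ref{lem:LC-truth-1} and \ref{lem:LC-truth}. Since every $\f \in \Gamma$ is a parameter, it is either an atomic proposition $\mathtt{p} \in \Pi$ or a test $e^{\dom}$ with $e$ testable. First I would dispose of the base cases. For $\f = \mathtt{p}$ the claim is immediate from the definition of the canonical interpretation, which sets $\llbracket \mathtt{p}\rrbracket_{\Gamma} = \{ G \in \mathbb{C}(\Gamma) \mid G \leqq \mathtt{p}\}$. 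The constants $\mathtt{1}^{\dom}, \mathtt{0}^{\dom}$ are handled directly, using $\mathtt{1}^{\dom} \equiv \mathtt{1}$ and $\mathtt{0}^{\dom} \equiv \mathtt{0}$ (from Proposition \ref{prop:useful}) so that every consistent atom is below $\mathtt{1}^{\dom}$ and none is below $\mathtt{0}^{\dom}$.

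The real content is the inductive step for tests $\f = e^{\dom}$. Here I would unfold the homomorphism: $\llbracket e^{\dom}\rrbracket_{\Gamma} = \llbracket e\rrbracket_{\Gamma}^{\dom} = \{ G \in \mathbb{C}(\Gamma) \mid \{G\} \diamond \llbracket e\rrbracket_{\Gamma} \neq \emptyset\}$ by the definition of the domain operator in $\mathbf{CL}(\Gamma)$. So the goal reduces to showing, for each consistent atom $G$, that
\[ \{G\} \diamond \llbracket e\rrbracket_{\Gamma} \neq \emptyset \quad\iff\quad G \leqq e^{\dom} \, . \]
The left-to-right direction should follow by exhibiting a witnessing consistently guarded string $G \mathtt{a}_1 G_2 \ldots G_m$ in $\llbracket e\rrbracket_{\Gamma}$ and arguing, using language soundness (Lemma \ref{lem:L-sound}) or directly the equational axioms, that a nonempty fusion product forces $G \cdot e^{\dom} \nequiv \mathtt{0}$, whence $G \leqq e^{\dom}$ by consistency of $G$ and the fact that $G \cdot G \equiv G$ (so $G$ is an atom that is either entirely below $e^{\dom}$ or entirely below $(e^{\dom})^{\bot} = e^{\bot}$, by \eqref{ax:LEM} and \eqref{ax:domant}). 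The converse direction is the delicate one: from $G \leqq e^{\dom}$ I must produce an actual consistently guarded string starting in $G$ and lying in $\llbracket e\rrbracket_{\Gamma}$, i.e. a ``truth lemma'' for expressions that witnesses halting concretely at the language level.

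I expect the main obstacle to be exactly this converse, nonemptiness, direction, because it requires more than the homomorphism property of $\llbracket - \rrbracket_{\Gamma}$: one needs a companion statement about arbitrary expressions $e$ (not just parameters) relating $G \cdot e \nequiv \mathtt{0}$ to the existence of a string in $\llbracket e\rrbracket_{\Gamma}$ beginning with $G$. This strongly suggests that Lemma \ref{lem:LC-truth} is not really provable in isolation by induction on the parameter $\f$ alone, but should be folded into a simultaneous induction with a stronger ``execution/truth'' lemma about all expressions — the analogue of the existence-of-a-run argument in $\PDL$ completeness and in Kozen--Smith. Accordingly, rather than proving it standalone I would either (i) strengthen the induction hypothesis to cover all $e \in \mathbb{E}$ via the statement ``$G$ is the first atom of some string in $\llbracket e\rrbracket_{\Gamma}$ iff $G \leqq e^{\dom}$'', closing the loop between the parameter case and the general expression case, or (ii) defer the hard direction to the subsequent language-completeness machinery and here prove only the easy containment $\llbracket \f\rrbracket_{\Gamma} \subseteq \{G \mid G \leqq \f\}$ by language soundness, reserving the reverse inclusion for the main truth lemma. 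The Fischer--Ladner closure conditions of Definition \ref{def:FL}, which ensure the relevant $e^{\dom}$-subformulas stay within $\Gamma$, will be what keeps this induction well-founded and the witnessing strings consistently guarded.
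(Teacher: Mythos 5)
Your diagnosis is right on the key point: the lemma cannot be proved by a standalone induction on the parameter $\f$, and one must fold it into a simultaneous induction over arbitrary expressions $e$, with the Fischer--Ladner closure keeping that induction inside $\Gamma$. This is exactly what the paper does (via a canonical relational model $C(\Gamma)$ on consistent atoms and the auxiliary Lemma \ref{lem:LC-truth-2}, in the style of Kozen--Parikh). But your proposal stops at the diagnosis: option (ii) simply defers the hard inclusion, and option (i) proposes a strengthened hypothesis that is not actually strong enough. The unparametrized biconditional ``$G$ is the first atom of some string in $\llbracket e\rrbracket_{\Gamma}$ iff $G \leqq e^{\dom}$'' does not survive the induction step for $e \cdot f$: there you learn that some intermediate atom $I$ satisfies $I \leqq f^{\dom}$, and you need to conclude $G \leqq (e \cdot f^{\dom})^{\dom} \equiv (e\cdot f)^{\dom}$, which requires a version of the hypothesis \emph{parametrized by a formula}, namely claim (ii) of Lemma \ref{lem:LC-truth-2}: if $(e\cdot\f)^{\dom} \in \Gamma$ and $G \xto{e} H \leqq \f$ then $G \leqq (e\cdot\f)^{\dom}$. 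That extra parameter is precisely what the closure conditions of Definition \ref{def:FL} (the clauses for $((e\cdot f)\cdot\f)^{\dom}$ and $(e^{*}\cdot\f)^{\dom}$) are designed to support, and the $e^{*}$ case additionally needs the least-fixed-point argument through \eqref{eq:Star-2} (building the invariant set $X$ of reachable atoms), which your sketch does not anticipate.

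You also misjudge which direction is ``easy''. The implication from $\{G\}\diamond\llbracket e\rrbracket_{\Gamma} \neq \emptyset$ to $G \leqq e^{\dom}$ does not follow from language soundness or the axioms alone: a witnessing string in $\llbracket e\rrbracket_{\Gamma}$ is built from atomic steps $G\mathtt{a}H \nequiv \mathtt{0}$, and turning such a path into the single inequality $G \leqq e^{\dom}$ is again claim (ii) of the simultaneous induction (instantiated with $\f = \mathtt{1}$, using the closure clause $e^{\dom} \in \Gamma$ only if $(e\cdot\mathtt{1})^{\dom} \in \Gamma$). Conversely, producing a witness from $G \leqq e^{\dom}$ uses Lemma \ref{lem:LC-truth-1} (to get $GeH \nequiv \mathtt{0}$ for some consistent $H$, since otherwise $G$ would be inconsistent) together with claim (i). So both inclusions rest on the two-part induction; neither is dispatched by the homomorphism property or by soundness. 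In short: right strategy identified, but the actual inductive invariant --- the load-bearing part of the proof --- is missing or too weak as stated.
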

\begin{proof}
The proof of this lemma is based on Kozen and Parikh's completeness proof for \textsf{PDL} \cite{KozenParikh1981}. See the appendix, Section \ref{a:LC-truth}. 
\end{proof}

\begin{definition}\label{def:hat}
We define $\:\hat{ }\:$ as the smallest function $\mathbb{RE}(\Gamma) \to \mathbb{RE}(\Gamma^{\pm})$ such that (for $\f \in \Gamma$ and $\mathtt{a} \in \Sigma$)
\[
\hat{\f} = \mathbb{C}(\f) \qquad\quad
\hat{\mathtt{a}} = \sum \llbracket \mathtt{a}\rrbracket_{\Gamma} \qquad\quad
\hat{\mathtt{1}} = \sum \mathbb{C}(\Gamma)
\]
and that commutes with $\mathtt{0}$, $\cdot, +$ and $\,^{*}$.
\end{definition}

\begin{lemma}\label{lem:hat-1}
For all $e \in \mathbb{RE}(\Gamma)$, $e \equiv \hat{e}$.
\end{lemma}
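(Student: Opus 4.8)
The plan is to prove the claim by structural induction on $e \in \mathbb{RE}(\Gamma)$, following the recursive clauses defining $\:\hat{ }\:$. The inductive cases are immediate: since $\:\hat{ }\:$ commutes with $+$, $\cdot$ and $\,^{*}$ and $\equiv$ is a congruence, the induction hypotheses $e_i \equiv \hat{e}_i$ propagate directly through each operation (e.g.\ $e_1 \cdot e_2 \equiv \hat{e}_1 \cdot \hat{e}_2 = \widehat{e_1 \cdot e_2}$). Likewise the base case $e = \mathtt{0}$ is trivial, and the base case $e = \f$ for a parameter $\f \in \Gamma$ is exactly Lemma \ref{lem:LC-truth-1}. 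So the real content lies in the two remaining base cases, $e = \mathtt{1}$ and $e = \mathtt{a}$.

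For $e = \mathtt{1}$, I would first establish the identity $\sum \mathbb{C}(\Gamma) \equiv \mathtt{1}$. Writing $\Gamma = \{\f_1, \ldots, \f_n\}$, each parameter satisfies $\f_i + \f_i^{\bot} \equiv \f_i^{\top} + \f_i^{\bot} \equiv \mathtt{1}$ by \eqref{eq:FTop} and \eqref{ax:LEM}. Hence $\mathtt{1} \equiv \prod_{i=1}^{n}(\f_i + \f_i^{\bot})$, and expanding this product by distributivity and associativity yields $\sum_{G \in \mathbb{A}(\Gamma)} G$, since the $2^{n}$ summands of the expansion are precisely the atoms $\ff_1 \ldots \ff_n$ over $\Gamma$. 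Every inconsistent atom is $\equiv \mathtt{0}$ by definition and so is absorbed, leaving $\mathtt{1} \equiv \sum \mathbb{C}(\Gamma) = \hat{\mathtt{1}}$.

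For $e = \mathtt{a}$, I would use the $\mathtt{1}$ case to write $\mathtt{a} \equiv \mathtt{1} \cdot \mathtt{a} \cdot \mathtt{1} \equiv \left( \sum \mathbb{C}(\Gamma)\right) \cdot \mathtt{a} \cdot \left( \sum \mathbb{C}(\Gamma)\right)$. Distributing gives $\sum_{G, H \in \mathbb{C}(\Gamma)} G \mathtt{a} H$, and every summand with $G \mathtt{a} H \equiv \mathtt{0}$ drops out, leaving exactly $\sum \{ G \mathtt{a} H \mid G, H \in \mathbb{C}(\Gamma),\ G \mathtt{a} H \nequiv \mathtt{0}\} = \sum \llbracket \mathtt{a}\rrbracket_{\Gamma} = \hat{\mathtt{a}}$.

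I expect no serious obstacle; once Lemma \ref{lem:LC-truth-1} is available, the argument is largely bookkeeping. The one point demanding care is the $\mathtt{1}$ case, where one must check that the distributive expansion of $\prod_i(\f_i + \f_i^{\bot})$ enumerates precisely $\mathbb{A}(\Gamma)$ and that the inconsistent atoms genuinely vanish into $\mathtt{0}$. Everything downstream — in particular the $\mathtt{a}$ case and the three inductive steps — then follows from distributivity together with the congruence properties of $\equiv$.
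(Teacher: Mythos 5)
Your proof is correct and follows essentially the same route as the paper: induction on the structure of $e$, with the parameter case discharged by Lemma \ref{lem:LC-truth-1}, the case $\mathtt{a}$ handled via $\mathtt{a} \equiv \mathtt{1}\cdot\mathtt{a}\cdot\mathtt{1} \equiv \left(\sum\mathbb{C}(\Gamma)\right)\cdot\mathtt{a}\cdot\left(\sum\mathbb{C}(\Gamma)\right)$ followed by distributing and dropping the $\equiv\mathtt{0}$ summands, and the inductive steps falling out of the fact that $\:\hat{~}\:$ commutes with $\mathtt{0}$, $+$, $\cdot$ and $\,^{*}$. The only divergence is the case $e=\mathtt{1}$, where the paper simply cites Lemma \ref{lem:LC-truth-1} while you derive $\mathtt{1}\equiv\sum\mathbb{C}(\Gamma)$ directly by expanding $\prod_i(\f_i+\f_i^{\bot})$ via \eqref{ax:LEM}, \eqref{eq:FTop} and distributivity; this is a sound and, if anything, more self-contained treatment of that sub-step.
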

\begin{proof}
Induction on the complexity of $e$. The base case $e = \f \in \Gamma$ holds by Lemma \ref{lem:LC-truth-1}. The base case $e = \mathtt{a} \in \Sigma$ is established as follows:
\begin{align*}
\mathtt{a} & \equiv \mathtt{1} \cdot \mathtt{a} \cdot \mathtt{1}
	\equiv \mathbb{C}(\mathtt{1}) \cdot \mathtt{a} \cdot \mathbb{C}(\mathtt{1})\\
	& \equiv \sum \mathbb{C}(\Gamma) \cdot \mathtt{a} \cdot \sum \mathbb{C}(\Gamma)\\
	& \equiv \sum \{ G \mathtt{a} H \mid G, H \in \mathbb{C}(\Gamma) \}\\
	& \equiv \sum \{ G \mathtt{a} H \mid G, H \in \mathbb{C}(\Gamma) \And G \mathtt{a} H \nequiv \mathtt{0} \}\\
	& \equiv \sum \llbracket \mathtt{a}\rrbracket_{\Gamma} \equiv \hat{\mathtt{a}}
\end{align*}
(The second equivalence relies on Lemma \ref{lem:LC-truth-1}.) The induction step for $\mathtt{1}$ follows from Lemma \ref{lem:LC-truth-1}. The rest of the proof is easy since $\:\hat{\,}\:$ commutes with $\mathtt{0}$, $\cdot$, $+$ and $\,^{*}$. 
\end{proof}

\begin{lemma}\label{lem:GS}
$[w]_{\Gamma} = \{ w \}$ for all $w \in \mathbb{GS}(\Gamma)$.
\end{lemma}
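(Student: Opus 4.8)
The plan is to peel the homomorphism apart and reduce everything to two observations: fusion product of atoms is intersection, and the standard interpretation of an action letter realises every atom-to-atom boundary. I would then run an induction on the number of action letters occurring in $w$.

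First I would treat the case of a single atom. Fix the ordering $\Gamma = \{\f_1, \ldots, \f_k\}$, so that an atom is a word $G = \ff_1 \cdots \ff_k$ with $\ff_i \in \{\f_i, \f_i^\bot\}$, and recall that, viewing an atom as a guarded string of ``length one'', the fusion product $G \diamond H$ is defined exactly when $G = H$ and then equals $G$; hence the lifted fusion product on subsets of $\mathbb{A}(\Gamma)$ coincides with intersection. Since $[-]_\Gamma$ is a homomorphism and we identify $G$ with the product $\ff_1 \cdot \cdots \cdot \ff_k$, I get $[G]_\Gamma = \bigcap_{i \leq k} [\ff_i]_\Gamma$. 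By definition $[\ff_i]_\Gamma = \{H \in \mathbb{A}(\Gamma) \mid H \tin \ff_i\}$, and since $H \tin \f_i^\bot$ iff not $H \tin \f_i$, this is precisely the set of atoms whose $i$-th literal equals $\ff_i$. An atom therefore lies in $[\ff_i]_\Gamma$ for every $i$ iff it agrees with $G$ in each coordinate, i.e. iff it equals $G$; thus $[G]_\Gamma = \{G\}$.

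With the atom case in hand I would induct on the number of action letters. The base case is a single atom, already handled. For the inductive step, write $w = w' \mathtt{a} G_n$ with $w'$ ending in the atom $G_{n-1}$, so that $[w]_\Gamma = [w']_\Gamma \diamond [\mathtt{a}]_\Gamma \diamond [G_n]_\Gamma = \{w'\} \diamond \{G \mathtt{a} H \mid G,H \in \mathbb{A}(\Gamma)\} \diamond \{G_n\}$ using the induction hypothesis and the atom case. The first fusion is defined only for the summands with $G = G_{n-1}$ and yields $\{w' \mathtt{a} H \mid H \in \mathbb{A}(\Gamma)\}$; the second fusion is then defined only for $H = G_n$ and yields $\{w' \mathtt{a} G_n\} = \{w\}$. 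This closes the induction and gives $[w]_\Gamma = \{w\}$ for every $w \in \mathbb{GS}(\Gamma)$.

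I do not expect a genuine obstacle here: the content is the atom computation via ``fusion $=$ intersection'', and the rest is careful bookkeeping of the boundary-atom matching imposed by the partiality of $\diamond$. The only points that need a moment of care are that $[-]_\Gamma$ is the homomorphism into $\mathbf{GL}(\Gamma)$ (so the literals $\ff_i \in \Gamma^\pm$ and the letters $\mathtt{a} \in \Sigma$ are the generators being multiplied), and the convention identifying a guarded string with the corresponding product of its literals and letters, which is what licenses applying $[-]_\Gamma$ to $w$ at all.
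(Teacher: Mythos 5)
Your proof is correct. The paper states Lemma \ref{lem:GS} without any proof (it is the standard observation underlying Kozen--Smith-style guarded-string arguments), so there is no official argument to compare against; your write-up supplies exactly the expected details. The two key points --- that lifted fusion on sets of atoms is intersection, so $[G]_{\Gamma}=\bigcap_i[\ff_i]_{\Gamma}=\{G\}$, and that the boundary-matching partiality of $\diamond$ collapses $\{w'\}\diamond[\mathtt{a}]_{\Gamma}\diamond\{G_n\}$ to the singleton $\{w\}$ --- are both handled correctly, and you rightly flag the convention identifying a guarded string with the corresponding product as what licenses applying the homomorphism $[-]_{\Gamma}$ to $w$ in the first place.
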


\begin{lemma}\label{lem:hat-2}
$\llbracket e\rrbracket_{\Gamma} = [\hat{e}\,]_{\Gamma}$ for all $e \in \mathbb{RE}(\Gamma)$.
\end{lemma}
\begin{proof}
Induction on the structure of $e$, relying heavily on Lemma \ref{lem:GS}. The base case $e = \f \in \Gamma$ is established as follows (we omit the subscript $\Gamma)$:
\begin{align*}
[\hat{\f}\,] & = \left [ \sum \{ G \mid G \in \mathbb{C}(\Gamma) \And G \leqq \f \} \right ]\\
	& = \bigcup \{ [G] \mid G \in \mathbb{C}(\Gamma) \And G \leqq \f \}\\
	& = \{ G \mid G \in \mathbb{C}(\Gamma) \And G \leqq \f \} = 
	\llbracket \f \rrbracket
\end{align*}
The third equation holds thanks to Lemma \ref{lem:GS} and the last equation holds thanks to Lemma \ref{lem:LC-truth}. The base case for $e = \mathtt{a} \in \Sigma$ is established as follows:
\begin{align*}
[\hat{\mathtt{a}}\,] & = \left [ \sum\{ G \mathtt{a} H \mid G \mathtt{a} H \nequiv \mathtt{0} \} \right ]\\
	& = \bigcup \{ [G \mathtt{a} H] \mid G \mathtt{a} H \nequiv \mathtt{0} \}\\
	& = \{ G \mathtt{a} H \mid G \mathtt{a} H \nequiv \mathtt{0} \}
	= \llbracket \mathtt{a}\rrbracket
\end{align*}
The third equation holds thanks to Lemma \ref{lem:GS}. The induction step for $\mathtt{1}$ is established as follows:
\begin{align*}
[\hat{\mathtt{1}}\,] & = \left [ \sum \mathbb{C}(\Gamma)\right ]
	 = \bigcup \{ [G] \mid G \in \mathbb{C}(\Gamma \}\\
	& = \mathbb{C}(\Gamma) = \llbracket \mathtt{1}\rrbracket
\end{align*}
The third equation holds thanks to Lemma \ref{lem:GS} and the last equation holds thanks to Lemma \ref{lem:LC-truth}. The rest is established easily using the induction hypothesis since $\:\hat{\:}\,$ is assumed to commute with $\mathtt{0}$, $\cdot$, $+$ and $\,^{*}$.
\end{proof}

\begin{theorem}[Language completeness]\label{thm:LC}
Let $E \subseteq \mathbb{E}$ be finite and let $\Gamma$ be the FL-closure of $\mathrm{St}(E)$. Then, for all $e,f \in E$: 
\[e \equiv f \iff \llbracket e\rrbracket_{\Gamma} = \llbracket f\rrbracket_{\Gamma}\]
\end{theorem}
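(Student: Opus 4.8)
The forward implication is exactly the language soundness already recorded in Lemma~\ref{lem:L-sound}, so all the work lies in the converse. My plan is to reduce an arbitrary pair $e,f\in E$ to regular expressions over $\Sigma\cup\Gamma$, and then run them through the map $\hat{\ }$ together with the Kozen--Smith completeness theorem for $\KAT(\Gamma)$. The genuinely hard content (the Fischer--Ladner/filtration argument) will already be packaged inside the earlier lemmas, so the theorem itself is essentially an assembly.

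First I would set up a \emph{flattening} $e\mapsto e^{\flat}$ carrying each expression in $E$ to an $\Eq$-equivalent member of $\mathbb{RE}(\Gamma)$. The point is that $\,^{\ant}$ and $\,^{\dom}$ only ever produce formulas, so every occurrence of a dynamic-test operator sits inside some maximal subexpression of $e$ that is a formula; replacing each such maximal formula-subexpression $\f$ by its starred form $\f?$ therefore removes all occurrences of $\,^{\ant}$ and $\,^{\dom}$ and leaves a regular expression built only from $\Sigma$ and parameters. Each $\f?$ is a test, hence a parameter, and since $\f\in\mathrm{Sf}(e)$ we have $\f?\in\mathrm{St}(e)\subseteq\mathrm{St}(E)\subseteq\Gamma$, so the atoms used are genuinely in $\Gamma$. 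The replacement preserves value under $\Eq$ because $\f?\Eq\f$ — this is immediate from \eqref{ax:domant} (which makes the starred form equivalent to $\f^{\top}$) together with $\f^{\top}\Eq\f$ from \eqref{eq:FTop} — and congruence lets the individual local replacements be combined, giving $e^{\flat}\in\mathbb{RE}(\Gamma)$ with $e\Eq e^{\flat}$.

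With the flattening in hand the rest is a chain of equivalences. Assuming $\llbracket e\rrbracket_{\Gamma}=\llbracket f\rrbracket_{\Gamma}$, soundness (Lemma~\ref{lem:L-sound}) gives $\llbracket e^{\flat}\rrbracket_{\Gamma}=\llbracket e\rrbracket_{\Gamma}=\llbracket f\rrbracket_{\Gamma}=\llbracket f^{\flat}\rrbracket_{\Gamma}$, and Lemma~\ref{lem:hat-2} converts this into an identity of standard interpretations, $[\hat{e^{\flat}}\,]_{\Gamma}=[\hat{f^{\flat}}\,]_{\Gamma}$. Since $\hat{e^{\flat}},\hat{f^{\flat}}\in\mathbb{RE}(\Gamma^{\pm})=\mathbb{E}_{\KAT(\Gamma)}$ and $\Gamma$ is a finite set of parameters (the FL-closure of the finite set of tests $\mathrm{St}(E)$), Theorem~\ref{thm:KAT-completeness} yields $\hat{e^{\flat}}\EQ{\KAT(\Gamma)}\hat{f^{\flat}}$, and hence $\hat{e^{\flat}}\Eq\hat{f^{\flat}}$ by Observation~\ref{obs:EQ}. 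Finally Lemma~\ref{lem:hat-1} gives $e^{\flat}\Eq\hat{e^{\flat}}$ and $f^{\flat}\Eq\hat{f^{\flat}}$, so chaining everything together produces $e\Eq e^{\flat}\Eq\hat{e^{\flat}}\Eq\hat{f^{\flat}}\Eq f^{\flat}\Eq f$, as required.

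I expect the main obstacle to be the flattening step rather than the assembly: one must verify carefully that \emph{every} occurrence of $\,^{\ant}$ and $\,^{\dom}$ is indeed absorbed into a maximal formula-subexpression, and that the starred form of that subexpression genuinely belongs to $\Gamma$ (which relies on $\mathrm{St}(E)\subseteq\Gamma$, guaranteed because $\Gamma$ is the FL-closure of $\mathrm{St}(E)$), so that $e^{\flat}$ is simultaneously $\Eq$-equivalent to $e$ and a bona fide element of $\mathbb{RE}(\Gamma)$. The deeper difficulty — the Kozen--Parikh-style truth lemma that evaluates parameters of $\Gamma$ correctly under the canonical interpretation — has already been isolated in Lemma~\ref{lem:LC-truth}, whose proof is where the Fischer--Ladner closure really does its job; here it enters only indirectly, through Lemma~\ref{lem:hat-2}, so the remaining steps of the theorem are routine.
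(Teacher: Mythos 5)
Your proof is correct and follows essentially the same route as the paper's: soundness for the left-to-right direction, and for the converse the chain through Lemma~\ref{lem:hat-2}, Theorem~\ref{thm:KAT-completeness}, Observation~\ref{obs:EQ} and Lemma~\ref{lem:hat-1}. Your explicit flattening $e \mapsto e^{\flat}$ is a welcome refinement rather than a different approach: the paper applies $\hat{\,}$ directly to $e, f \in \mathbb{E}$ even though Definition~\ref{def:hat} only defines it on $\mathbb{RE}(\Gamma)$, so your step (replacing maximal formula-subexpressions by their starred forms, which lie in $\Gamma$ because $\Gamma \supseteq \mathrm{St}(E)$) makes precise an identification the paper leaves implicit.
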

\begin{proof}
The implication from left to right follows from Lemma \ref{lem:L-sound}. The converse implication is established as follows:
\begin{center}
$\llbracket e\rrbracket_{\Gamma} = \llbracket f\rrbracket_{\Gamma}
	\:\xTo{\text{Lemma \ref{lem:hat-2}}\,}\: 
[\hat{e}\,]_{\Gamma} = [\hat{f}\,]_{\Gamma}
	\:\xTo{\text{Theorem \ref{thm:KAT-completeness}}\,}\: 
\hat{e} \EQ{\KAT(\Gamma)} \hat{f}$\\[2mm]
$\hat{e} \EQ{\KAT(\Gamma)} \hat{f} 
	\:\xTo{\text{Observation \ref{obs:EQ}}\,} \: 
\hat{e} \equiv \hat{f} 
	\:\xTo{\text{Lemma \ref{lem:hat-1}}\,}\: 
e \equiv f$
\end{center}
\end{proof}

Similarly to well known language completeness results of Kleene algebra and Kleene algebra with tests, the completeness result presented in this section shows that two expressions are equivalent iff they denote the same language in a given language model. However, the result is ``parametrized'' in the sense that it relies on expressions not denoting languages \textit{simpliciter} but via a parameter $\Gamma$. The completeness result then holds only if a right kind of parameter is chosen. 

The relation between expressions and languages used here can be seen as a generalization of the unmediated relation between regular expressions and regular languages. In our setting, expressions can be seen as denoting \textit{generalized languages}, that is, functions $\lambda : \mathbb{E} \to (2^{\mathbb{E}} \to 2^{\mathbb{GS}})$. Standard languages are a special case where $\lambda(e)$ is a constant function.

\section{Relational completeness}\label{sec:REL}

In this section we prove a relational completeness result for Kleene algebra with dynamic tests. We rely on Theorem \ref{thm:LC}. In the rest of the section, let $\Gamma$ be a fixed finite set of parameters.

\begin{definition}\label{def:cay}
We define the function $\mathsf{cay} : 2^{\mathbb{CS}(\Gamma)} \to 2^{\mathbb{CS}(\Gamma) \times \mathbb{CS}(\Gamma)}$ as follows:
\[ \mathsf{cay} (L) = \{ \langle w, w \diamond u \rangle \mid w \in \mathbb{CS}(\Gamma) \And u \in L\}\] 
\end{definition}

We define a relational model $CS(\Gamma) = \langle \mathbb{CS}(\Gamma), \mathsf{rel}_{CS(\Gamma)}, \mathsf{sat}_{CS(\Gamma)}\rangle$ where
\[ \mathsf{rel}_{CS(\Gamma)}(\mathtt{a}) = \mathsf{cay}(\llbracket \mathtt{a}\rrbracket_{\Gamma})
\qquad \mathsf{sat}_{CS(\Gamma)}(\mathtt{p}) = \{ w \mid \mathsf{last}(w) \leqq \mathtt{p} \} \]
for $\mathtt{a} \in \Sigma$ and $\mathtt{p} \in \Pi$. (Of course, $\mathsf{last}(w)$ is the last atom occurring in $w$, and we denote as $\mathsf{first}(w)$ the first one.)

\begin{lemma}\label{lem:RC-2}
For all $\f \in \Gamma$, $\llbracket \f\rrbracket_{CS(\Gamma)} = \{ \langle w, w\rangle \mid \mathsf{last}(w) \leqq \f \}$.
\end{lemma}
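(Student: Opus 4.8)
The plan is to prove $\llbracket \f\rrbracket_{CS(\Gamma)} = \{ \langle w, w\rangle \mid \mathsf{last}(w) \leqq \f \}$ by induction on the structure of $\f$, following exactly the clauses of the grammar for formulas. Since $\f \in \Gamma$ is a parameter, the relevant cases are: $\f = \mathtt{p} \in \Pi$, the constants $\mathtt{0}$ and $\mathtt{1}$, the Boolean combinations $\f + \ff$ and $\f \cdot \ff$, and the dynamic-test cases $e^{\ant}$ and $e^{\top}$. The key fact enabling the inductive passage from ``languages'' to ``relations'' is that the relational model $CS(\Gamma)$ is built from the canonical language interpretation via the Cayley-style embedding $\mathsf{cay}$: every atomic action is interpreted as $\mathsf{cay}(\llbracket \mathtt{a}\rrbracket_{\Gamma})$, and fusion product is exactly composition of these translations. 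So I would first establish a bridging observation that for any expression $e$, the relational interpretation $\llbracket e\rrbracket_{CS(\Gamma)}$ agrees with $\mathsf{cay}(\llbracket e\rrbracket_{\Gamma})$ on the action structure --- in particular $\langle w, w'\rangle \in \llbracket e\rrbracket_{CS(\Gamma)}$ iff $w' = w \diamond u$ for some $u \in \llbracket e\rrbracket_{\Gamma}$.

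First I would handle the genuinely atomic base case $\f = \mathtt{p}$: here $\llbracket \mathtt{p}\rrbracket_{CS(\Gamma)} = 1_{\mathsf{sat}_{CS(\Gamma)}(\mathtt{p})} = \{\langle w, w\rangle \mid \mathsf{last}(w) \leqq \mathtt{p}\}$ directly by the definition of $\mathsf{sat}_{CS(\Gamma)}$, using Lemma \ref{lem:LC-truth} to identify $\mathsf{last}(w) \leqq \mathtt{p}$ with membership in the canonical language $\llbracket \mathtt{p}\rrbracket_\Gamma$. The constant and Boolean cases are routine: since each $\llbracket \f\rrbracket_{CS(\Gamma)}$ is a subidentity (a subset of $1_{\mathbb{CS}(\Gamma)}$), union corresponds to $\vee$ and composition of subidentities corresponds to $\wedge$, so $+$ and $\cdot$ track disjunction and conjunction of the conditions ``$\mathsf{last}(w) \leqq \f$'' --- which is exactly how $\leqq$ behaves on parameters under the congruence, appealing to \eqref{eq:Idem}, \eqref{eq:Comm} and the Boolean laws from Proposition \ref{prop:useful}. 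The interesting cases are the dynamic tests. For $e^{\top}$, I would compute $\langle w, w\rangle \in \llbracket e^{\top}\rrbracket_{CS(\Gamma)}$ iff there exists $w'$ with $\langle w, w'\rangle \in \llbracket e\rrbracket_{CS(\Gamma)}$, iff $\{w\} \diamond \llbracket e\rrbracket_\Gamma \neq \emptyset$ (via the bridging observation), and then relate this to $\mathsf{last}(w) \leqq e^{\top}$; the antidomain case $e^{\ant}$ is the complementary condition $\{w\} \diamond \llbracket e\rrbracket_\Gamma = \emptyset$.

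The main obstacle is the dynamic-test step, because there the induction is not straightforwardly on a strictly smaller formula: the inner expression $e$ need not be a parameter in $\Gamma$, so I cannot directly invoke the outer induction hypothesis on $e$ itself. This is precisely where the Fischer--Ladner closure of $\Gamma$ and the ``starred'' translation $e?$ earn their keep --- the FL-closure conditions in Definition \ref{def:FL} guarantee that the relevant subtests $(e \cdot \f)^{\ant\ant}$, $(e^{*} \cdot \f)^{\top}$ etc. are already present in $\Gamma$, so that the condition ``$\{w\} \diamond \llbracket e\rrbracket_\Gamma \neq \emptyset$'' can be reduced, one action-step at a time, to membership facts about atoms in $\mathbb{C}(\Gamma)$ that are themselves covered by the induction hypothesis or by Lemma \ref{lem:LC-truth}. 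I therefore expect the proof to proceed by a secondary induction (or an appeal to the diamond/box unfolding encoded in the FL-clauses) establishing that $\mathsf{last}(w) \leqq e^{\top}$ holds exactly when there is a consistently guarded string in $\llbracket e\rrbracket_\Gamma$ starting at the atom $\mathsf{last}(w)$; the soundness direction of this equivalence follows from Lemma \ref{lem:L-sound} and the completeness direction from the truth lemma, Lemma \ref{lem:LC-truth}. As with the analogous $\PDL$ truth lemma, the care needed to make the star case go through --- matching the fixpoint laws \eqref{eq:Star} and \eqref{eq:Star-2} against the reflexive-transitive closure in the relational model --- is the delicate part, and I would flag that this is exactly why the proof is deferred to or modeled on Kozen and Parikh's \textsf{PDL} argument referenced for Lemma \ref{lem:LC-truth}.
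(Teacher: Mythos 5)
Your proposal is workable but takes a genuinely different route from the paper. The paper does not redo any induction at the level of $CS(\Gamma)$: it observes that $CS(\Gamma)$ is an unravelling of the canonical relational model $C(\Gamma)$ from the appendix, exhibits the bisimulation $G \sim w \iff G = \mathsf{last}(w)$ (Lemma \ref{lem:bisim-2}), and then transfers the already-established truth lemma on $C(\Gamma)$ (Lemmas \ref{lem:LC-truth} and \ref{lem:pair}) to $CS(\Gamma)$ via bisimulation invariance of formulas (Lemma \ref{lem:bisim-1}). Your route instead pushes everything through the Cayley map: prove $\llbracket e\rrbracket_{CS(\Gamma)} = \mathsf{cay}(\llbracket e\rrbracket_{\Gamma})$ by structural induction and then invoke Lemma \ref{lem:LC-truth} to turn $\mathsf{last}(w) \in \llbracket \f\rrbracket_{\Gamma}$ into $\mathsf{last}(w) \leqq \f$. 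That works, and as a bonus it yields (a strengthening of) Lemma \ref{lem:RC-3} along the way; the paper's bisimulation argument is shorter for formulas but gives nothing about non-formula expressions.

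Two caveats. First, you cannot cite Lemma \ref{lem:RC-3} as your ``bridging observation'': as stated it ranges over $\mathbb{RE}(\Gamma)$ with parameters treated as atoms, and its base case is proved \emph{using} Lemma \ref{lem:RC-2}, so that would be circular. You must instead prove the bridging claim for all of $\mathbb{E}$ by induction on the genuine expression structure (the $\cdot$, $^{*}$, $^{\ant}$ and $^{\top}$ cases all go through because $w \diamond u$ is defined iff $\mathsf{last}(w) \diamond u$ is). Second, the ``secondary induction'' and FL-closure machinery you anticipate for the dynamic-test case is unnecessary here and signals a misplaced effort: the equivalence between $\mathsf{last}(w) \leqq e^{\top}$ and the existence of a string in $\llbracket e\rrbracket_{\Gamma}$ starting at $\mathsf{last}(w)$ is exactly the content of Lemma \ref{lem:LC-truth} together with the definition of $L^{\dom}$ in $\mathbf{CL}(\Gamma)$; the Kozen--Parikh argument and the FL-closure conditions have already been spent, once and for all, in the proof of that lemma, and should not be re-run at the level of guarded strings.
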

\begin{proof}
See the appendix, Sect.~\ref{a:RC}.
\end{proof}

\begin{lemma}\label{lem:RC-3}
For all $e \in \mathbb{RE}(\Gamma)$, 
\[ \mathsf{cay}\left( \llbracket e\rrbracket_{\Gamma} \right) = \llbracket e\rrbracket_{CS(\Gamma)} \, .\]
\end{lemma}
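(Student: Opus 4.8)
The plan is to argue by structural induction on $e \in \mathbb{RE}(\Gamma)$, exploiting the fact that both sides of the claimed identity are built homomorphically: $\llbracket-\rrbracket_{\Gamma}$ is the canonical $\Gamma$-interpretation into $\mathbf{CL}(\Gamma)$, and $\llbracket-\rrbracket_{CS(\Gamma)}$ is the $M$-interpretation into the algebra of binary relations on $\mathbb{CS}(\Gamma)$. Since $\mathbb{RE}(\Gamma)$ contains no outermost occurrences of $\,^{\bot}$ or $\,^{\dom}$ (these are sealed inside the parameters $\f \in \Gamma$), it suffices to show that $\mathsf{cay}$ intertwines the \emph{Kleene-algebra} fragment of the two signatures and agrees with the claim on generators. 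The first step, then, is to isolate a sub-lemma: $\mathsf{cay}$ sends $\emptyset$ to $\emptyset$, sends the multiplicative identity $\mathbb{C}(\Gamma)$ of $\mathbf{CL}(\Gamma)$ to the identity relation $1_{\mathbb{CS}(\Gamma)}$, preserves arbitrary unions, sends fusion product $\diamond$ to relational composition $\circ$, and (consequently) sends fusion-star to relational star. Note that $\mathsf{cay}$ need not preserve $\,^{\bot}$ or $\,^{\dom}$, and the argument does not require it to.

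Granting the sub-lemma, the induction is routine. For the generator $\mathtt{a} \in \Sigma$ the claim is immediate from the definition $\mathsf{rel}_{CS(\Gamma)}(\mathtt{a}) = \mathsf{cay}(\llbracket \mathtt{a}\rrbracket_{\Gamma})$. For a generator $\f \in \Gamma$, Lemma \ref{lem:LC-truth} gives $\llbracket \f\rrbracket_{\Gamma} = \{ G \in \mathbb{C}(\Gamma) \mid G \leqq \f \}$; since a single atom $G$ fuses with $w$ exactly when $\mathsf{last}(w) = G$, in which case $w \diamond G = w$, we obtain $\mathsf{cay}(\llbracket \f\rrbracket_{\Gamma}) = \{ \langle w, w\rangle \mid \mathsf{last}(w) \leqq \f \}$, which equals $\llbracket \f\rrbracket_{CS(\Gamma)}$ by Lemma \ref{lem:RC-2}. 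The constants $\mathtt{0}, \mathtt{1}$ are handled by the $\emptyset$- and identity-clauses of the sub-lemma, and the cases $e_1 + e_2$, $e_1 \cdot e_2$, $e_1^{*}$ reduce directly to union-, composition-, and star-preservation of $\mathsf{cay}$ applied to the induction hypotheses, using that both target interpretations are homomorphic in exactly these operations.

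The main obstacle is the composition clause $\mathsf{cay}(L \diamond K) = \mathsf{cay}(L) \circ \mathsf{cay}(K)$, where the partiality of fusion product must be tracked carefully. Unwinding the left side produces pairs $\langle w, w \diamond (v \diamond u)\rangle$ with $v \in L$, $u \in K$, subject to $\mathsf{last}(v) = \mathsf{first}(u)$ and $\mathsf{last}(w) = \mathsf{first}(v)$; unwinding the right side, factoring through the intermediate state $w \diamond v \in \mathbb{CS}(\Gamma)$, produces pairs $\langle w, (w \diamond v) \diamond u\rangle$ under the very same two side-conditions (noting $\mathsf{last}(w \diamond v) = \mathsf{last}(v)$). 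The two families therefore coincide once one checks that fusion product is associative where defined, so that $w \diamond (v \diamond u) = (w \diamond v) \diamond u$. The identity clause relies on the small observation that $\mathsf{last}(w) \in \mathbb{C}(\Gamma)$ for every $w \in \mathbb{CS}(\Gamma)$, so that $\mathsf{cay}(\mathbb{C}(\Gamma))$ collapses to the diagonal. Finally, the star clause is formal: union preservation gives $\mathsf{cay}(\bigcup_{n} L^{n}) = \bigcup_{n} \mathsf{cay}(L^{n})$, and an induction on $n$ using the identity and composition clauses yields $\mathsf{cay}(L^{n}) = (\mathsf{cay}(L))^{n}$, whence $\mathsf{cay}(L^{*}) = (\mathsf{cay}(L))^{*}$.
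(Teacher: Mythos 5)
Your proposal is correct and follows essentially the same route as the paper: structural induction on $e$, with the base cases handled by the definition of $\mathsf{rel}_{CS(\Gamma)}$ and by Lemmas \ref{lem:LC-truth} and \ref{lem:RC-2}, and the inductive steps reduced to $\mathsf{cay}$ preserving union, composition (of fusion), and star. Packaging these preservation facts as a separate sub-lemma is a minor organizational difference, not a different argument.
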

\begin{proof}
Induction on the complexity of $e$. We omit the subscript $CS(\Gamma)$ in the rest of the proof. The base case for $e = \f \in \Gamma$ is established using Lemma \ref{lem:LC-truth} and Lemma \ref{lem:RC-2}: $\mathsf{cay}\left( \llbracket \f\rrbracket_{\Gamma} \right) = \{ \langle w, w\rangle \mid \mathsf{last}(w) \leqq \f \} = \llbracket \f\rrbracket$. The base case for $e = \mathtt{a} \in \Sigma$ holds by definition.

The induction step for $\mathtt{1}$ follows from Lemma \ref{lem:LC-truth}: $\llbracket \mathtt{1}\rrbracket_{\Gamma} = \mathbb{C}(\Gamma)$ and so $\mathsf{cay} \left( \llbracket \mathtt{1}\rrbracket_{\Gamma} \right) = 1_{\mathbb{CS}(\Gamma)} = \llbracket \mathtt{1}\rrbracket$. The induction step for $\mathtt{0}$ is trivial.

The induction step for $e + f$ is established as follows: $\llbracket e + f\rrbracket = \llbracket e\rrbracket \cup \llbracket f\rrbracket = \mathsf{cay} (\llbracket e\rrbracket_{\Gamma}) \cup \mathsf{cay}(\llbracket f\rrbracket_{\Gamma}) = \mathsf{cay} (\llbracket e\rrbracket_{\Gamma} \cup \llbracket f\rrbracket_{\Gamma}) = \mathsf{cay}\left( \llbracket e + f\rrbracket_{\Gamma} \right)$. The induction step for $e \cdot f$ is established as follows:
\begin{align*}
\llbracket e \cdot f\rrbracket
	& = \llbracket e\rrbracket \circ \llbracket f\rrbracket\\
	& = \{ \langle w, w \diamond u\rangle \mid w \in \mathbb{CS}(\Gamma) \And u \in \llbracket e\rrbracket_{\Gamma}\}\\ & 
		\qquad \circ \{ \langle w', w' \diamond u'\rangle \mid w' \in \mathbb{CS}(\Gamma) \And u' \in \llbracket f\rrbracket_{\Gamma} \}\\
	& = \{ \langle w, w \diamond u\rangle \mid w \in \mathbb{CS}(\Gamma) \And u \in \llbracket e\rrbracket_{\Gamma}\}\\ & 
		\qquad \circ \{ \langle w \diamond u, w \diamond u \diamond u'\rangle \mid w \in \mathbb{CS}(\Gamma) \And u \in \llbracket e\rrbracket_{\Gamma} \And u' \in \llbracket f\rrbracket_{\Gamma}\}  \\
	& = 	\{ \langle w, w \diamond u \diamond u' \rangle \mid w \in \mathbb{CS}(\Gamma) \And u \in \llbracket e\rrbracket_{\Gamma} \And u' \in \llbracket f\rrbracket_{\Gamma}\}\\
	& = 	\{ \langle w, w \diamond v \rangle \mid w \in \mathbb{CS}(\Gamma) \And v \in \llbracket e \cdot f\rrbracket_{\Gamma}\}\\
	& = \mathsf{cay} (\llbracket e \cdot f\rrbracket_{\Gamma}) 	
\end{align*}

The induction step for $e^{*}$ is established as follows:
\begin{align*}
\llbracket e^{*}\rrbracket 
	& = \bigcup_{n \geq 0} \left \llbracket e \right \rrbracket^{n}
	 	=  \bigcup_{n \geq 0} \mathsf{cay} \left( \llbracket e\rrbracket_{\Gamma} \right)^{n}\\
	& = \mathsf{cay} \left( \bigcup_{n \geq 0} \llbracket e\rrbracket^{n}_{\Gamma} \right)
		= \mathsf{cay} \left( \llbracket e^{*}\rrbracket_{\Gamma} \right)
\end{align*}
\end{proof}

\begin{theorem}[Relational completeness]\label{thm:RC}
For all $e, f \in \mathbb{E}$: 
\[ e \equiv f \iff (\forall M)(\llbracket e\rrbracket_M = \llbracket f\rrbracket_M)\]
\end{theorem}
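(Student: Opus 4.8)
The plan is to prove the two directions separately. The left-to-right direction is exactly the relational soundness result already established as Lemma~\ref{lem:R-sound}, so nothing remains to be done there. The entire content lies in the converse: assuming $e \equiv f$ fails, I must produce a single relational model $M$ witnessing $\llbracket e\rrbracket_M \neq \llbracket f\rrbracket_M$. The strategy is to take the contrapositive and chain together the language-completeness machinery of the previous section with the $\mathsf{cay}$ construction of this section.

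Concretely, suppose $e \nequiv f$ for some $e, f \in \mathbb{E}$. First I would reduce to $\mathbb{RE}(\Gamma)$-expressions: by the remark following Proposition~\ref{prop:useful} (using axiom \eqref{ax:domant}), every expression is equivalent to a testable one, and more to the point the $\:\hat{\,}\:$ translation together with the canonical/standard interpretations lets us work over a suitable finite parameter set. So I would let $E = \{e, f\}$ and take $\Gamma$ to be the FL-closure of $\mathrm{St}(E)$, which is finite by the remark after Definition~\ref{def:FL}. By the contrapositive of Theorem~\ref{thm:LC} (Language completeness), $e \nequiv f$ gives $\llbracket e\rrbracket_{\Gamma} \neq \llbracket f\rrbracket_{\Gamma}$ in the algebra $\mathbf{CL}(\Gamma)$ of consistently $\Gamma$-guarded languages.

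The second and decisive step is to transport this inequality of \emph{languages} into an inequality of \emph{relations} in the concrete relational model $CS(\Gamma)$. Here Lemma~\ref{lem:RC-3} is the key tool: it states $\mathsf{cay}(\llbracket e\rrbracket_{\Gamma}) = \llbracket e\rrbracket_{CS(\Gamma)}$ for all $e \in \mathbb{RE}(\Gamma)$. The remaining gap is that Lemma~\ref{lem:RC-3} is stated only for $\mathbb{RE}(\Gamma)$-expressions, whereas $e, f$ are arbitrary elements of $\mathbb{E}$; I would bridge this by passing through the $\:\hat{\,}\:$-translations $\hat{e}, \hat{f} \in \mathbb{RE}(\Gamma^{\pm})$ and invoking Lemma~\ref{lem:hat-2} to identify $\llbracket e\rrbracket_{\Gamma}$ with $[\hat{e}\,]_{\Gamma}$, so that the two semantic pictures line up. Since $\mathsf{cay}$ is injective (distinct languages over $\mathbb{CS}(\Gamma)$ yield distinct relations, because $w \diamond u$ determines $u$ once $w$ is fixed and $\mathbb{CS}(\Gamma)$ is non-empty), $\llbracket e\rrbracket_{\Gamma} \neq \llbracket f\rrbracket_{\Gamma}$ forces $\mathsf{cay}(\llbracket e\rrbracket_{\Gamma}) \neq \mathsf{cay}(\llbracket f\rrbracket_{\Gamma})$, i.e.\ $\llbracket e\rrbracket_{CS(\Gamma)} \neq \llbracket f\rrbracket_{CS(\Gamma)}$. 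Thus $M = CS(\Gamma)$ is the desired witness.

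The main obstacle I anticipate is the mismatch between the domains of the lemmas and the generality of the theorem: Lemmas~\ref{lem:RC-2} and~\ref{lem:RC-3} are phrased for $\mathbb{RE}(\Gamma)$, and care is needed to ensure the dynamic-test operators $\,^{\bot}$ and $\,^{\dom}$ on arbitrary $e \in \mathbb{E}$ are correctly accounted for when moving to the relational model rather than silently dropped by the $\:\hat{\,}\:$-translation. The cleanest route is to verify that the composite $\mathsf{cay} \circ \llbracket-\rrbracket_{\Gamma}$ and $\llbracket-\rrbracket_{CS(\Gamma)}$ agree as homomorphisms on all of $\mathbb{E}$ (extending Lemma~\ref{lem:RC-3} across $\,^{\bot}$ and $\,^{\dom}$ via the explicit relational definitions of antidomain and domain in Definition~\ref{def:M-inter}), which amounts to checking that $\mathsf{cay}$ commutes with $\,^{\bot}$ and $\,^{\dom}$ on the image of $\llbracket-\rrbracket_{\Gamma}$. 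Once that compatibility is in hand, the chain closes and the theorem follows.
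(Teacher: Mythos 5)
Your proposal is correct and follows essentially the same route as the paper: soundness via Lemma~\ref{lem:R-sound}, and for the converse the contrapositive chain through Theorem~\ref{thm:LC}, injectivity of $\mathsf{cay}$, and Lemma~\ref{lem:RC-3}, with $CS(\Gamma)$ as the distinguishing model. The domain mismatch you flag is real (the paper silently applies Lemma~\ref{lem:RC-3} to arbitrary $e \in \mathbb{E}$ although it is stated only for $\mathbb{RE}(\Gamma)$), and your fix --- checking directly that $\mathsf{cay}$ commutes with $\,^{\bot}$ and $\,^{\dom}$ --- closes it, making the detour through $\hat{e}$ unnecessary.
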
 
\begin{proof}
The implication from left to right is Lemma \ref{lem:R-sound}. The converse implication is established as follows. Fix $e, f$ such that $e \nequiv f$ and let $\Gamma$ be the FL-closure of $\mathrm{Sf}(e, f)?$. We obtain $\llbracket e\rrbracket_{\Gamma} \neq \llbracket f\rrbracket_{\Gamma}$ using Theorem \ref{thm:LC} and so $\mathsf{cay} \left( \llbracket e\rrbracket_{\Gamma} \right) \neq \mathsf{cay} \left( \llbracket f\rrbracket_{\Gamma} \right)$ since $\mathsf{cay}$ is injective.\footnote{Suppose without loss of generality that $w \in L \setminus K$. Then $\langle \mathsf{first}(w), w\rangle \in \mathsf{cay}(L) \setminus \mathsf{cay}(K)$.} Then $\llbracket e\rrbracket_{CS(\Gamma)} \neq \llbracket f\rrbracket_{CS(\Gamma)}$ and we are done.
\end{proof}

\begin{theorem}\label{thm:complexity-EQ}
The problem of deciding equivalence of arbitrary expressions is $\EXPTIME$-complete.
\end{theorem}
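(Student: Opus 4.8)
The plan is to reduce this statement directly to two results already established: the relational completeness theorem (Theorem \ref{thm:RC}) and the $\EXPTIME$-completeness of relational equivalence (Theorem \ref{thm:complexity-REL}). No new construction should be needed; the work is entirely in recognizing that these two theorems, taken together, already determine both the semantic characterization of $\equiv$ and its decision complexity.

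First I would observe that Theorem \ref{thm:RC} identifies the congruence $\equiv$ with relational equivalence: it asserts precisely that $e \equiv f$ holds if and only if $\llbracket e\rrbracket_M = \llbracket f\rrbracket_M$ for every relational model $M$. Consequently the binary relation $\equiv$ on $\mathbb{E}$ and the relation of relational equivalence coincide as subsets of $\mathbb{E} \times \mathbb{E}$. Hence the two decision problems --- ``given $e,f$, does $e \equiv f$ hold?'' and ``given $e,f$, are $e$ and $f$ relationally equivalent?'' --- are literally the same problem on the same inputs.

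The second step is then immediate: since Theorem \ref{thm:complexity-REL} shows the latter problem is $\EXPTIME$-complete, the former is too. I expect no genuine obstacle here, because all the difficulty has already been absorbed by the earlier results. The language- and relational-completeness arguments of Sections \ref{sec:LAN}--\ref{sec:REL} supply the equivalence of $\equiv$ and relational equivalence, while Section \ref{sec:PDL} supplies both halves of the complexity bound: $\EXPTIME$-hardness from the reduction of the polynomial-space alternating Turing machine membership problem through the formulas $\F_{A,t}$, and the $\EXPTIME$ upper bound from the polynomial translation of arbitrary expressions into equivalent $\PDL$-programs together with the known complexity of deciding $\PDL$-program equivalence. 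The only point worth stating carefully is that this chain of reasoning already invokes Theorem \ref{thm:RC} itself (to connect $e \nequiv \mathtt{0}$ with satisfiability in the hardness argument), so the present theorem is best read as packaging Theorems \ref{thm:RC} and \ref{thm:complexity-REL} into a single statement about the equational theory, rather than as requiring any fresh argument.
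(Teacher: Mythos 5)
Your proposal is correct and matches the paper's proof exactly: the paper likewise derives the result in one line by noting that Theorem \ref{thm:RC} identifies $\equiv$ with relational equivalence and then invoking Theorem \ref{thm:complexity-REL}. Your closing remark that the hardness half of Theorem \ref{thm:complexity-REL} already uses Theorem \ref{thm:RC} is an accurate observation about the paper's dependency structure, but it introduces no circularity and does not change the argument.
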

\begin{proof}
By Theorem \ref{thm:RC}, equivalence coincides with relational equivalence. The rest follows from Theorem \ref{thm:complexity-REL}.
\end{proof}

\section{Fragments}\label{sec:FRA}

This section discusses the bearing of the results discussed in previous sections on various fragments of $\aKAT$. 

First, it is straightforward that we obtain both kinds of completeness for fragments of $\Eq$ as a corollary of Theorems \ref{thm:LC} and \ref{thm:RC}:

\begin{theorem}\label{thm:complet}
Take a finite $E \subseteq \mathbb{E}$ and $e, f \in E$. The following are equivalent:
\begin{enumerate}
\item $e \EQ{\K} f$
\item $\llbracket e\rrbracket_{\Gamma} = \llbracket f\rrbracket_{\Gamma}$ where $\Gamma$ is the FL-closure of $\mathrm{St}(E)$
\item $\llbracket e\rrbracket_M = \llbracket f\rrbracket_M$ for all relational models $M$
\end{enumerate}
\end{theorem}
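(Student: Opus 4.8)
The plan is to obtain this as a direct corollary of the language completeness theorem (Theorem \ref{thm:LC}) and the relational completeness theorem (Theorem \ref{thm:RC}). The guiding observation is that conditions (2) and (3) make no reference to any fragment at all --- they concern the canonical $\Gamma$-interpretation and the relational interpretations, both defined on the full language $\mathbb{E}$ --- so the only fragment-specific ingredient is condition (1), which I will reduce to plain equivalence $\Eq$.

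First I would unwind the notation. By definition $e \EQ{\K} f$ means $e, f \in \mathbb{E}_{\K}$ together with $e \Eq f$; since $e, f$ are assumed to lie in the fragment $\mathbb{E}_{\K}$, condition (1) is equivalent to the single assertion $e \Eq f$. Hence it suffices to prove that each of (2) and (3) is equivalent to $e \Eq f$, whereupon the three-way equivalence follows by transitivity.

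The equivalence of (2) and $e \Eq f$ is exactly Theorem \ref{thm:LC}, whose hypotheses are met here: $E$ is finite and $\Gamma$ is taken to be the FL-closure of $\mathrm{St}(E)$, which is a finite set of parameters (as noted after Definition \ref{def:FL}), so that the canonical $\Gamma$-interpretation is well defined and the theorem applies to all $e, f \in E$. The equivalence of (3) and $e \Eq f$ is exactly Theorem \ref{thm:RC}, which is stated for arbitrary $e, f \in \mathbb{E}$ and is altogether independent of both the fragment $\K$ and the parameter $\Gamma$. Composing the two equivalences yields (1) $\Leftrightarrow$ (2) $\Leftrightarrow$ (3).

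The step I expect to need the most care is not any computation but the bookkeeping around condition (1): one must make explicit that $\EQ{\K}$ is merely the restriction of the full congruence $\Eq$ to $\mathbb{E}_{\K}$. Observation \ref{obs:EQ} supplies the forward direction $e \EQ{\K} f \implies e \Eq f$ for free, while the converse direction $e \Eq f \implies e \EQ{\K} f$ holds precisely because $e, f$ are already in the fragment by assumption. Once this identification is spelled out, both completeness theorems apply verbatim to $e$ and $f$ viewed as elements of the full language $\mathbb{E}$, and no genuine calculation remains --- the ``real'' work having already been carried out in Sections \ref{sec:LAN} and \ref{sec:REL}.
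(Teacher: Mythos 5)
Your proposal is correct and matches the paper's own treatment: the paper derives Theorem \ref{thm:complet} directly as a corollary of Theorems \ref{thm:LC} and \ref{thm:RC}, exactly as you do. Your explicit bookkeeping that $e \EQ{\K} f$ reduces to $e \Eq f$ once $e, f \in \mathbb{E}_{\K}$ is assumed (via Observation \ref{obs:EQ} and the definition of $\EQ{\K}$) is the only point the paper leaves implicit, and you handle it correctly.
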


Our complexity result for $\aKAT$ can be used to establish $\EXPTIME$-completeness of $\aKA$ as follows.

The upper bound of the complexity of deciding equivalence between $\aKA$-expressions follows directly from the upper bound for arbitrary expressions. The lower bound result will use the following lemma. 

\begin{lemma}\label{lem:noPi}
Let $e'$ be the result of replacing every occurrence of $\mathtt{a}_{n}$ in $e$ by an occurrence of $\mathtt{a}_{2n}$ and replacing every occurrence of $\mathtt{p}_n$ by an occurrence of $(\mathtt{a}_{2n+1})^{\top}$. Then \[ e \equiv f \iff e' \equiv f' \, .\]
\end{lemma}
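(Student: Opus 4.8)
The map in question is a \emph{substitution}: let $(-)'$ be the unique homomorphism $\mathbb{E} \to \mathbb{E}$ that sends $\mathtt{a}_n \mapsto \mathtt{a}_{2n}$, sends $\mathtt{p}_n \mapsto (\mathtt{a}_{2n+1})^{\top}$, and commutes with $\mathtt{0}, \mathtt{1}, +, \cdot, \,^{*}, \,^{\ant}, \,^{\dom}$. Note that $e'$ contains no propositional letter, so if $e$ is an $\aKAT$-expression then $e'$ is an $\aKA$-expression, which is why this lemma will yield the $\EXPTIME$ lower bound for $\aKA$. The plan is to prove both directions uniformly through the relational semantics, invoking relational completeness (Theorem~\ref{thm:RC}) and soundness (Lemma~\ref{lem:R-sound}), rather than chasing the axioms syntactically. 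To this end I will set up a two-way correspondence between relational models that respects $(-)'$.

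The technical heart is the following pair of model constructions and an interpretation-preservation lemma. Given a model $M = \langle X, \mathsf{rel}_M, \mathsf{sat}_M\rangle$, define $M'$ on the same carrier by $\mathsf{rel}_{M'}(\mathtt{a}_{2n}) = \mathsf{rel}_M(\mathtt{a}_n)$ and $\mathsf{rel}_{M'}(\mathtt{a}_{2n+1}) = 1_{\mathsf{sat}_M(\mathtt{p}_n)}$ (the identity restricted to the proposition's extension), with $\mathsf{sat}_{M'}$ arbitrary. Conversely, given a model $N$ whose actions are used by $(-)'$, define $N^{\flat}$ by $\mathsf{rel}_{N^{\flat}}(\mathtt{a}_n) = \mathsf{rel}_N(\mathtt{a}_{2n})$ and $\mathsf{sat}_{N^{\flat}}(\mathtt{p}_n) = \{ x \mid \exists y\,(\langle x,y\rangle \in \mathsf{rel}_N(\mathtt{a}_{2n+1}))\}$, i.e.\ the domain of $\mathsf{rel}_N(\mathtt{a}_{2n+1})$. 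The claim, proved by a straightforward induction on the structure of $g \in \mathbb{E}$, is that $\llbracket g\rrbracket_M = \llbracket g'\rrbracket_{M'}$ and $\llbracket g\rrbracket_{N^{\flat}} = \llbracket g'\rrbracket_N$. The inductive steps are immediate since both $(-)'$ and $\llbracket-\rrbracket$ are compositional. The only nontrivial base case is $g = \mathtt{p}_n$, where one computes $\llbracket (\mathtt{a}_{2n+1})^{\top}\rrbracket_{M'} = \mathsf{d}(1_{\mathsf{sat}_M(\mathtt{p}_n)}) = 1_{\mathsf{sat}_M(\mathtt{p}_n)} = \llbracket \mathtt{p}_n\rrbracket_M$, and symmetrically $\llbracket (\mathtt{a}_{2n+1})^{\top}\rrbracket_N = \mathsf{d}(\mathsf{rel}_N(\mathtt{a}_{2n+1})) = 1_{\mathsf{sat}_{N^{\flat}}(\mathtt{p}_n)} = \llbracket \mathtt{p}_n\rrbracket_{N^{\flat}}$; here the even/odd index split guarantees that the simulated propositions and the renamed original actions never collide.

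With these facts in hand both implications follow symmetrically. For the left-to-right direction, assume $e \equiv f$; to prove $e' \equiv f'$ it suffices by Theorem~\ref{thm:RC} to show $\llbracket e'\rrbracket_N = \llbracket f'\rrbracket_N$ for every $N$. Given $N$, the correspondence gives $\llbracket e'\rrbracket_N = \llbracket e\rrbracket_{N^{\flat}}$ and $\llbracket f'\rrbracket_N = \llbracket f\rrbracket_{N^{\flat}}$, and these coincide by Lemma~\ref{lem:R-sound}. For the converse, assume $e' \equiv f'$ and take any $M$; then $\llbracket e\rrbracket_M = \llbracket e'\rrbracket_{M'} = \llbracket f'\rrbracket_{M'} = \llbracket f\rrbracket_M$, again using the correspondence and Lemma~\ref{lem:R-sound}, so $e \equiv f$ by Theorem~\ref{thm:RC}. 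I expect the main (and essentially the only) obstacle to be identifying the correct model constructions --- in particular realizing that a proposition $\mathtt{p}_n$ is exactly simulated by the domain of an auxiliary action $\mathtt{a}_{2n+1}$, so that $(\mathtt{a}_{2n+1})^{\top}$ behaves identically to $\mathtt{p}_n$ --- and checking that this simulation is faithful in both directions; everything else is routine compositional bookkeeping.
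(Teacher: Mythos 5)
Your proof is correct, and for the substantive right-to-left direction it is essentially the paper's own argument: the model $M'$ you construct (with $\mathsf{rel}_{M'}(\mathtt{a}_{2n+1}) = 1_{\mathsf{sat}_M(\mathtt{p}_n)}$) and the induction showing $\llbracket g\rrbracket_M = \llbracket g'\rrbracket_{M'}$ are exactly what the paper does, followed by the same appeal to Theorem \ref{thm:RC} and Lemma \ref{lem:R-sound}. The only divergence is in the easy left-to-right direction, which the paper dispatches syntactically in one line --- equivalence is preserved under substitution, and the image $(\mathtt{a}_{2n+1})^{\top}$ of $\mathtt{p}_n$ still satisfies the one $\Pi$-specific axiom $\mathtt{p}^{\top}\equiv\mathtt{p}$ --- whereas you route it through semantics via the inverse construction $N^{\flat}$ and a second appeal to completeness; both are valid, yours being more uniform at the cost of one extra (equally routine) model construction.
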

\begin{proof}
Left to right: Equivalence is preserved under substitution. Moreover, clearly $\mathtt{p}' \equiv (\mathtt{p}')^{\top}$. Right to left: If $e \nequiv f$, then there is a relational model $M$ where $\llbracket e\rrbracket_M \neq \llbracket f\rrbracket_M$ (Theorem \ref{thm:RC}). We define $M'$ by taking the universe $X$ of $M$ and stipulating that
\[ \mathsf{rel}_{M'}(\mathtt{a}_m) = 
\begin{cases}
\mathsf{rel}_M (\mathtt{a}_n) & m = 2n\\
1_{\mathsf{sat}_M(\mathtt{p}_n)} & m = 2n+1
\end{cases}
\qquad
\mathsf{sat}_{M'}(\mathtt{p}) = \emptyset
\]
It can be shown by induction on $g$ that $\llbracket g\rrbracket_M = \llbracket g'\rrbracket_{M'}$. Only the base case is interesting. The base case for $\mathtt{a}_{n}$: $\llbracket (\mathtt{a}_{n})'\rrbracket_{M'} = \llbracket \mathtt{a}_{2n}\rrbracket_{M'} = \llbracket \mathtt{a}_{n}\rrbracket_M$. The base case for $\mathtt{p}_n$: $\llbracket (\mathtt{p}_{n})'\rrbracket_{M'} = \llbracket (\mathtt{a}_{2m+1})^{\top}\rrbracket_{M'} = \{ \langle x, x\rangle \mid \exists y. \langle x,y \rangle \in \llbracket \mathtt{a}_{2n+1}\rrbracket_{M'} \} = \llbracket \mathtt{p}_n\rrbracket_M$. Now clearly $\llbracket e'\rrbracket_{M'} \neq \llbracket f'\rrbracket_{M'}$ and so $e' \nequiv f'$ by Lemma \ref{lem:R-sound}.
\end{proof}

Before using this lemma in the proof of the complexity result for $\aKA$, we should pause to appreciate what it says about $\aKA$. Since every $\EQ{\K}$ trivially embeds into $\Eq$ and $\Eq$ embeds into $\EQ{\aKA}$ by Lemma \ref{lem:noPi}, we obtain the result that every $\EQ{\K}$ embeds into $\EQ{\aKA}$. Thus, $\aKA$ occupies a central position in the family of Kleene algebras with dynamic tests.

\begin{theorem}\label{thm:complexity-aKA}
The problem of deciding equivalence between $\aKA$-expressions is $\EXPTIME$-complete.
\end{theorem}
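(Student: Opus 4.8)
The plan is to combine the general complexity result for arbitrary expressions with the embedding lemma just proved. Since $\aKA$-expressions are a syntactic fragment of $\mathbb{E}$, the upper bound is immediate: deciding equivalence of $\aKA$-expressions is a special case of deciding equivalence of arbitrary expressions, which is in $\EXPTIME$ by Theorem \ref{thm:complexity-EQ}. All the work is therefore in the lower bound, where the goal is to reduce an $\EXPTIME$-hard problem to $\aKA$-equivalence.

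For the lower bound, I would reduce from the problem of deciding relational equivalence of \emph{arbitrary} expressions, which is $\EXPTIME$-hard by Theorem \ref{thm:complexity-REL}. The key observation is that Lemma \ref{lem:noPi} provides exactly the required reduction: given arbitrary expressions $e, f$, form $e', f'$ by replacing each $\mathtt{a}_n$ with $\mathtt{a}_{2n}$ and each propositional letter $\mathtt{p}_n$ with the test $(\mathtt{a}_{2n+1})^{\top}$. The lemma guarantees $e \equiv f \iff e' \equiv f'$. What remains to check is that $e'$ and $f'$ genuinely lie in the $\aKA$-fragment $\mathbb{E}_{\aKA}$: the substitution eliminates every occurrence of a propositional letter $\mathtt{p} \in \Pi$ (replacing it by a test built from actions via $\,^{\top}$) while introducing no new propositional letters and no new constructs outside those permitted in $\aKA$-expressions. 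Since $\aKA$-expressions are precisely the expressions over $\Sigma$ closed under the regular operators and under $\,^{\top}, \,^{\bot}$ applied to $\aKA$-expressions, with no propositional letters in the base, the transformed expressions $e', f'$ are $\aKA$-expressions, and hence $e' \equiv f' \iff e' \EQ{\aKA} f'$ by Observation \ref{obs:EQ} (in the direction that matters, equivalence of $\aKA$-expressions coincides with general equivalence on that fragment).

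I would then assemble the reduction explicitly: the map $(e,f) \mapsto (e', f')$ is computable in polynomial time (it is a local syntactic substitution), it sends arbitrary pairs to $\aKA$-pairs, and it preserves and reflects equivalence. Composing with the fact that relational equivalence coincides with $\equiv$ (Theorem \ref{thm:RC}), the $\EXPTIME$-hard problem of Theorem \ref{thm:complexity-REL} reduces to $\aKA$-equivalence, establishing $\EXPTIME$-hardness of the latter. Together with the upper bound, this gives $\EXPTIME$-completeness.

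The main obstacle I anticipate is not any deep argument but the bookkeeping of confirming that $e'$ is syntactically an $\aKA$-expression. One must verify that replacing $\mathtt{p}_n$ by $(\mathtt{a}_{2n+1})^{\top}$ respects the closure conditions defining $\mathbb{E}_{\aKA}$ in Figure \ref{fig:languages} --- in particular that the base becomes just $\Sigma$ with no leftover propositional letters, and that every $\,^{\top}$ and $\,^{\bot}$ introduced is applied to an expression already in the fragment. Since the original $e$ may contain arbitrary nestings of $\,^{\top}$ and $\,^{\bot}$ over subexpressions that themselves contain propositional letters, I would argue by a straightforward structural induction that the substitution commutes with all operators and carries $\mathbb{E}$ into $\mathbb{E}_{\aKA}$. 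This is routine, so the theorem follows essentially immediately from Lemma \ref{lem:noPi} and the established complexity bounds.
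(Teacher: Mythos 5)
Your proposal is correct and follows essentially the same route as the paper: the upper bound is inherited from the $\EXPTIME$ bound for arbitrary expressions (Theorem \ref{thm:complexity-EQ}), and the lower bound comes from the polynomial reduction of Lemma \ref{lem:noPi} combined with the $\EXPTIME$-hardness of (relational) equivalence from Theorems \ref{thm:complexity-REL} and \ref{thm:RC}. The only addition is your explicit check that the substituted expressions land in $\mathbb{E}_{\aKA}$, which the paper leaves implicit.
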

\begin{proof}
The problem is in $\EXPTIME$ since so is deciding equivalence between arbitrary expressions. The problem is $\EXPTIME$-complete thanks to Lemma \ref{lem:noPi}: deciding equivalence between arbitrary expressions can be polynomially reduced to deciding equivalence between $\aKA$-expressions. The former is $\EXPTIME$-complete by Theorem \ref{thm:complexity-REL}.
\end{proof}

\section{Conclusion}

In this paper we discussed several versions of Kleene algebra with dynamic tests, extending Kleene algebra with dynamic tests operators $^{\ant}$ (antidomain) and $\dom$ (domain). Kleene algebra with tests is a special case where antidomain applies only to Boolean variables. On the other side of the spectrum, the strongest version of Kleene algebra with dynamic tests considered here, $\aKAT$, was shown to correspond to $\PDL$. Our main results are (i) completeness results for $\aKAT$ and its syntactic fragments with respect to relational models and specific models based on guarded strings, and (ii) an $\EXPTIME$-completeness result for $\aKAT$ following from relational completeness and the relation between $\aKAT$ and $\PDL$) and for $\aKA$ (following from the fact that $\aKAT$ embeds into $\aKA$, showing that $\aKA$ occupies a central position in the family).

A number of interesting problems remain to be solved. First, are $\dKA$ and $\dKAT$ $\EXPTIME$-hard? Second, do fragments of quasi-equational theories of $\aKAT$ and its fragments where all assumptions are of the form $e \equiv \mathtt{0}$ reduce to their equational theories, similarly as in $\KA$ and $\KAT$? Third, are there natural fragments of $\aKAT$ that are stronger than $\KAT$ but still have a $\PSPACE$-complete equational theory? Fourth, echoing the links between $\KA$ ($\KAT$) and finite automata (on guarded strings), what is the natural automata-theoretic formulation of the various versions of Kleene algebra with dynamic tests considered here? 





\appendix
 \section{Technical appendix}
 \setcounter{lemma}{0}
 \counterwithin{lemma}{section}
  \setcounter{definition}{0}
 \counterwithin{definition}{section}
   \setcounter{observation}{0}
 \counterwithin{observation}{section}

 \subsection{Proof of Lemma \ref{lem:LC-truth}}\label{a:LC-truth}
 
 The proof follows the general strategy of \cite{KozenParikh1981}. It will be convenient to prove our result using a canonical relational model.

\begin{definition}
Let $\Gamma$ be a finite set of parameters. The \emph{canonical relational $\Gamma$-model} is $C(\Gamma) = \langle \mathbb{C}(\Gamma), \mathsf{rel}_{C(\Gamma)}, \mathsf{sat}_{C(\Gamma)}\rangle$ where
\[ \mathsf{rel}_{C(\Gamma)}(\mathtt{a}) = \{ \langle G, H\rangle \mid G \mathtt{a} H \nequiv \mathtt{0} \}
\qquad \mathsf{sat}_{C(\Gamma)}(\mathtt{p}) = \{ G \mid G \leqq \mathtt{p} \}\] for all $\mathtt{a} \in \Sigma$ and $\mathtt{p} \in \Pi$. The $C(\Gamma)$-interpretation of expressions is defined as usual in relational models.
\end{definition}

We write $G \xto{e} H$ iff $\langle G, H\rangle \in \llbracket e\rrbracket_{C(\Gamma)}$. Note that
\begin{itemize}
\item $G \xto{\mathtt{a}} H$ iff $G \mathtt{a} H \nequiv \mathtt{0}$
\item $G \xto{\mathtt{p}} H$ iff $G = H$ and $G \leqq \mathtt{p}$
\item $G \xto{\mathtt{1}} H$ iff $G = H$
\item $G \xto{\mathtt{0}} H$ iff $H \neq H$ (i.e.~never)
\item $G \xto{e + f} H$ iff $G \xto{e} H$ or $G \xto{f} H$
\item $G \xto{e \cdot f} H$ iff there is $I$: $G \xto{e} I \xto{f} H$
\item $G \xto{e^{*}} H$ iff there are $I_1, \ldots, I_n$ such that \[ G = I_1 \xto{e} \ldots \xto{e} I_n = H\]
\item $G \xto{e^{\bot}} H$ iff $G = H$ and there is no $I$: $G \xto{e} I$
\item $G \xto{e^{\dom}} H$ iff $G = H$ and there is $I$: $G \xto{e} I$
\end{itemize}

The relation between relational and guarded-string interpretations of expressions can be compactly described using the function $\mathsf{pair} : 2^{\mathbb{CS}(\Gamma)} \to 2^{\mathbb{C}(\Gamma) \times \mathbb{C}(\Gamma)}$ defined in the obvious way:
\[ \mathsf{pair}(L) = \{ \langle G, H\rangle \mid \exists G \mathtt{a}_1 \ldots \mathtt{a}_{n-1} H \in L\}\]

\begin{lemma}\label{lem:pair}
For all $e \in \mathbb{E}$, $\llbracket e\rrbracket_{C(\Gamma)} = \mathsf{pair} \left( \llbracket e\rrbracket_{\Gamma} \right)$.
\end{lemma}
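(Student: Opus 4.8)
The plan is to prove the identity by structural induction on $e$, following the grammar of $\mathbb{E}$, and exploiting the fact that both $\llbracket-\rrbracket_{C(\Gamma)}$ and $\llbracket-\rrbracket_{\Gamma}$ are defined compositionally (Definitions \ref{def:M-inter} and \ref{def:canonical-int}). The whole content is then to check that the map $\mathsf{pair}$ intertwines the two interpretations operation by operation. First I would isolate three elementary properties of $\mathsf{pair}$: (i) it preserves arbitrary unions, $\mathsf{pair}(\bigcup_i L_i) = \bigcup_i \mathsf{pair}(L_i)$; (ii) it sends fusion product to relational composition, $\mathsf{pair}(L \diamond K) = \mathsf{pair}(L) \circ \mathsf{pair}(K)$; and (iii) $\mathsf{pair}(\mathbb{C}(\Gamma)) = 1_{\mathbb{C}(\Gamma)}$. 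Property (i) is immediate from the definition of $\mathsf{pair}$, and (iii) holds because a single-atom string $G$ has $\mathsf{first}(G) = \mathsf{last}(G) = G$.

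The technical heart, which I expect to be the \textbf{main obstacle}, is property (ii). For the left-to-right inclusion, if $w \in L$ runs from $G$ to an atom $I$ and $u \in K$ runs from $I$ to $H$, then $\mathsf{last}(w) = I = \mathsf{first}(u)$, so the fusion $w \diamond u$ is defined, lies in $L \diamond K$, and runs from $G$ to $H$; conversely, every element of $L \diamond K$ factors as $w \diamond u$ with $\mathsf{last}(w) = \mathsf{first}(u)$, which supplies the intermediate state witnessing membership in $\mathsf{pair}(L) \circ \mathsf{pair}(K)$. Here I would note that the fusion of two consistently guarded strings is again consistently guarded, since it merely reuses the shared atom, so that $\mathbb{CS}(\Gamma) \subseteq \mathbb{GS}(\Gamma)$ is the only closure fact needed. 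A useful special case is $\{G\} \diamond L = \{u \in L \mid \mathsf{first}(u) = G\}$, which I would record separately for use in the antidomain and domain cases.

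With these in hand the induction is routine. The base cases $\mathtt{a}, \mathtt{p}, \mathtt{0}, \mathtt{1}$ unwind directly by comparing Definition \ref{def:canonical-int} with the clauses defining $\mathsf{rel}_{C(\Gamma)}$ and $\mathsf{sat}_{C(\Gamma)}$; for instance $\mathsf{pair}(\llbracket \mathtt{a}\rrbracket_{\Gamma}) = \{\langle G, H\rangle \mid G\mathtt{a}H \nequiv \mathtt{0}\} = \mathsf{rel}_{C(\Gamma)}(\mathtt{a})$, and $\mathsf{pair}(\mathbb{C}(\Gamma)) = 1_{\mathbb{C}(\Gamma)} = \llbracket \mathtt{1}\rrbracket_{C(\Gamma)}$. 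The case $e + f$ uses (i), the case $e \cdot f$ uses (ii), and the case $e^{*}$ combines all three: a side induction on $n$ gives $\mathsf{pair}(\llbracket e\rrbracket_{\Gamma}^{n}) = (\llbracket e\rrbracket_{C(\Gamma)})^{n}$, with base case $n = 0$ supplied by (iii) and the step by (ii), and then (i) delivers $\llbracket e^{*}\rrbracket$. For $e^{\bot}$, the induction hypothesis says that some string of $\llbracket e\rrbracket_{\Gamma}$ starts at $G$ iff $\langle G, H\rangle \in \llbracket e\rrbracket_{C(\Gamma)}$ for some $H$; combined with $\{G\} \diamond \llbracket e\rrbracket_{\Gamma} = \emptyset$ iff no such string exists, this matches $\llbracket e^{\bot}\rrbracket_{\Gamma}$ with $\mathsf{a}(\llbracket e\rrbracket_{C(\Gamma)})$, and $e^{\dom}$ is the dual via $\mathsf{d}$.

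Finally I would flag that the argument is self-contained and, crucially, does not invoke Lemma \ref{lem:LC-truth}, which is being established with its help; the only inputs are the two definitions of interpretation and the combinatorics of fusion product. The one point warranting care is keeping the ambient state set equal to $\mathbb{C}(\Gamma)$ throughout --- both interpretations live over consistent atoms --- so that $1_{\mathbb{C}(\Gamma)}$ and $\emptyset$ match exactly in the $\mathtt{1}$, $e^{*}$, $e^{\bot}$ and $e^{\dom}$ cases.
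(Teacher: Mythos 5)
Your proof is correct, and it fleshes out an argument the paper leaves entirely implicit: Lemma \ref{lem:pair} is stated in the appendix without proof, evidently regarded as a routine structural induction of exactly the kind you carry out. The key observations you isolate --- that $\mathsf{pair}$ turns fusion product into relational composition and $\mathbb{C}(\Gamma)$ into $1_{\mathbb{C}(\Gamma)}$, and that the argument must not (and does not) rely on Lemma \ref{lem:LC-truth}, which is proved downstream of this lemma --- are precisely the points that make the omitted proof go through.
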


\begin{lemma}\label{lem:LC-truth-2}
Let $\Gamma$ be a FL-closed set of parameters. Then the following holds in $C(\Gamma)$ for all $e \in \mathbb{E}$:
\begin{itemize}
\item[(i)] If $e^{\top} \in \Gamma$, then $G e H \nequiv \mathtt{0}$ implies $G \xto{e} H$.
\item[(ii)] If $(e \cdot \f)^{\top} \in \Gamma$, then $G \xto{e} H \leqq \f$ implies $G \leqq (e \cdot \f)^{\top}$.
\end{itemize}
\end{lemma}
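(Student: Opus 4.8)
The plan is to prove (i) and (ii) \emph{simultaneously} by induction on the structural complexity of the program $e$. Note first that the hypotheses $e^\top\in\Gamma$ and $(e\cdot\f)^\top\in\Gamma$ force $e$ (and, in (ii), $\f$) to be testable, since $\Gamma$ is the Fischer--Ladner closure of a set of tests of the form (testable)$^\top$ and every clause of Definition \ref{def:FL} produces testable arguments; hence $e$ is built only from $\Sigma$, $\Pi$, $\mathtt 0$, $\mathtt 1$, $+$, $\cdot$, ${}^{*}$ and ${}^{\bot}$, and a plain node-count on such expressions is a well-founded measure. The guiding principle is that every recursive appeal will be to a strictly smaller program, even when the accompanying test grows: this is exactly what the clauses of Definition \ref{def:FL} are engineered to guarantee, as each rewrites a diamond $(e\cdot\f)^\top$ in terms of diamonds whose programs are proper subprograms of $e$ (for $e=g^{*}$ the program drops from $g^{*}$ to $g$ while the test becomes $(g^{*}\cdot\f)^{\ant\ant}$).

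For the base cases and the regular steps I would argue directly from the canonical transition clauses and the algebra of consistent atoms. The case $e=\mathtt a$ is immediate for (i); for (ii) one uses $H\leqq\f$ to get $G\cdot\mathtt a\cdot\f\nequiv\mathtt 0$, whence $G\not\leqq(\mathtt a\cdot\f)^{\bot}$ by \eqref{eq:Contra}, so $G\leqq(\mathtt a\cdot\f)^\top$ by the atom dichotomy (legitimate because $(\mathtt a\cdot\f)^\top\in\Gamma$). The cases $\mathtt 0,\mathtt 1,\mathtt p$ are handled the same way, using that distinct atoms multiply to $\mathtt 0$ and that $G\cdot\mathtt p\nequiv\mathtt 0$ iff $G\tin\mathtt p$; the cases $e_1+e_2$ and $e_1\cdot e_2$ split the consistency statement (respectively, the transition) along $+$ and compose along $\cdot$, invoking the induction hypothesis on $e_1,e_2$ together with the matching Fischer--Ladner clauses.

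The antidomain step $e=g^{\bot}$ is where (i) and (ii) must call one another. For (i), from $G\cdot g^{\bot}\cdot H\nequiv\mathtt 0$ I obtain $G=H$ and $G\leqq g^{\bot}$; to see that $G$ has no outgoing $g$-edge, I assume $G\xto{g}I$ and apply the induction hypothesis (ii) for $g$ with the trivial test $\mathtt 1$ (using $(g\cdot\mathtt 1)^\top\in\Gamma$, supplied by Definition \ref{def:FL} from $g^{\bot\top}\in\Gamma$) to get $G\leqq(g\cdot\mathtt 1)^\top\equiv g^\top$; then $G\leqq g^{\bot}\cdot g^\top\equiv\mathtt 0$ by \eqref{eq:ECQ-bar}, contradicting consistency of $G$. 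Part (ii) for $g^{\bot}$ is dual: rewriting $(g^{\bot}\cdot\f)^\top\equiv g^{\bot}\cdot\f$ via \eqref{eq:DiamondTest} and \eqref{eq:FTop}, the absence of a $g$-edge out of $G$ gives $G\cdot g\equiv\mathtt 0$, i.e.\ $G\leqq g^{\bot}$ (the contrapositive of the induction hypothesis (i) for $g$), and combined with $G=H\leqq\f$ this yields $G\leqq g^{\bot}\cdot\f$.

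The Kleene-star cases are the main obstacle, since here the bare induction hypothesis does not suffice and one must import the fixpoint reasoning of Proposition \ref{prop:useful}. For (ii), given a witnessing path $G=I_1\xto{g}\cdots\xto{g}I_n=H$ with $H\leqq\f$, I would run a \emph{downward} induction along the path: $I_n\leqq\f\leqq(g^{*}\cdot\f)^\top$, and from $I_{k+1}\leqq(g^{*}\cdot\f)^\top$ together with $I_k\xto{g}I_{k+1}$ the induction hypothesis (ii) for $g$ (with test $(g^{*}\cdot\f)^{\ant\ant}$, available in $\Gamma$ by Definition \ref{def:FL}) gives $I_k\leqq(g\cdot(g^{*}\cdot\f)^\top)^\top$, which collapses to $I_k\leqq(g^{*}\cdot\f)^\top$ using \eqref{ax:locality}, \eqref{KA:fix-R} and monotonicity of $\top$; at $k=1$ this is the goal. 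For (i) I would instead work with the reachable set $R=\{H'\in\mathbb C(\Gamma)\mid G\xto{g^{*}}H'\}$ and its characteristic test $\rho=\sum_{H'\in R}H'$. The induction hypothesis (i) for $g$ shows $R$ is closed under $g$-edges, i.e.\ $\rho\cdot g\cdot\rho^{\bot}\equiv\mathtt 0$, that is $\rho\leqq(g\cdot\rho^{\bot})^{\bot}$ by \eqref{eq:Contra}; feeding this into \eqref{eq:Star-2} yields $\rho\leqq(g^{*}\cdot\rho^{\bot})^{\bot}$, i.e.\ $\rho\cdot g^{*}\cdot\rho^{\bot}\equiv\mathtt 0$. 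Since $G\leqq\rho$, any $H$ with $G\cdot g^{*}\cdot H\nequiv\mathtt 0$ cannot satisfy $H\leqq\rho^{\bot}$ (else $G\cdot g^{*}\cdot H\leqq\rho\cdot g^{*}\cdot\rho^{\bot}\equiv\mathtt 0$); as $H\notin R$ would force $H\leqq\rho^{\bot}$, we conclude $H\in R$, i.e.\ $G\xto{g^{*}}H$. I expect the bookkeeping that keeps every recursive call strictly below $e$ in the chosen measure --- rather than any individual step --- to be the most delicate point, with the correct deployment of \eqref{eq:Star-2} and \eqref{ax:locality} forming the technical heart of the two star cases.
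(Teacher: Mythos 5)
Your proposal is correct and follows essentially the same route as the paper's proof: a simultaneous structural induction on $e$ in which the $^{\bot}$ case cross-calls (i) and (ii) (with the trivial test $\mathtt{1}$ supplied by the FL clause $e^{\bot\top}\in\Gamma \Rightarrow (e\cdot\mathtt{1})^{\top}\in\Gamma$), the star case of (ii) is a downward induction along the witnessing path using the test $(e^{*}\cdot\f)^{\ant\ant}$, and the star case of (i) closes a set of atoms under $e$-edges and applies \eqref{eq:Star-2}. The only cosmetic difference is that the paper takes the smallest set containing $G$ closed under consistency-witnessed edges where you take the $\xto{g^{*}}$-reachable set, which amounts to the same argument.
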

\begin{proof}
Both claims are established simultaneously by induction on the structure of $e$. The base case (i) for $\mathtt{a} \in \Sigma$ holds by definition. The base case (i) for $\mathtt{p} \in \Pi$ holds since $G \mathtt{p} \nequiv \mathtt{0}$ implies (a) $G \tin \mathtt{p}$ ($\mathtt{p}^{\dom} \in \Gamma$ only if $\mathtt{p} \in \Gamma$) and so (b) $G \cdot H \nequiv \mathtt{0}$ which is possible only if $G = H$. Base case (ii) for $\mathtt{a} \in \Sigma$: If $G \xto{\mathtt{a}} H \leqq \f$, then $G \mathtt{a} H \nequiv \mathtt{0}$ and $H \leqq \f$. Then $G (\mathtt{a} \cdot \f) \not\leqq \mathtt{0}$, which means that $G \not\leqq (\mathtt{a} \cdot \f)^{\bot} \equiv (\mathtt{a} \cdot \f)^{\top\ant}$. Hence, $G \leqq (\mathtt{a} \cdot \f)^{\top}$. Base case (ii) for $\mathtt{p} \in \Pi$: $G \xto{\mathtt{p}} H$ means that $G = H$ and $H \leqq \mathtt{p}$. If also $H \leqq \f$, then $G \leqq (\mathtt{p} \cdot \f) \equiv (\mathtt{p} \cdot \f)^{\top}$.

Claim (i) for $\mathtt{1}$: If $G \mathtt{1} H \nequiv \mathtt{0}$, then $GH \nequiv \mathtt{0}$, but this is possible only if $G = H$. Claim (ii) for $\mathtt{1}$: If $G \xto{\mathtt{1}} H$, then $G = H$. If also $H \leqq \f$, then $G \leqq (\mathtt{1} \cdot \f) \equiv (\mathtt{1} \cdot \f)^{\top}$. Claims (i) and (ii) for $\mathtt{0}$ hold vacuously.

The induction hypothesis is that both (i) and (ii) hold for $e, f$. We will prove (i) and (ii) for $e + f$, $e \cdot f$, $e^{*}$, $e^{\dom}$ and $e^{\bot}$.

Induction step (i) for $e + f$: If $(e + f)^{\top} \in \Gamma$, then $e^{\top} \in \Gamma$ and $f^{\top} \in \Gamma$. If $G(e + f)H \nequiv \mathtt{0}$, then $GeH \nequiv \mathtt{0}$ or $GfH \nequiv \mathtt{0}$ by distributivity. Hence, $G \xto{e} H$ or $G \xto{f} H$ by the induction hypothesis. It follows that $G \xto{e + f} H$.

Induction step (i) for $e \cdot f$: If $(e \cdot f)^{\top} \in \Gamma$, then $e^{\top} \in \Gamma$ and $f^{\top} \in \Gamma$. If $G(e \cdot f) H \nequiv \mathtt{0}$, then there is $I$ such that $G e I \nequiv \mathtt{0}$ and $I f H \nequiv \mathtt{0}$.\footnote{If for all $I$ either $GeI \equiv \mathtt{0}$ or $I f H \equiv \mathtt{0}$, then $GeIfH \equiv \mathtt{0}$ for all $I \in \mathbb{C}(\Gamma)$. Then $Ge \mathbb{C}(\mathtt{1}^{\top}) f H \equiv \mathtt{0}$ and so $G e \mathtt{1} f H \equiv \mathtt{0}$ by Lemma \ref{lem:LC-truth-1} and $\mathtt{1}^{\top} \equiv \mathtt{1}$. This means that $G(e \cdot f)H \equiv \mathtt{0}$ and we are done.} Using the induction hypothesis we infer that $G \xto{e} I \xto{f} H$, and so $G \xto{e \cdot f} H$.

Induction step (i) for $e^{*}$: If $(e^{*})^{\top} \in \Gamma$, then $e^{\top} \in \Gamma$. Assume that $G e^{*} H \nequiv \mathtt{0}$. Let $X$ be the smallest subset of $\mathbb{C}(\Gamma)$ such that
\begin{itemize}
\item $G \in X$
\item $I \in X$ and $I e J \nequiv \mathtt{0}$ only if $J \in X$ 
\end{itemize}
Since $e^{\top} \in \Gamma$, we may use the induction hypothesis to show that $I \xto{e} J$ for all $I, J$ satisfying the above conditions. From this we infer that $G \xto{e^{*}} J$ for all $J \in X$. Hence, we have to show that $H \in X$. We will write $X$ instead of $\sum X$. It is sufficient to show that $H \cdot X \nequiv \mathtt{0}$.\footnote{If $H \cdot \sum_{J \in X} J \nequiv \mathtt{0}$, then there is $J \in X$ such that $H \cdot J \nequiv \mathtt{0}$. If $H \cdot J \nequiv \mathtt{0}$, then $H = J$ by \eqref{eq:ECQ-bar}.} If $X e X^{\bot} \nequiv \mathtt{0}$, then there are $I \in X$ and $J \notin X$ such that $I e J \nequiv \mathtt{0}$.\footnote{If $X e X^{\bot} \nequiv \mathtt{0}$, then there is $I \in X$ such that $I e X^{\bot} \nequiv \mathtt{0}$. If $I e J \equiv \mathtt{0}$ for all $J \notin X$, then $I e \left( \sum_{J \notin X} J \right) \equiv \mathtt{0}$ and so $I e X^{\bot} \equiv \mathtt{0}$, contrary to what we have observed, since $X^{\bot} \leqq \sum_{J \notin X} J$. This last claim is established by noting that if $X^{\bot} \not\leqq \sum_{J \notin X} J$, then $G^{\bot}_1 \ldots G_n^{\bot} \nequiv \mathtt{0}$ for $\mathbb{C}(\Gamma) = \{ G_1, \ldots, G_n \}$. However, then $\mathbb{C}(\mathtt{1})^{\bot} \nequiv \mathtt{0}$, contradicting Lemma \ref{lem:LC-truth-1}.} This contradicts our assumptions about $X$, and so $X e X^{\bot} \equiv \mathtt{0}$. Hence, $X \leqq (e \cdot X^{\bot})^{\bot}$ and so $X \leqq (e^{*} \cdot X^{\bot})^{\bot}$ by \eqref{eq:Star-2}. Since $G \leqq X$, we obtain $G e^{*} X^{\bot} \equiv \mathtt{0}$. Using the assumption $G e^{*} H \nequiv \mathtt{0}$, we conclude that $H \not\leqq X^{\bot}$ and so $H \cdot X \nequiv \mathtt{0}$.

The induction step (i) for $e^{\dom}$ holds vacuously ($e^{\dom\dom}$ is not a parameter). Induction step (i) for $e^{\bot}$: If $e^{\bot\top} \in \Gamma$, then $(e \cdot \mathtt{1})^{\top} \in \Gamma$. Assume $Ge^{\bot}H \nequiv \mathtt{0}$. Then $E \cdot H \nequiv \mathtt{0}$, which is possible only in case $G = H$. Now suppose that there is $I$ such that $G \xto{e} I$. Since $I \leqq \mathtt{1}$ and $(e \cdot \mathtt{1})^{\top} \in \Gamma$, the induction hypothesis (ii) entails $G \leqq (e \cdot \mathtt{1})^{\top} \equiv e^{\top}$. But then $G e^{\bot} \leqq e^{\top} e^{\bot} \leqq \mathtt{0}$, which contradicts the assumption that $Ge^{\bot}H \nequiv \mathtt{0}$. Hence, there is no $I$ such that $G \xto{e} I$, which means that $G \xto{e^{\bot}} G = H$.

Now we consider the induction step (ii). Induction step (ii) for $e + f$: If $G \xto{e+f} H$, then $G \xto{e} H$ or $G \xto{f} H$. Since $\Gamma$ is FL-closed, $((e + f) \cdot \f)^{\top}$ entails $(e \cdot \f)^{\top} \in \Gamma$ and $(f \cdot \f)^{\top} \in \Gamma$. If $H \leqq \f$, then it suffices to use the induction hypothesis in both cases $G \xto{e} H$ or $G \xto{f} H$ to obtain $G \leqq (e \cdot \f)^{\top} + (f \cdot \f)^{\top} \equiv ((e + f) \cdot \f)^{\top}$.

Induction step (ii) for $e \cdot f$: If $G \xto{e \cdot f} H$, then there is $I$ such that $G \xto{e} I$ and $I \xto{f} H$. If $H \leqq \f$, then we reason as follows. Since $\Gamma$ is FL-closed, $((e \cdot f) \cdot \f)^{\top} \in \Gamma$ entails $(f \cdot \f)^{\top} \in \Gamma$ and $(e \cdot (f \cdot \f)^{\ant\ant})^{\dom} \in \Gamma$. Using the induction hypothesis, we obtain $I \leqq (f \cdot \f)^{\top} \Eq (f \cdot \f)^{\ant\ant} $ and also $G \leqq (e \cdot (f \cdot \f)^{\ant\ant})^{\top} \equiv ((e \cdot f) \cdot \f)^{\top}$.

Induction step (ii) for $e^{*}$: If $G \xto{e^{*}} H$, then there are $I_1, \ldots, I_n$ such that $G = I_1 \xto{e} \ldots \xto{e} I_n = H$. We show that if $H \leqq \f$, then $I_m \leqq (e^{*} \cdot \f)^{\top}$ for all $m \leq n$. First, $I_n \leqq \f \equiv G \leqq \mathtt{1} \cdot \f \equiv (\mathtt{1} \cdot \f)^{\top} \leqq (e^{*} \cdot \f)^{\top}$. Second, assume that $I_{k+1} \leqq (e^{*} \cdot \f)^{\top} \Eq (e^{*} \cdot \f)^{\ant\ant}$. Since $\Gamma$ is FL-closed, we have $(e \cdot (e^{*} \cdot \f)^{\ant\ant})^{\top} \in \Gamma$. Hence, using the induction hypothesis we obtain $I_{k} \leqq (e \cdot (e^{*} \cdot \f)^{\ant\ant})^{\top} \equiv (e \cdot e^{*} \cdot \f)^{\top} \leqq (e^{*} \cdot \f)^{\top}$.

The induction step (ii) for $e^{\top}$ holds vacuously since $(e^{\top} \cdot \f)^{\top}$ is not a parameter. Induction step (ii) for $e^{\bot}$: If $G \xto{e^{\bot}} H$, then $G = H$ and there is no $I$ such that $G \xto{e} I$. Using Lemma \ref{lem:LC-truth-1} and the induction hypothesis (i) we obtain $G e \equiv \mathtt{0}$, and so $G \leqq e^{\bot}$. If $H \leqq \f$, then $G \leqq (e^{\bot} \cdot \f) \equiv (e^{\bot} \cdot \f)^{\top}$.
\end{proof}

\noindent
\textbf{Lemma \ref{lem:LC-truth}.} 
\textit{If $\Gamma$ is a FL-closed set of parameters, then $\f \in \Gamma$ only if
\[ \llbracket \f\rrbracket_{\Gamma} = \{ G \in \mathbb{C}(\Gamma) \mid G \leqq \f \}\]}
\noindent\vspace*{-8mm}
\begin{proof}
The claim for $\f \in \Pi$ holds by definition. Hence, it remains to show that that if $e^{\top} \in \Gamma$, then $\llbracket e^{\top}\rrbracket_{\Gamma} = \{ G \in \mathbb{C}(\Gamma) \mid G \leqq e^{\top}\}$. In particular, using the definition of a canonical $\Gamma$-interpretation, we have to show that, for all $G \in \mathbb{C}(\Gamma)$
\[ G \leqq e^{\top} \iff \exists H. G \xto{e} H \]
Left to right: If $G \leqq e^{\top}$, then $G e H \nequiv \mathtt{0}$ for some $H$ since otherwise we would have $G e \left( \sum \mathbb{C}(\Gamma) \right) \equiv \mathtt{0}$ which entails $G \leqq e^{\bot}$ by Lemma \ref{lem:LC-truth-1}. This would mean that $G$ is inconsistent. Hence, $GeH \nequiv \mathtt{0}$ for some $H$ and so we may infer $G \xto{e} H$ using Lemma \ref{lem:LC-truth-2}(i). Right to left: Assume that $G \xto{e} H$. If $e \neq (f \cdot \mathtt{1})$ for all $f$, then $e^{\top} \in \Gamma$ implies $(e \cdot \mathtt{1})^{\top} \in \Gamma$. Hence, $G \leqq (e \cdot \mathtt{1})^{\top} \equiv e^{\top}$ by Lemma \ref{lem:LC-truth-2}(ii) since obviously $H \leqq \mathtt{1}$. If $e = (f \cdot \mathtt{1})$ for some $f$, then $G \xto{e} H$ entails $G \xto{f} H$. Since $(f \cdot \mathtt{1})^{\top}$ in that case, we may use Lemma \ref{lem:LC-truth-2}(ii) to infer $G \leqq (f \cdot \mathtt{1})^{\top} = e^{\top}$.
\end{proof}

\subsection{Proof of Lemma \ref{lem:RC-2}}\label{a:RC}

The relational model $CS(\Gamma)$ is what modal logicians call an ``unravelling'' of the model $C(\Gamma)$ discussed in Section \ref{a:LC-truth}. As such, $CS(\Gamma)$ and $C(\Gamma)$ are related in a useful way.

\begin{definition}\label{def:bisim}
Let $M_1 = \langle X_1, \mathsf{rel}_1, \mathsf{sat}_1\rangle$ and $M_2 = \langle X_2, \mathsf{rel}_2, \mathsf{sat}_2\rangle$ be two relational models. A binary relation $\sim$ between $X_1$ and $X_2$ is a \emph{bisimulation} between $M_1$ and $M_2$ iff $x_1 \sim x_2$ implies
\begin{enumerate}
\item  $x_1 \in \mathsf{sat}_1(\mathtt{p}) \iff x_2 \in \mathsf{sat}_2(\mathtt{p})$ for all $\mathtt{p}$
\item $\langle x_1, y_1\rangle \in \mathsf{rel}_1 (\mathtt{a})$ only if there is $y_2 \in X_2$ such that $\langle x_2, y_2\rangle \in \mathsf{rel}_2(\mathtt{a})$ and $y_1 \sim y_2$, for all $\mathtt{a}$
\item $\langle x_2, y_2\rangle \in \mathsf{rel}_2 (\mathtt{a})$ only if there is $y_1 \in X_1$ such that $\langle x_1, y_1\rangle \in \mathsf{rel}_1(\mathtt{a})$ and $y_1 \sim y_2$, for all $\mathtt{a}$
\end{enumerate}
If there is a bisimulation between $M_1$ and $M_2$, then we say that the models are \emph{bisimilar}. If there is a bisimulation $\sim$ between $M_1$ and $M_2$ such that $x_1 \sim x_2$, then we write $(M_1, x_1) \bisim (M_2, x_2)$.
\end{definition}

The following is a standard result in modal logic, transposed to our setting. Recall that $(M, x) \vDash \f$ is our notation for $\langle x, x\rangle \in \llbracket \f\rrbracket_M$.

\begin{lemma}\label{lem:bisim-1}
If $(M, x) \bisim (N, y)$, then $(M, x) \vDash \f \iff (N, y) \vDash \f$ for all $\f \in \mathbb{F}$.
\end{lemma}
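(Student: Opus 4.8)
The plan is to prove the lemma by a single induction that couples the desired formula invariance with a transfer (``zig-zag'') property for arbitrary programs. The coupling is forced by the grammar of $\mathbb{F}$: the modal formulas $e^{\top}$ and $e^{\bot}$ are evaluated by quantifying over $\llbracket e\rrbracket$-successors, so establishing their invariance requires that bisimilar states have matching successors under the \emph{complex} relation $\llbracket e\rrbracket$, and not merely under the atomic $\mathsf{rel}_M(\mathtt{a})$ of Definition~\ref{def:bisim}; conversely, a program may contain test subexpressions $g^{\bot}, g^{\top}$ whose interpretation is the identity on a formula-defined set, and handling these inside the transfer property needs formula invariance. I would therefore prove simultaneously the two statements: $(P)$ for every bisimulation $\sim$ between relational models $M_1, M_2$, every pair $x_1 \sim x_2$, and every $e \in \mathbb{E}$, if $\langle x_1, y_1\rangle \in \llbracket e\rrbracket_{M_1}$ then there is $y_2$ with $y_1 \sim y_2$ and $\langle x_2, y_2\rangle \in \llbracket e\rrbracket_{M_2}$; and $(F)$ for every such $\sim$, every $x_1 \sim x_2$, and every $\f \in \mathbb{F}$, $(M_1, x_1) \vDash \f \iff (M_2, x_2) \vDash \f$. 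Since the converse of a bisimulation is again a bisimulation, instantiating $(P)$ at $\sim^{-1}$ supplies the matching transfer from $M_2$ to $M_1$, so it suffices to prove $(P)$ in the stated direction and the forward implication in $(F)$.

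For $(P)$, the atomic action $\mathtt{a}$ is exactly clauses (2)--(3) of Definition~\ref{def:bisim}; $\llbracket\mathtt{0}\rrbracket_{M_1}$ is empty, and $\llbracket\mathtt{1}\rrbracket_{M_1}$ is the identity, for which $y_2 = x_2$ works. The cases $e \cdot f$ and $e + f$ are routine unfoldings of $\circ$ and $\cup$ followed by the induction hypothesis, transferring through the intermediate state in the product case. The cases $\mathtt{p}$, $g^{\bot}$ and $g^{\top}$ are handled uniformly, since each is a formula and therefore has sub-identity interpretation (recall $\llbracket\f\rrbracket_M \subseteq 1_X$): here $\langle x_1, y_1\rangle$ forces $x_1 = y_1$ and the truth of the formula at $x_1$, so $(F)$ for that formula transfers it to $x_2$ and again $y_2 = x_2$ works. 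For $(F)$, the cases $\mathtt{p}, \mathtt{0}, \mathtt{1}, \f + \ff, \f \cdot \ff$ follow from clause (1) of Definition~\ref{def:bisim}, the propositional clauses of the Observation in Section~\ref{sec:PDL}, and the induction hypothesis. The genuinely modal cases are where $(P)$ enters: $(M_1, x_1) \vDash g^{\top}$ says $x_1$ has an $\llbracket g\rrbracket_{M_1}$-successor, and $(P)$ for $g$ produces a matching $\llbracket g\rrbracket_{M_2}$-successor of $x_2$, giving $(M_2, x_2)\vDash g^{\top}$; the case $g^{\bot}$ is the contrapositive of the same transfer, applied through $\sim^{-1}$.

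The step needing the most care is $e^{*}$ in $(P)$: membership $\langle x_1, y_1\rangle \in \llbracket e^{*}\rrbracket_{M_1}$ means there is a finite chain $x_1 = z_0, z_1, \ldots, z_k = y_1$ with each $\langle z_i, z_{i+1}\rangle \in \llbracket e\rrbracket_{M_1}$, which one transfers link by link by an inner induction on $k$, each step invoking $(P)$ for $e$ to extend the matching $M_2$-chain while preserving $\sim$; this inner induction is on path length rather than expression complexity, so it is harmless. The main point to secure is well-foundedness, in spite of the mutual dependence of $(P)$ and $(F)$. I would do this with the lexicographic measure that assigns $(Co(e), 0)$ to the instance ``$(F)$ for $e$'' and $(Co(e), 1)$ to the instance ``$(P)$ for $e$'', where $Co$ denotes structural complexity (Section~\ref{sec:KADT-language}): every cross-call strictly decreases it. Indeed, $(F)$ for $g^{\top}$ or $g^{\bot}$ calls $(P)$ for the proper subexpression $g$, dropping the first coordinate; $(P)$ for $g^{\bot}$ or $g^{\top}$ calls $(F)$ for the same formula, keeping the first coordinate and dropping the second from $1$ to $0$; and $(P)$ for $e^{*}$, $e \cdot f$ and $e + f$ calls $(P)$ for proper subexpressions. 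Granting well-foundedness, $(F)$ specialized to the given $\sim$ and to $\f \in \mathbb{F}$ is exactly the statement of the lemma.
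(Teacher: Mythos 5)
Your proof is correct: the paper omits the argument entirely, citing the lemma as a standard modal-logic result, and the standard argument is precisely the mutual induction you give (formula invariance coupled with safety of programs for bisimulation, as in the bisimulation-invariance proof for $\PDL$). Your lexicographic measure correctly resolves the only delicate point, namely the well-foundedness of the mutual recursion between the $(F)$-instance and the $(P)$-instance for the test operators.
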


\begin{lemma}\label{lem:bisim-2}
For each set of tests $\Gamma$, the relation $\sim$ between $\mathbb{C}(\Gamma)$ and $\mathbb{CS}(\Gamma)$ defined by
\[ G \sim H_1 \mathtt{a}_1 \ldots \mathtt{a}_{n-1} H_n \iff G = H_n\] is a bisimulation between $C(\Gamma)$ and $CS(\Gamma)$.
\end{lemma}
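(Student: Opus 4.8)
The plan is to verify directly the three clauses of Definition \ref{def:bisim} for the relation $\sim$, where $G \sim w$ abbreviates $G = \mathsf{last}(w)$. The only real preliminary work is to rewrite the successor relation of $CS(\Gamma)$ in a usable form. Unfolding $\mathsf{rel}_{CS(\Gamma)}(\mathtt{a}) = \mathsf{cay}(\llbracket \mathtt{a}\rrbracket_{\Gamma})$ via Definition \ref{def:cay} together with the canonical interpretation of $\mathtt{a}$, I would observe that $\langle w, w'\rangle \in \mathsf{rel}_{CS(\Gamma)}(\mathtt{a})$ holds exactly when $w' = w \diamond (G \mathtt{a} H)$ for some $G \mathtt{a} H \in \mathbb{CS}(\Gamma)$ with $G \mathtt{a} H \nequiv \mathtt{0}$. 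Since fusion product is defined only when the matching atoms agree, this forces $G = \mathsf{last}(w)$, so $w' = w\,\mathtt{a}\,H$ with $\mathsf{last}(w)\,\mathtt{a}\,H \nequiv \mathtt{0}$. In other words, the $\mathtt{a}$-successors of $w$ in $CS(\Gamma)$ are precisely the one-step extensions $w\,\mathtt{a}\,H$ whose final ``edge'' $\mathsf{last}(w)\,\mathtt{a}\,H$ is consistent --- which is exactly the condition $\langle \mathsf{last}(w), H\rangle \in \mathsf{rel}_{C(\Gamma)}(\mathtt{a})$ defining the edges of the canonical model.

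With this reformulation in hand the three clauses are immediate. For the atomic clause, assuming $G = \mathsf{last}(w)$, both $G \in \mathsf{sat}_{C(\Gamma)}(\mathtt{p})$ and $w \in \mathsf{sat}_{CS(\Gamma)}(\mathtt{p})$ unfold to $\mathsf{last}(w) \leqq \mathtt{p}$, so they coincide. For the ``forth'' clause, given $\langle G, G'\rangle \in \mathsf{rel}_{C(\Gamma)}(\mathtt{a})$, i.e.\ $G \mathtt{a} G' \nequiv \mathtt{0}$, I would take $w' = w\,\mathtt{a}\,G'$; the reformulation certifies $\langle w, w'\rangle \in \mathsf{rel}_{CS(\Gamma)}(\mathtt{a})$, and clearly $\mathsf{last}(w') = G'$, so $G' \sim w'$. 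For the ``back'' clause, given $\langle w, w'\rangle \in \mathsf{rel}_{CS(\Gamma)}(\mathtt{a})$, the reformulation yields $w' = w\,\mathtt{a}\,H$ with $\mathsf{last}(w)\,\mathtt{a}\,H \nequiv \mathtt{0}$; taking $G' = H = \mathsf{last}(w')$ gives $\langle G, G'\rangle \in \mathsf{rel}_{C(\Gamma)}(\mathtt{a})$ and $G' \sim w'$.

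I expect the only genuine obstacle to be the bookkeeping in the preliminary step: correctly chasing the definitions of $\mathsf{cay}$, $\llbracket \mathtt{a}\rrbracket_{\Gamma}$, and fusion product to see that it is exactly the partiality of $\diamond$ that pins the first atom of the appended edge to $\mathsf{last}(w)$. Once this is made explicit, the verification of the bisimulation clauses is mechanical, and the statement --- that $CS(\Gamma)$ is the unravelling of $C(\Gamma)$ with $\sim$ relating a path to its endpoint --- becomes transparent.
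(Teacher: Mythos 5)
Your proof is correct: the key observation that the partiality of $\diamond$ forces the appended edge $G\mathtt{a}H$ to satisfy $G = \mathsf{last}(w)$, so that $\mathtt{a}$-successors of $w$ in $CS(\Gamma)$ are exactly the extensions $w\,\mathtt{a}\,H$ with $\mathsf{last}(w)\,\mathtt{a}\,H \nequiv \mathtt{0}$, is precisely what makes all three clauses of Definition \ref{def:bisim} unwind mechanically. The paper states this lemma without proof, treating it as the routine verification that $CS(\Gamma)$ unravels $C(\Gamma)$; your write-up supplies exactly that verification.
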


\noindent
\textbf{Lemma \ref{lem:RC-2}.}
\textit{For all $\f \in \Gamma$, $\llbracket \f\rrbracket_M = \{ \langle w, w\rangle \mid \mathsf{last}(w) \leqq \f \}$.}
\begin{proof}
\begin{align*}
(CS(\Gamma), w) \vDash \f 
	& \iff (C(\Gamma), \mathsf{last}(w)) \vDash \f 
		&& \text{by Lemmas \ref{lem:bisim-1} and \ref{lem:bisim-2}}\\
	& \iff \mathsf{last}(w) \leqq \f
		&& \text{by Lemmas \ref{lem:LC-truth} and \ref{lem:pair}}
\end{align*}
\end{proof}

\end{document}